\crefname{hypothesis}{Hypothesis}{Hypotheses}
\title{Coded Computing for Fault-Tolerant Parallel QR Decomposition
\thanks{A preliminary version of this paper was presented at  IEEE International Symposium
on Information Theory 2020.}
% \thanks{Submitted to the editors DATE.
% \funding{This work was funded by the Fog Research Institute under contract no.~FRI-454.}}
}
\author{Quang Minh Nguyen \thanks{ Massachusetts Institute of Technology 
  (\email{nmquang@mit.edu}).}
\and Iain Weissburg  \thanks{ University of California, Santa Barbara
  (\email{ixw@ucsb.edu}).}
\and Haewon Jeong  \thanks{ University of California, Santa Barbara
  (\email{haewon@ucsb.edu}).}}
\newcommand*{\addFileDependency}[1]{% argument=file name and extension
  \typeout{(#1)}% latexmk will find this if $recorder=0 (however, in that case, it will ignore #1 if it is a .aux or .pdf file etc and it exists! if it doesn't exist, it will appear in the list of dependents regardless)
  \@addtofilelist{#1}% if you want it to appear in \listfiles, not really necessary and latexmk doesn't use this
  \IfFileExists{#1}{}{\typeout{No file #1.}}% latexmk will find this message if #1 doesn't exist (yet)
}
\newcommand*{\myexternaldocument}[1]{%
    \externaldocument{#1}%
    \addFileDependency{#1.tex}%
    \addFileDependency{#1.aux}%
}
\newcommand{\ceil}[1]{\lceil #1 \rceil}
\newcommand{\floor}[1]{\lfloor #1 \rfloor}
\newtheorem{construction}{Construction}
\newtheorem{rem}{Remark}
\newtheorem{ex}{Example}
\newtheorem{observation}{Observation}
\theoremstyle{definition}
\def\mb0{\mathbf{0}}
\def\mbg{\mathbf{g}}
\def\mbv{\mathbf{v}}
\def \tilG{\widetilde{G}}
\def \tilg{\widetilde{g}}
\def \bbR{\mathbb{R}}
\newcommand{\Br}{\mathbb{R}}
\newcommand{\specialcell}[2][c]{%
  \begin{tabular}[#1]{@{}c@{}}#2\end{tabular}}
\begin{document}

\maketitle

% REQUIRED
\begin{abstract}
  QR decomposition is an essential operation for solving linear equations and obtaining least-squares solutions. In high-performance computing systems, large-scale parallel QR decomposition often faces node faults. We address this issue by proposing a fault-tolerant algorithm that incorporates `coded computing' into the parallel Gram-Schmidt method, commonly used for QR decomposition.
Coded computing introduces error-correcting codes into computational processes to enhance resilience against intermediate failures. 
While traditional coding strategies cannot preserve the orthogonality of $Q$, recent work has proven a \emph{post-orthogonalization condition} that allows low-cost restoration of the degraded orthogonality. In this paper, we construct a checksum-generator matrix for multiple-node failures that satisfies the \emph{post-orthogonalization condition} and prove that our code satisfies the maximum-distance separable (MDS) property with high probability. 
Furthermore, we consider \emph{in-node checksum storage} setting where checksums are stored in original nodes. We obtain the minimal number of checksums required to be resilient to any $f$ failures under the in-node checksum storage, and also propose an in-node systematic MDS coding strategy that achieves the lower bound. Extensive experiments validate our theories and showcase the negligible overhead of our coded computing framework for fault-tolerant QR decomposition. 
\end{abstract}

% REQUIRED
\begin{keywords}
 High-performance computing, Matrix factorization, Coding theory, Algorithm-based fault-tolerance
\end{keywords}

% REQUIRED
\begin{AMS}
  15A23, 94B05, 65Y05
\end{AMS}

\section{Introduction}
Motivated by the widespread use of machine learning and scientific computing applications that require large-scale distributed computations, \emph{coded computing} has been a burgeoning area of research~\cite{lee2017speeding,dutta2016short, tandon2016gradient,yu2017polynomial,MatDotTransIT}. The aim of coded computing is to add systematically-designed redundancies in the computing algorithm to build fault resilience that is more efficient than replication or checkpointing. 
In this work, we consider protecting the QR decomposition algorithm from failures in high-performance computing (HPC) systems. Building a reliable supercomputer has been a long-standing problem, but the emergence of exascale computing poses new challenges that require a new and innovative solution that goes beyond traditional reliability techniques. More than 20\% of the computing capacity in today's HPC systems is wasted due to failures and ensuing recovery~\cite{Elnozahy:2008178}, and 
this wastage is only expected to grow as the system size grows. To reduce the overhead of fault tolerance in upcoming HPC systems, \emph{algorithm-based fault-tolerance (ABFT)} for HPC has been suggested \cite{BOSILCA2009410}, the core idea of which is essentially the same as coded computing: adding encoded redundancy tailored to a given numerical algorithm. 

QR decomposition factors a matrix into a product of an orthogonal matrix ($Q$) and an upper triangular matrix ($R$). It is an essential building block of linear algebraic computations as it provides a numerically stable method for solving linear equations and linear least squares problem (e.g. linear regression). 
As it is an important computation primitive, ABFT for parallel QR decomposition has been studied in the HPC literature~\cite{colchecksum, soft_error2, givens_rotations1}. While these existing works have shown that ABFT can be a more efficient resilience solution than periodic checkpointing, they proved that coding-based fault-tolerance can only be a half of the solution: while the $R$-factor can be protected with error-correcting codes (ECCs), the same does not apply to the $Q$-factor as encoding destroys the orthognal structure of the $Q$-factor~\cite[Theorem 5.1]{colchecksum}. Our prior work~\cite{qr_isit} introduced a novel strategy that can circumvent this fundamental theorem, by using a specialized code design that can restore the $Q$-factor's orthogonality with minimal overhead.However, this approach was limited to single-node failure scenarios, lacking a clear extension to multiple-node fault tolerance.

In this paper, we make critical contributions in generalizing the coded QR decomposition with low-cost post-orthogonalization to a wider variety of settings, as summarized below: 
\begin{itemize}
    \item 
    We propose a novel coding strategy for parallel QR decomposition that tolerates multiple-node failures with minimal overhead. Standard coding designs adequately protect the $R$-factor. However, protecting the $Q$-factor poses a challenge as its orthogonality is compromised by conventional linear coding~\cite{{colchecksum}}. We address this by devising a post-processing framework that restores $Q$'s orthogonality. Specifically, we derive a \emph{post-orthogonalization condition} which states that if a checksum-generator matrix follows  certain structure, there exists a sparse matrix $G_0$ that can linearly transform the coded $Q$ matrix into an orthogonal matrix. To this end, we design a semi-random checksum generator matrix satisfying the post-orthogonalization condition, while achieving the maximum-distance separable (MDS) property, i.e. code optimality, with high probability. The proposed algorithm can be applied to solving a full-rank square system of linear equations in HPC.
    
    \item We further consider the emerging setting of  \emph{in-node checksum storage} where  the coded data  is stored  in original systematic processors instead of being distributed over extra checksum processors for fault-tolerance. We prove a lower bound on the number of checksums required to be resilient to any $f$ failures and also demonstrate  coding schemes for both  $Q$-factor and $R$-factor protection that meet the  bound. The proposed in-node checksum coding strategy is  optimal, thereby improving over the results in \cite{colchecksum,qr_isit}. 
\end{itemize}

% study coded computing strategy for QR decomposition.

% However, in the previous literature, it was shown that we cannot protect the $Q$-factor by simply applying off-the-shelf error-correcting codes~\cite[Theorem 5.1]{colchecksum}. This is because linear encoding of the input matrix does not preserve the orthogonality of the $Q$-factor as a linear combination of orthogonal matrices is not guaranteed to be orthogonal. 
% In this paper, we propose a novel code design that allows us to recover the original $Q$-factor from the encoded output by performing a low-cost post-orthogonalization operation. In particular, we proved a \emph{post-orthogonalization condition} which states that if a checksum-generator matrix follows a certain structure, there exists a sparse matrix $G_0$ that can transform the coded $Q$ matrix into an orthogonal matrix. Furthermore, we show that a semi-random checksum generator matrix constructed to satisfy the post-orthogonalization condition has the MDS property with high probability. 

% A second contribution of the paper is improving the \emph{in-node checksum storage} result in \cite{colchecksum}. \emph{In-node checksum storage} setting means that checksums are stored in the original processors instead of in additional checksum processors. We prove a lower bound on the number of checksums required to be resilient to any $f$ failures and also demonstrate a coding scheme that meets the lower bound. The proposed in-node checksum coding strategy can be applied for $R$-factor protection. 

Finally, we note that this paper is an extension of our conference paper~\cite{qr_isit}. For both contributions, $Q$-factor protection codes and in-node checksum storage codes, we had a result for the single node failure case in \cite{qr_isit}. In this paper, we generalize those results to the multiple-node failure scenario. The results and constructions in \cite{qr_isit} can be regarded as a special case within the broader framework developed in this paper, tailored specifically for single-node failures.

\section{System Model and Problem Formulation}

We first describe our notations throughout the paper, and then present the system model and problem formulation within interests.

\textbf{Notations:}
We use upper-case letters to denote matrices and bold lower-case letters to denote vectors. For matrix $A$, we use $a_{i,j}$ to denote the $(i,j)$-th entry of the matrix. We use $I_m$ to denote the identity matrix of dimension $m \times m$, and write $I$ when the dimension is clear from the context.

% \noindent 
\textbf{Computation of interest:}
We consider solving a system of linear equations $A\bm{x}=\bm{b}$ through QR decomposition, where $A$ is a square and full-rank matrix of dimension $n \times n$. 
Solving square and non-singular system of linear equations is a fundamental building block for many applications in HPC~\cite{square_sys1, square_sys2, square_sys3}, and QR decomposition is a popular choice in practice due to its guaranteed  stability and computational efficiency~\cite{golub13}. 

% \noindent 
\textbf{Distributed computing model:} We assume that $A$ is a large matrix and the computation has to be distributed to multiple processors (nodes)\footnote{We use \emph{nodes} and \emph{processors} interchangeably in this paper.}. We assume that we have a total of $P$ nodes aligned as a $p_r \times p_c$ grid ($p_r p_c = P$).  We denote the $(i,j)$-th node on the grid as $\Pi(i,j)$. We consider \emph{2D block  distribution} of data where the matrix  $A$ is split into $p_r \times p_c$ blocks and each block $A_{ij}$ ($i\in [0, p_r-1], j\in [0,p_c-1]$) is distributed to processor $\Pi(i,j)$. The $j^{th}$ block-column is the "column" of data blocks $A_{0j}, A_{1j}, ..., A_{(p_r-1)j}$. Block-rows are defined similarly.  For simplicity, we assume that $p_r = p_c$ and consequently set the dimension of each block $A_{i,j}$ to be $b = \frac{n}{p_r} = \frac{n}{p_c}$. Note that $n \gg P$, and thus $n$ is much greater than $p_r$ or $p_c$. 

% \noindent 
\textbf{Failure model:} We assume that nodes are subject to failures, especially to \emph{``fail-stop errors.''} In this model, a failure corresponds to a node that completely stops responding, and loses all of its data, including its inputs and computation results. This is a realistic failure model in HPC~\cite{fail_stop1} and also similar to the \emph{``straggler model''} commonly used in the coded computing literature. 
At any iteration of the QR decomposition, we assume that at most $f$ failures can occur per column or per row, and that the identities of the processors that fail are provided by some external source (e.g. Message Passing Interface (MPI) library~\cite{MPI_detect_fault1}). Further, we assume that $f$ is  small  compared to $p_r, p_c$. Specifically, in our proposed scheme, we require $f \leq \frac{1}{2}\min\{p_r, p_c\}$.

% \noindent
\textbf{Checksums:}  In this work, we use checksums to recover from failures. Before we perform QR factorization, we encode the $n \times n$  matrix $A$ is as:
\begin{align}
\label{eq:encoding}
\widetilde{A} = \begin{bmatrix}
	 A & A G_h\\
	 G_v A & G_v A G_h
\end{bmatrix},
\end{align}
where $\widetilde{A}$ is of dimension $(n+c) \times (n+d)$ and $G_h$ and $G_v$ are checksum-generator matrices of size $n \times d$ and $c \times n$. We call $ G_v A$ \emph{``vertical checksums''} or \emph{``checksum rows''} and $ A G_h$
\emph{``horizontal checksums''} or \emph{``checksum columns''}. 
 After computation, vertical checksums protect $Q$ and horizontal checksums protect $R$~\cite[Lemma 1 and Corollary 1.1]{qr_isit}. 
We assume that $c$ and $d$ are small to keep the  overhead negligible.

Conventionally, to store the checksum part, we introduce additional nodes called \emph{``checksum nodes.''}  We call this setting \emph{out-of-node checksum storage}. 
In this setting, the original matrix $A$ is stored in \emph{``systematic nodes''} and all the checksum entries are stored in the \emph{checksum nodes.} 
Assuming that we use the same encoding coefficients at each node for the entire data block, $G_v$ and $G_h$ can be represented as: 
\begin{align}
    G_v  = \widetilde{G}_v \otimes I_{n/p_r}, \label{tildeG_eq} \;\; G_h  = \widetilde{G}_h \otimes I_{n/p_c},
\end{align}
where $\otimes$ is a Kronecker product, $\widetilde{G}_v$ and $\widetilde{G}_h$ are respectively $m_r \times p_r$  and $p_c \times m_c$ matrices, and  $m_r$ and $ m_c$ are respectively the number of checksum nodes per row and column.

In this paper, 
we will also consider an \emph{in-node checksum storage} setting, where one stores the checksums within the original set of nodes instead of introducing a separate set of checksum nodes. This will increase the size of the computing task per node, but it is better suited for a scenario where a user has a fixed number of available nodes.

\subsection{Problem Statement}
Our goal is to develop a coding scheme that can add effective checksums for parallel QR decomposition, which can protect both Q-factor and R-factor from any $f$ fail-stop errors ($f \geq 1$). This builds up on our previous work~\cite{qr_isit} that provided a coding scheme for a single fail-stop error ($f=1$) and generalizes to multiple errors and applies to both out-of-node and in-node checksum storage settings (see Table~\ref{table_of_summary}). We will provide  theoretical and experimental evidences to demonstrate the small overhead of the proposed coding scheme.

\begin{table}[t]
\small
    \begin{tabular}{c c c c c c c}
        \toprule
       \specialcell{Coding\\Scheme}   & QR Algorithm & $Q$-factor  & $R$-factor  & {$f > 1$} & Out-of-node & In-node\\
        \hline
       \cite{colchecksum }& Householder    & \ding{55} & \ding{51} & \ding{55} & \ding{55} &\ding{51}\\
       \cite{558063 }& Householder    & \ding{55} & \ding{51} & \ding{55} & \ding{51} &\ding{55}\\
       \cite{givens_rotations1} & Givens Rotation    & \ding{55} & \ding{51} & \ding{55} & \ding{51} &\ding{55}\\
       \cite{qr_isit} & Gram-Schmidt   & \ding{51} & \ding{51} & \ding{55}  & \ding{51} & \ding{51}\\
        This work & Gram-Schmidt   & \ding{51} & \ding{51} & \ding{51}  & \ding{51} & \ding{51}\\
        \bottomrule
        \end{tabular}
    \caption{Summary of work that utilize checksum-protection for QR decomposition. Those work \cite{colchecksum, 558063, givens_rotations1} without coding scheme for $Q$-factor protection rely on costly checkpointing for fault-tolerance.}
    \label{table_of_summary}
\end{table}

% The coding scheme is comprised of the encoding phase \eqref{R_checksum_preserving_equation}, the  recovery process that can tolerate at most $f$ fail-stop errors per column or row during the computation, and the decoding phase to retrieve the desired uncoded output. Finally, we provide theoretical guarantee on the robustness of our framework, and characterize its negligible overhead.

\section{Preliminaries and Related Work}
The core part of this work is incorporating error-correcting codes into a parallel QR decomposition algorithm for efficient checksum generation and failure recovery. For the QR factorization algorithm, we use parallel block modified Gram-Schmidt (PBMGS) algorithm. For error-correcting codes, we adapt the well-known maximum distance separable (MDS) codes. In this section, we describe essential preliminaries for these.

\begin{algorithm}[t]
% \small
% \begin{algorithmic}[1]
% \lstset{language=Pascal} 
% \lstset{numbers=left, numberstyle=\tiny, stepnumber=1, numbersep=5pt}
\caption{Modified Gram-Schmidt}
\SetAlgoLined
\label{alg: sequential_mgs}
\KwIn{$\widetilde{A} = \begin{bmatrix} \bm{\widetilde{a}_1} & \bm{\widetilde{a}_2} & \cdots & \bm{\widetilde{a}_{n+d}} \end{bmatrix}$ is $(n+c) \times (n+d)$ matrix}
\KwOut{$Q = \begin{bmatrix}\bm{q_1} & \bm{q_2} & \dots & \bm{q_{n+d}}\end{bmatrix}$ is $(n+c) \times (n+d)$  matrix  , and $R = (r_{ij})$ is $(n+d) \times (n+d)$  matrix}
\KwResult{$\widetilde{A} = QR$, where $Q^T Q = I$}
\For{$i=1$ \KwTo $n$ }{ 
    $\bm{u_i} = \bm{\widetilde{a}_i}$
    }
\For{$i=1$ \KwTo $n+d$ }{ 
   $r_{ii} = || \bm{u_i}||_2$ 
   
   $\bm{q_i} = \bm{u_i}/r_{ii} $
   
   \For{$j=i+1$ \KwTo $n+d$ }{ 
   	$r_{ij} = \bm{q_i}^T \bm{u_j}$ 
    
   	$\bm{u_{j}} = \bm{u_{j}} - r_{ij}\bm{q_i}$
   }
}
% \end{algorithmic}
\end{algorithm}

\subsection{Parallel block modified Gram-Schmidt (PBMGS) algorithm}
\label{sec:pbmgs}
There are three  classes of commonly used algorithms for QR decomposition: Gram-Schmidt (GS)~\cite{GS1, GS2, GS3}, Householder Transformation~\cite{blockhouseholder1, TSQR_Householder}, and Givens Rotation~\cite{givens_rotation2}. In this work, we consider parallel block modified Gram-Schmidt (PBMGS), which performs QR factorization on a grid of distributed nodes. For extreme-scale QR factorization, PBMGS is a preferred choice as it has a low computational cost and is easy to implement. 
% The algorithmic description of PBMGS and its overhead analysis are presented in Appendix \ref{pbmgs_appen}. 
In the Gram-Schmidt algorithm, we iteratively obtain a set of orthonormal vectors by subtracting the projections of each column onto the  space spanned by orthogonal vectors computed in the previous iterations. Modified Gram-Schmidt (MGS) performs this operation in a slightly different order to achieve better numerical stability (see Algorithm~\ref{alg: sequential_mgs}). Block MGS performs the same operation in column-blocks instead of column by column. How this can be parallelized in the PBMGS algorithm is depicted in Fig.~\ref{fig:pbmgs} (see Appendix~\ref{pbmgs_appen} for the full algorithm and its overhead analysis).

\begin{figure}[t]
    \centering
    \includegraphics[width=\textwidth]{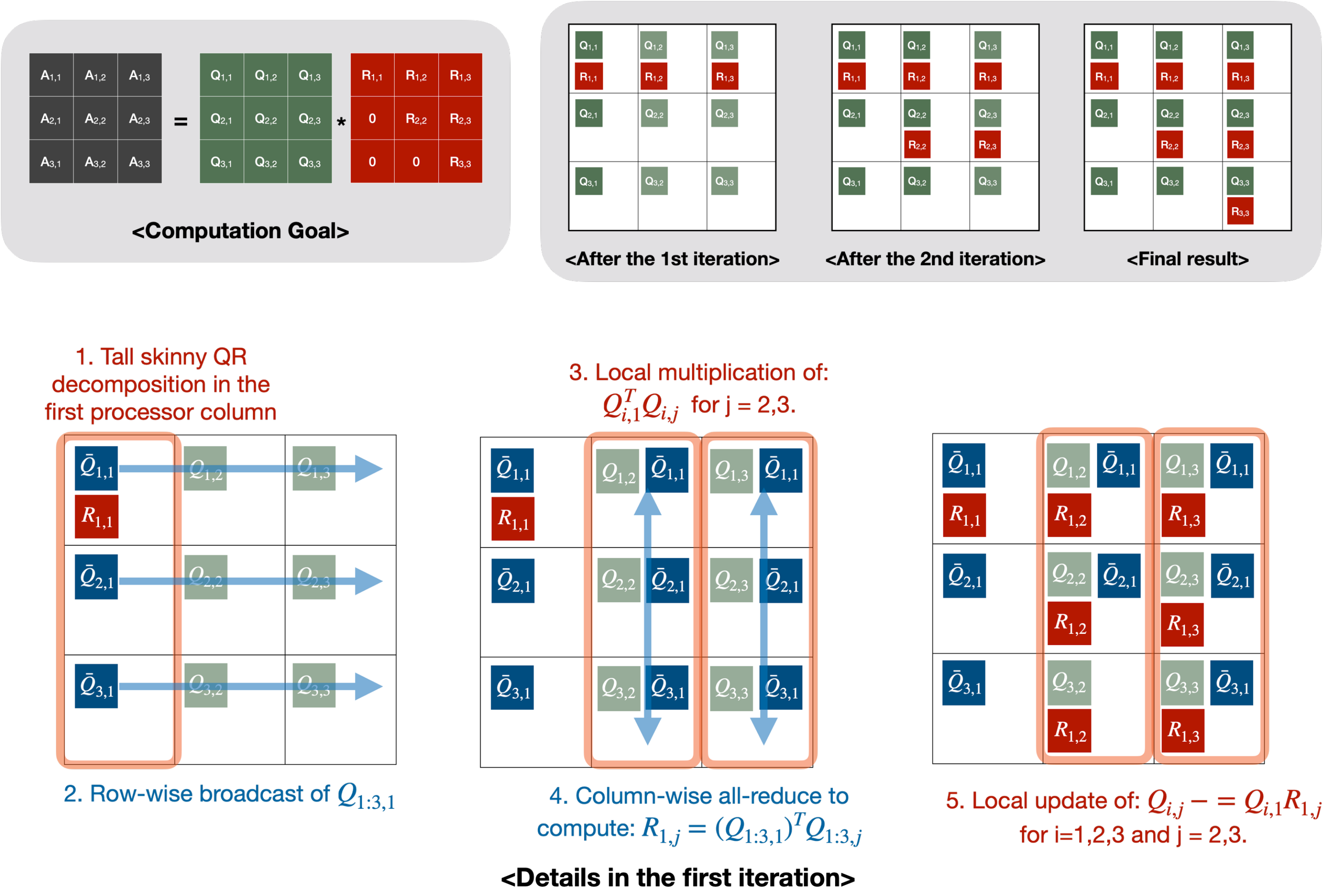}
    \caption{The parallel block modifed Gram-Schmidt (PBMGS) algorithm on a 3x3 grid.}
    \label{fig:pbmgs}
\end{figure}

\subsection{Maximum distance separable (MDS) codes}
What is the minimal number of checksum nodes we need to recover the data when $f$ nodes fail? This is a classic question in the coding theory literature. The information-theoretic optimum is $f$, given by the Singleton bound~\cite{joshi1958note}. MDS codes are the class of error-correcting codes that meet this bound~\cite{macwilliams1977theory}. MDS codes are linear block codes, meaning that each checksum is a linear combination of original entries and checksum generation can be represented with matrices as we did in \eqref{eq:encoding}. For codes to have the MDS property, checksum generator matrices should satisfy certain rank conditions. For our problem setup, these conditions can be translated as below: 
\begin{theorem}[Modified Theorem 8, p. 321 in \cite{macwilliams1977theory}]
$\widetilde{G}_v$ (resp. $\widetilde{G}_h$) has the MDS property if and only if every square submatrix of $\widetilde{G}_v$ (resp. $\widetilde{G}_h$) is full-rank.  
\end{theorem}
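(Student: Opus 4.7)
The plan is to reduce the MDS property to an invertibility condition on $p_r \times p_r$ submatrices of the systematic generator $\begin{bmatrix} I_{p_r} \\ \widetilde{G}_v \end{bmatrix}$, and then translate that condition into one on square submatrices of $\widetilde{G}_v$ alone. Since the encoding in \eqref{eq:encoding}--\eqref{tildeG_eq} acts block-column by block-column through the Kronecker factor $\widetilde{G}_v \otimes I_{n/p_r}$, it suffices to analyze a single block-column: each such column is encoded into a codeword of length $p_r + m_r$ and dimension $p_r$ via the tall generator $G := \begin{bmatrix} I_{p_r} \\ \widetilde{G}_v \end{bmatrix}$. The standard Singleton-bound characterization then says that this code is MDS iff every $m_r$-erasure pattern is correctable, iff every $p_r \times p_r$ submatrix formed by selecting $p_r$ of the $p_r + m_r$ rows of $G$ is invertible.

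Next, parameterize any such row selection by the complementary erasure pattern. Suppose an erasure pattern loses $e$ systematic positions indexed by $S \subseteq [p_r]$ with $|S| = e$ and $m_r - e$ checksum positions, leaving $e$ checksum rows indexed by $T \subseteq [m_r]$ with $|T| = e$, for some $0 \le e \le \min(p_r, m_r)$. The retained $p_r \times p_r$ matrix is $\begin{bmatrix} I_{p_r}[S^c, :] \\ \widetilde{G}_v[T, :] \end{bmatrix}$. Permuting the columns so that $S^c$ precedes $S$ puts it into block-triangular form
\[
\begin{bmatrix} I_{p_r - e} & 0 \\ \widetilde{G}_v[T, S^c] & \widetilde{G}_v[T, S] \end{bmatrix},
\]
whose determinant equals $\pm \det \widetilde{G}_v[T, S]$. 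Hence invertibility of the $p_r \times p_r$ retained matrix is equivalent to invertibility of the $e \times e$ submatrix $\widetilde{G}_v[T, S]$.

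Finally, as $(e, S, T)$ ranges over all triples with $|S| = |T| = e$ and $1 \le e \le \min(p_r, m_r)$, every square submatrix of $\widetilde{G}_v$ arises as some $\widetilde{G}_v[T, S]$, so the MDS condition (every $p_r \times p_r$ selection is invertible) is logically equivalent to every square submatrix of $\widetilde{G}_v$ being full-rank. The statement for $\widetilde{G}_h$ follows by the same argument applied row-wise instead of column-wise, using the systematic generator $\begin{bmatrix} I_{p_c} & \widetilde{G}_h \end{bmatrix}$. There is no serious obstacle; the only care required is bookkeeping the row/column index sets through the column permutation and observing that as $(S, T)$ varies freely the selected submatrices exhaust all square submatrices of $\widetilde{G}_v$.
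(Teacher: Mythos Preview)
Your argument is correct and is essentially the standard proof of this classical fact. Note, however, that the paper does not supply its own proof of this statement: it is quoted as a background result from \cite{macwilliams1977theory} and left unproven in the text, so there is no paper-side argument to compare against. Your reduction via the systematic generator $\begin{bmatrix} I_{p_r} \\ \widetilde{G}_v \end{bmatrix}$, the block-triangular determinant computation after permuting columns, and the observation that the resulting $e\times e$ blocks $\widetilde{G}_v[T,S]$ exhaust all square submatrices of $\widetilde{G}_v$ are exactly the standard route and are fine as written.
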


\subsection{Checksum preservation} \label{subsec:checksum_preserv}
We now review the desirable checksum preservation property of the PBMGS algorithm. When we perform QR decomposition on the encoded matrix $\widetilde{A}$, the PBMGS algorithm updates the $Q$ and $R$ factors in 
$T$ iterations. At the end of each iteration the algorithm maintains the intermediate $Q$-factor $Q^{(t)} $ of size $(n+c) \times (n+d)$ and $R$-factor $R^{(t)}$ of size $(n+d) \times (n+d)$. 
The initial values of the $Q$-factor and $R$-factor are set to be $Q^{(0)}=\widetilde{A}$ and $R^{(0)}=0_{(n+d) \times (n+d)}$. After the $T$-th iteration, we retrieve the orthogonal $Q=Q^{(T)}$ and the upper triangular $R=R^{(T)}$ as the output for the factorization $\widetilde{A} = QR$. 
We further consider the following submatrices of $Q^{(t)}$ and $ R^{(t)}$:
\begin{align*}
    &Q^{(t)} = \begin{bmatrix} Q_1^{(t)} \\ Q_2^{(t)} \end{bmatrix}, R^{(t)} = \begin{bmatrix} R_1^{(t)} & R_2^{(t)} \end{bmatrix}
\end{align*}
where $ Q_1^{(t)} $: $n \times (n+d)$,  $Q_2^{(t)} $: $c \times (n+d)$, $ R_1^{(t)} $: $(n+d) \times n$, and  $R_2^{(t)} $: $(n+d) \times d$. 
Lemma \ref{lem:2way_checksum} shows that the original vertical checksums $G_v A$ and horizontal checksums $A G_h$ are preserved after each iteration as $Q_2^{(t)}=G_v Q_1^{(t)}$ for $Q$-protection and $ R_2^{(t)}=R_1^{(t)} G_h$ for $R$-protection.
% The proofs for Lemma \ref{lem:2way_checksum} and Corollary \ref{corol: final_form} can be found in ~\cite[Appendix \ref{appendixA:checksum_preserving}]{full_version}. 

\begin{lemma}
\label{lem:2way_checksum}
For both sequential and parallel MGS, the following checksum relations hold for $t \in [0, T]$: 
\begin{align}
Q_2^{(t)}= G_v Q_1^{(t)}, \label{eq:2way_checksum1}\\
R_2^{(t)}=  R_1^{(t)} G_h. \label{eq:2way_checksum2}
\end{align}
\end{lemma}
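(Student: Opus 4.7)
The plan is to prove both identities simultaneously by induction on the iteration index $t$, augmented with an auxiliary column-level invariant that drives the $R$-side step. In the base case $t=0$, $Q^{(0)}=\widetilde{A}$ and $R^{(0)}=0$, so \eqref{eq:2way_checksum1} follows immediately from the block form of $\widetilde{A}$ in \eqref{eq:encoding} while \eqref{eq:2way_checksum2} is trivial.

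The inductive step for \eqref{eq:2way_checksum1} is essentially mechanical: iteration $t$ only rescales column $t$ by the scalar $r_{tt}$ and, for each $j>t$, subtracts $r_{tj}\bm{q}_t$ from column $j$. Both operations act identically on the top $n$ and bottom $c$ entries of the affected column, and left-multiplication by $G_v$ commutes with scalar rescaling and with column subtraction, so the relation $Q_2^{(t-1)}=G_v Q_1^{(t-1)}$ carries over to $Q_2^{(t)} = G_v Q_1^{(t)}$.

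For \eqref{eq:2way_checksum2} the key is to strengthen the induction with the auxiliary invariant
\[ \bm{u}_{n+k}^{(t)} = \sum_{j=t+1}^{n} (G_h)_{j,k}\, \bm{u}_j^{(t)}, \qquad k\in[1,d], \]
where $\bm{u}_j^{(t)}$ denotes column $j$ of $Q^{(t)}$. Intuitively, this says that the horizontal linear dependency encoded by $G_h$ survives MGS in a progressively truncated form, because the components of $\bm{u}_{n+k}^{(t)}$ along the already-finalized $\bm{q}_1,\dots,\bm{q}_t$ have been absorbed into the corresponding entries of $R$. Given the auxiliary invariant at $t-1$, taking the inner product with $\bm{q}_t$ immediately yields row $t$ of \eqref{eq:2way_checksum2} (terms with $j<t$ vanish on both sides because $r_{t,j}=0$), and then a direct substitution of the MGS update rule $\bm{u}_j^{(t)} = \bm{u}_j^{(t-1)} - r_{tj}\bm{q}_t$ into the right-hand sum restores the auxiliary invariant at step $t$. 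Since rows above $t$ are not touched by iteration $t$ and rows below remain zero, the full $R$-identity is preserved.

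The main obstacle is that, unlike the $Q$-identity, the $R$-identity is not maintained by any purely column-wise symmetry of $\widetilde{A}$: the columns involved are normalized at different iterations, so a static linear dependency among the columns of $Q^{(t)}$ cannot simply be propagated. Identifying the correct shrinking invariant with summation range $j=t+1,\dots,n$ is what closes the induction. For PBMGS, the block updates perform exactly the same linear operations on exactly the same columns, merely batched across the grid, so all three invariants are preserved and the argument transfers verbatim.
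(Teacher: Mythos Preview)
Your argument is correct. For the $Q$-identity your treatment coincides with the paper's: both run an induction on $t$ and use that scalar rescaling and column subtraction commute with left-multiplication by $G_v$. For the $R$-identity, however, you take a genuinely different route. The paper establishes the row identity by first proving the observation $\bm r_t^{(t)}=\bm q_t^{T}\widetilde A$ (its Observation~\ref{seqMGS_preserve}), which it justifies using the pairwise orthogonality $\bm q_t^{T}\bm q_j=0$ for $j<t$; once this is in hand, the block structure $\widetilde A=\bigl[\,A'\ \ A'G_h\,\bigr]$ with $A'=\bigl[\begin{smallmatrix}A\\G_vA\end{smallmatrix}\bigr]$ gives $\bm r_{t2}^{(t)}=\bm r_{t1}^{(t)}G_h$ immediately. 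Your approach instead carries the shrinking column dependency $\bm u_{n+k}^{(t)}=\sum_{j=t+1}^{n}(G_h)_{j,k}\bm u_j^{(t)}$ through the induction and reads off the $R$-row by dotting with $\bm q_t$. What your route buys is self-containment: you never appeal to orthogonality of the output columns, only to the raw update rules, so the $R$-invariant is proved on its own terms. What the paper's route buys is brevity once one is willing to invoke $\bm q_t\perp\bm q_j$: a single line replaces the auxiliary invariant. A small side benefit of your formulation is that it makes explicit why the checksum columns degenerate after step $n$ (the sum becomes empty, so $\bm u_{n+k}^{(n)}=0$), a point the paper does not isolate. For PBMGS both arguments transfer in the same way, since the block version performs the identical linear column operations.
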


\begin{corollary}
\label{corol: final_form}
At the end of the QR decomposition, the factorization of $\widetilde{A}$ has the final form:
\begin{align}
\label{reduced_QR}
\widetilde{A} = \begin{bmatrix}
Q_1 \\
Q_2
\end{bmatrix} \begin{bmatrix}
R_1 &
R_2
\end{bmatrix}
=
\begin{bmatrix}
Q_1 \\
G_v Q_1
\end{bmatrix} \begin{bmatrix}
R_1 &
R_1 G_h
\end{bmatrix}
\end{align}
where $Q_i=Q_i^{(T)}$ and  $R_i=R_i^{(T)}$.
\end{corollary}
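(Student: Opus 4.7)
The plan is to apply Lemma \ref{lem:2way_checksum} at the final iteration $t=T$ and combine it with the partitioning of the output factors of PBMGS. By the setup of the algorithm, after $T$ iterations we obtain the factorization $\widetilde{A} = QR$ with $Q = Q^{(T)}$ orthogonal and $R = R^{(T)}$ upper triangular.

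First, I would write the block decomposition explicitly. Using the stated partitioning and the naming convention $Q_i := Q_i^{(T)}$, $R_i := R_i^{(T)}$, the output factors decompose as $Q = \begin{bmatrix} Q_1 \\ Q_2 \end{bmatrix}$ and $R = \begin{bmatrix} R_1 & R_2 \end{bmatrix}$. Substituting these into $\widetilde{A} = QR$ immediately yields the first equality in the corollary statement, which is just a rewriting of the usual QR product in block form.

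Next, I would invoke Lemma \ref{lem:2way_checksum} at $t = T$. Relation \eqref{eq:2way_checksum1} gives $Q_2 = G_v Q_1$, and relation \eqref{eq:2way_checksum2} gives $R_2 = R_1 G_h$. Substituting these two identities into the block decomposition from the previous step produces the second equality, completing the argument.

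There is essentially no obstacle here; the corollary is a direct specialization of the checksum-preservation invariant to the terminal iteration, so the whole proof reduces to two substitutions. The substantive content lies in Lemma \ref{lem:2way_checksum} itself, which would need to be established separately by induction on $t$ across the PBMGS iterations (tracking how each normalization step $\bm{q}_i = \bm{u}_i/r_{ii}$ and each projection subtraction $\bm{u}_j \gets \bm{u}_j - r_{ij}\bm{q}_i$ preserves the linear relations $Q_2^{(t)} = G_v Q_1^{(t)}$ and $R_2^{(t)} = R_1^{(t)} G_h$). Once that inductive fact is in hand, the corollary is a one-line consequence.
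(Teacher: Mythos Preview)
Your proposal is correct and matches the paper's own proof essentially line for line: the paper also simply invokes Lemma~\ref{lem:2way_checksum} at $t=T$ to obtain $Q_2 = G_v Q_1$ and $R_2 = R_1 G_h$, then substitutes into the block decomposition $\widetilde{A} = QR$. There is nothing to add.
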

The proofs of Lemma \ref{lem:2way_checksum} and Corollary \ref{corol: final_form} can be found in Appendix \ref{appendixA:checksum_preserving}.

\begin{comment}
\begin{proof}[Intuitive Proof:]
We present an intuitive proof for Equation (\ref{eq:2way_checksum}). From Algorithm \ref{alg: sequential_mgs}, we observe that each column $q_i$ of the $Q$-factor is on updated on line 6 by $q_i = u_i/r_{ii}$, so if the checksum relation holds for $u_i$, it also holds for $q_i$, i.e. for $q_i = \begin{bmatrix}q_{i1} \\ q_{i2}\end{bmatrix}$, we have $q_{i2} = G_v q_{i1}$. We can thus use induction on the iteration $t=0, \ldots, n+d$ to prove that the checksum relation holds for all $u_i$'s and $q_i$'s. The inductive step for $q_i$ is straight-forward. For $u_i$, it is only updated on line 9 by $u_j = u_j - r_{ij} q_i$. By the inductive hypothesis, we know that the checksum relation holds for $u_j$ in the previous  iteration and consequently $q_i$ in the current iteration, so we conclude the checksum relation holds for $u_j$ now. 
\end{proof}

\begin{corollary}[Corollary 1.1, \cite{qr_isit}]
\label{corol: final_form}
At the end of the QR decomposition, the factorization of $\widetilde{A}$ has the final form:
\begin{align}
\label{reduced_QR}
\widetilde{A} = \begin{bmatrix}
Q_1 \\
Q_2
\end{bmatrix} \begin{bmatrix}
R_1 &
R_2
\end{bmatrix}
=
\begin{bmatrix}
Q_1 \\
G_v Q_1
\end{bmatrix} \begin{bmatrix}
R_1 &
R_1 G_h
\end{bmatrix}
\end{align}
where $Q_i=Q_i^{(T)}$ and  $R_i=R_i^{(T)}$.
\end{corollary}
\end{comment} 

\begin{rem}
All the prevalent QR decomposition methods---Householder \cite{colchecksum, soft_error2}, Givens Rotations \cite{givens_rotations1}, and MGS---preserve the $R$-factor checksums. However, only MGS permits $Q$-factor checksum-preservation (i.e., satisfies \eqref{eq:2way_checksum1}) since it uses linear projection of the input matrix's column vectors for orthogonalization. On the other hand, Householder and Givens Rotations rely on non-linear operations such as reflections and rotations, which do not preserve the linearity of $Q$-factor checksums.
% Other prevalent QR decomposition methods including Householder \cite{colchecksum, soft_error2} and Givens Rotations \cite{givens_rotations1} do not possess $Q$-factor checksum-preservation, since their $Q$ matrix computation depends non-linearly on the column vectors of the input matrix $\widetilde{A}$. 

% While the R-factor checksum-preservation holds for Householder \cite{colchecksum, soft_error2}, Givens Rotations \cite{givens_rotations1} 

%     Remark after Lemma 1 explaining why checksum-preservation would not hold for other algorithms. 
\end{rem}

\subsection{Related work}
Algorithm-based fault tolerance (ABFT) has been proposed to alleviate the high overhead of checkpoint-restart (C/R) schemes in exascale high-performance applications~\cite{Chen:2006e6d, Bosilca:2008a57, Yao:20157d7, colchecksum, soft_error1, soft_error2, old_soft_error_1, old_soft_error_2, old_soft_error_3, LUchecksum_protected, givens_rotations1}. 
In the information theory community, a similar idea called \emph{coded computing} has been extensively studied, with more focus on renovating the code design under a theoretical computing model~\cite{yu2017polynomial,MatDotITTrans,dutta2016short,GC2,GC3,reisizadeh2017coded,jeong2018masterless,yu2017coded,Virtualization,ferdinand2016anytime,ferdinand2018hierarchical,mallick2018rateless,wang2018coded,quangQR,severinson2018block,NewsletterPaper}.
There exist several works on ABFT for QR decomposition~\cite{colchecksum, soft_error2,givens_rotations1, 558063}, and they are summarized in Table~\ref{table_of_summary}.
 For the out-of-node checksum storage setting, in \cite{558063, soft_error2}, the authors develop ABFT schemes for the Householder algorithm and in \cite{givens_rotations1}, they study ABFT for the Givens Rotation algorithm. The ABFT scheme for Householder is further adapted to the in-node checksum storage setting in \cite{colchecksum}. 
 However, all of them only protect the $R$ factor with checksums and rely on replication or checkpointing for the $Q$-factor. 
 In our prior work~\cite{qr_isit}, we proposed a coded MGS algorithm that can provide full protection for both $R$ and $Q$ factors when \emph{there is only one failure during computation ($f=1$)}. This work extends \cite{qr_isit} and provide a generalized scheme that can have resilience  for multiple failures.

\begin{figure}[t]
    \centering
    \includegraphics[width=\textwidth]{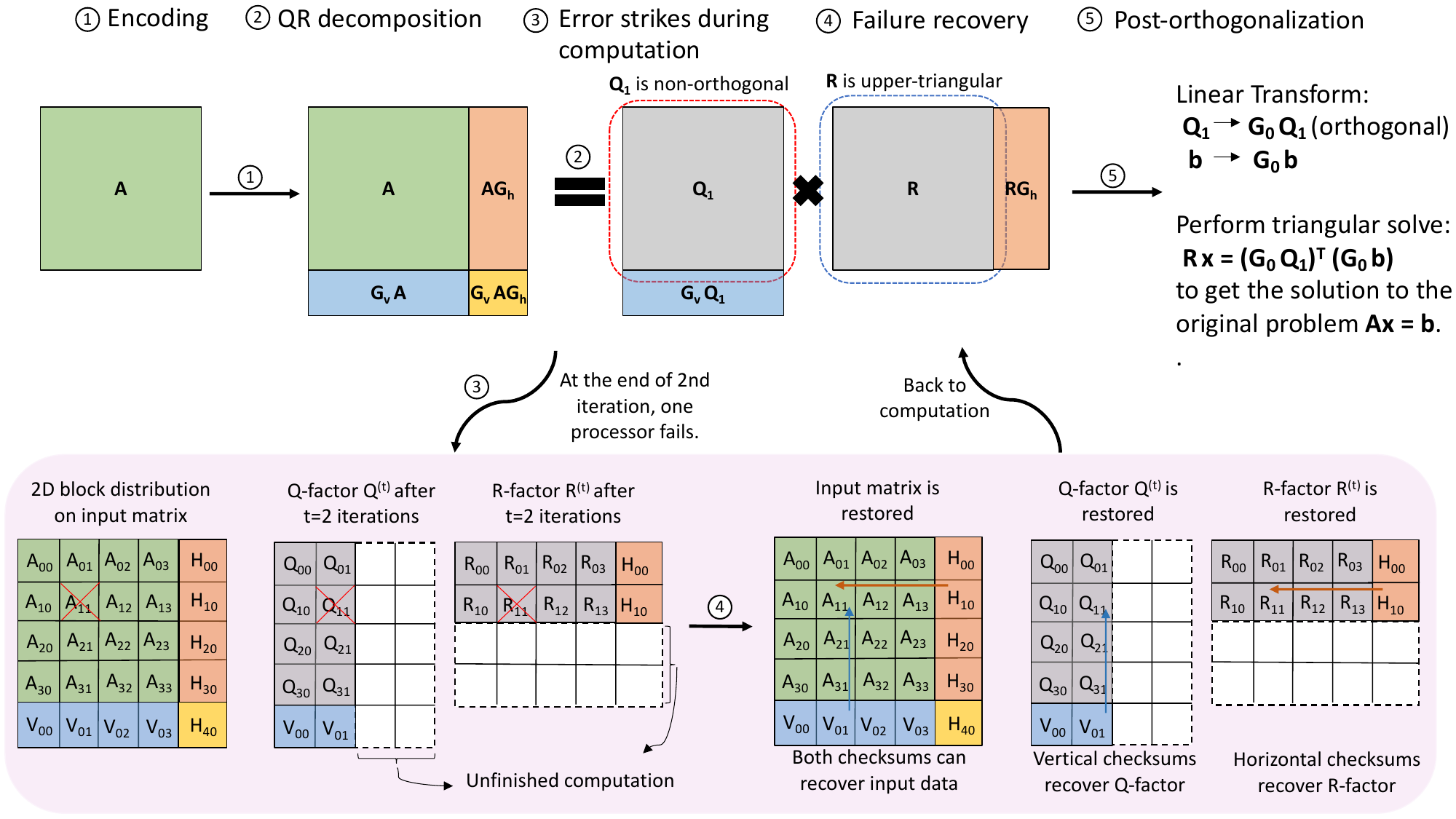}
    \caption{Coded QR decomposition framework.}
    \label{fig:qr_overview}
\end{figure}

\subsection{Overview of the Proposed Coded QR Decomposition Scheme}
We present the overall framework of coded QR decomposition in Figure \ref{fig:qr_overview}. It comprises of encoding, (coded) QR factorization, failure recovery (decoding), and post-orthogonalization steps.  
In the encoding phase, the input matrix $A$ is encoded  into $\widetilde{A}$ as given in \eqref{eq:encoding}. We then perform QR decomposition of the encoded matrix $\widetilde{A}$. If a node fails during the computation, we use vertical checksums to recover the intermediate $Q$-factor from the failed node and horizontal checksums to recover the $R$-factor. The original input matrix $A$ can be restored by either type of checksums. At the end of the computation, we retrieve $A = Q_1 R$. As we will show in Section~\ref{sec:q_factor_protection}, $Q_1$ is not orthogonal, and hence we deploy the post-orthogonalization technique which transforms $Q_1$ into $G_0 Q_1$ and $\mathbf{b}$ into $G_0 \mathbf{b}$, which are a very low-cost operations. After post-orthogonalization, we can solve the original problem: $A\mathbf{x} = \mathbf{b}$. 

% Central to

% \textcolor{red}{Perhaps table of comparision here?}

% \section{System Model and Problem Formulation}
% \input{problem_formulation}

\section{Checksum Generation for $R$-factor protection} 
\label{in_node}
% \textcolor{red}{Discussion on R-factor protection for out-of-node checksum storage and how people have done analysis.}

% \textcolor{red}{
In this section, we show how coded computing for $R$-factor protection can be extended from the out-of-node checksum storage setting to the in-node checksum setting. For the out-of-node setting, the coding process is straightforward. To protect from node failures for the $R$ factor, we can do row-wise encoding and add horizontal checksums $R G_h$ to checksum nodes as depicted in Fig.~\ref{fig:qr_overview}. Following Lemma~\ref{lem:2way_checksum}, any linear coding schemes can preserve the checksums of $R$ over the iterations in the PBMGS algorithm: $R_2^{(t)}=  R_1^{(t)} G_h$. Furthermore, $R_1^{(t)}$ still maintains the upper-triangular shape. Thus, there is no need for further processing if there is no failure. For optimal redundancy, we can use the checksum generator matrix of any MDS codes (e.g., Reed Solomon codes). We also refer to \cite{colchecksum,558063} for more details of out-of-node $R$ protection. 

The challenge we address in this paper is to extend the out-of-node checksum storage to the in-node checksum storage setting. 
In the coded computing literature, a common assumption is out-of-node checksum storage, which means that we add ``coded'' nodes that contain the encoded inputs. 
This follows the framework in coding theory where we have systematic symbols and parity-check symbols. However, in the HPC literature considers the in-node checksum storage setting for practical reasons. Many HPC algorithms are designed to perform optimally when the number of nodes is a power-of-two. If a fault-tolerant strategy requires changing the number of nodes, it can be a less appealing solution to practitioners. While practically appealing, in-node checksum storage imposes challenges on the design of checksums. Specifically, since the failure of a node now corresponds to the loss of not only input data, i.e. systematic symbols, but also checksums, i.e. parity-check symbols, the setting of which differs from traditional coding theory, thereby hindering direct adoption of existing MDS coding schemes. To this end, we dedicate this Section to derive a new  coding scheme for $R$-factor protection under the in-node checksum storage, and further show that it achieves the optimal amount of required redundancies. 
% }

In \cite{qr_isit}, we proved a lower bound on the number of checksum blocks and a scheme that achieves the lower bound for single failure recovery. This improved the existing strategy for in-node checksum storage ~\cite{colchecksum} by $\sim2$x. In this paper, we generalize this result for any number of failures $f \geq 1$. We first show a lower bound on the number of checksum blocks $K$ and then propose \emph{in-node systematic MDS coding} scheme that achieves the lower bound. 

Throughout this Section, we assume that nodes are load-balanced, and we make this assumption explicit for the checksums in the following.

% For the horizontal checksums $AG_h$, we can adopt any existing MDS checksum-generator matrix available in the literature since the resulting encoded matrix naturally preserves the upper triangular structure of the $R$ factor. On the other hand, if we apply off-the-shelf MDS encoding schemes for the vertical checksums $G_v A$, it does not preserve the orthogonality of the $Q$ factor~\cite[Theorem 5.1]{colchecksum}. The main challenge in this paper is thus designing a vertical checksum-generator matrix $G_v$ that can exhibit the MDS property and preserve orthogonality at the same time. 

% In this section, we show how coded computing strategies can be extended to the in-node checksum storage setting. 

% In the coded computing literature, a common assumption is out-of-node checksum storage, which means that we add ``coded'' nodes that contain the encoded inputs. This follows the framework in coding theory where we have systematic symbols and parity-check symbols. 

\begin{definition}[Load-balanced In-node Checksum Distribution]\label{def:innode_distr}
Let $P$ be the total number of nodes and let $K$ be the number of checksum blocks. Then, we say the checksum distribution is load-balanced if:
\begin{itemize}
    \item $\overline{a} \triangleq K - P\floor{\frac{K}{P}}$ of the nodes have $\ceil{\frac{K}{P}}$ checksum blocks, and
    \item $\overline{b} \triangleq P - K + P\floor{\frac{K}{P}}$ of the nodes have $\floor{\frac{K}{P}}$ checksum blocks.
\end{itemize}
\end{definition}

We now show a lower bound on $K$ in the next Theorem, whose proof can be found in Appendix \ref{appen_bound_K1}.

\begin{theorem} \label{thm:in_node_opt}
Let $L$ be the number of data blocks and assume that the total number of nodes $P$ divides $L$. Under the load-balanced in-node checksum distribution, to recover from any $r$ failed nodes out of $P$ nodes, 
\begin{equation} \label{eq:in_node_main}
K \geq \frac{fL}{P} + f \ceil{\frac{fL}{(P-f)P}}.    
 \end{equation}
\end{theorem}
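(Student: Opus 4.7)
The plan is to establish \eqref{eq:in_node_main} by an adversarial counting argument. Since checksum blocks are linear combinations of the data blocks, after $f$ nodes fail each surviving checksum contributes at most one independent linear equation on the $fL/P$ lost data blocks; recovery therefore requires the number of surviving checksums to be at least $fL/P$, no matter which $f$ nodes the adversary chooses. The adversary's only lever is to maximize the number of checksums it destroys, so the task is to quantify this worst case under the load-balanced distribution.

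First I would write $K = qP + \bar a$ with $q = \floor{K/P}$ and $0 \leq \bar a < P$; by Definition \ref{def:innode_distr}, exactly $\bar a$ nodes carry $q+1$ checksums and the remaining $P - \bar a$ carry $q$. The worst case adversary picks greedily from the high-checksum group first, which splits the analysis into two cases based on whether $\bar a \geq f$.

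In the case $\bar a \geq f$, the adversary destroys $f(q+1) = f\ceil{K/P}$ checksums, so the recovery condition becomes the self-referential inequality
\begin{equation*}
K \;\geq\; \frac{fL}{P} + f\ceil{K/P}.
\end{equation*}
Dividing by $P$, using $\ceil{K/P} \geq K/P$, and rearranging gives $(1 - f/P)\ceil{K/P} \geq fL/P^2$, hence $\ceil{K/P} \geq fL/(P(P-f))$, and integrality of the left-hand side upgrades this to $\ceil{K/P} \geq \ceil{fL/(P(P-f))}$. Substituting back reproduces \eqref{eq:in_node_main}. In the case $\bar a < f$, the adversary takes all $\bar a$ high-checksum nodes and $f - \bar a$ low ones, destroying $fq + \bar a$ checksums and leaving $q(P-f)$ survivors; the recovery requirement $q(P-f) \geq fL/P$ forces $q \geq \ceil{fL/(P(P-f))}$, and therefore $K \geq qP = q(P-f) + fq \geq fL/P + f\ceil{fL/(P(P-f))}$, matching \eqref{eq:in_node_main} again.

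The main obstacle, I expect, will be the self-referential step in the first case: one has to argue cleanly that $K \geq fL/P + f\ceil{K/P}$ implies $\ceil{K/P} \geq \ceil{fL/(P(P-f))}$, using both the trivial bound $\ceil{K/P} \geq K/P$ and the fact that an integer quantity bounded below by a real number is also bounded below by that number's ceiling. Everything else, including the case split on $\bar a$ versus $f$ and the final algebraic verification, is routine bookkeeping.
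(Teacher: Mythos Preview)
Your proposal is correct and follows essentially the same approach as the paper: both argue that the number of surviving checksums must be at least $fL/P$, then case-split on whether the adversary can fill all $f$ failures from the high-checksum group (your $\bar a \geq f$ versus $\bar a < f$ is equivalent to the paper's $\bar b < P-f$ versus $\bar b \geq P-f$). Your derivation is in fact a bit more streamlined than the paper's, which first extracts the weaker bound $\ceil{K/P} \geq \ceil{fL/((P-f)P)}$ and then searches over $K = P\ceil{fL/((P-f)P)} - m$ to locate the minimum feasible $K$, whereas you substitute directly back into the surviving-checksum inequality in each case to obtain \eqref{eq:in_node_main}.
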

In the following, we show that a simple adaptation of MDS coding with $K$ satisfying \eqref{eq:in_node_main} can tolerate any $f$ failures. 
\begin{construction}[In-Node Systematic MDS Coding] \label{const:in_node}
Let $D_1, \ldots, D_L$ be data blocks and $C_1, \ldots C_K$ be checksum blocks, and let $P$ be the number of nodes. We assume that $P$ divides $L$ and we assume the load-balanced setting, i.e, each node has $L/P$ data blocks and the checksum distribution follows the load-balanced in-node checksum distribution. Let $\mathcal{I}_p$ be the set of data block indices in the $p$-th node and let $\mathsf{Proc}(D_i)$  or $\mathsf{Proc}(C_i)$  denote the processor index the data or checksum block belongs to. 
Let $\tilG \in \bbR^{K \times L}$ be a checksum-generating matrix of a $(K+L, L)$ systematic MDS code. Then, In-Node Systematic MDS Coding encodes the $i$-th checksum block $C_i$ as follows:
\begin{equation} \label{eq:in_node_cksum}
    C_i = \sum_{j \notin \mathcal{I}_{\mathsf{Proc}(C_i)}} \tilG_{i,j} D_j.
\end{equation}
\end{construction}

% \textcolor{blue}{TODO: Update to include both R protection and Q protection codes!}

\begin{theorem} \label{thm:in_node_mds}
In-node systematic MDS coding given in Construction~\ref{const:in_node} with $K$ that satisfies \eqref{eq:in_node_main} can tolerate any $f$ failed nodes out of $P$ nodes. 
\end{theorem}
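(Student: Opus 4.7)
The plan is to reduce recovery to a linear system in the lost data blocks whose coefficient matrix is a square submatrix of $\tilG$, and then invoke the MDS property. Let $\mathcal{F}\subseteq\{1,\dots,P\}$ be the set of failed nodes, with $|\mathcal{F}|=f$. Define the set of lost data-block indices $\mathcal{D}_{\mathcal{F}} = \bigcup_{p\in\mathcal{F}}\mathcal{I}_p$, which has size $fL/P$ by data load-balance, and the set of lost checksum indices $\mathcal{C}_{\mathcal{F}} = \{i : \mathsf{Proc}(C_i) \in \mathcal{F}\}$, which has size at most $f\ceil{K/P}$ by the load-balanced in-node checksum distribution of Definition~\ref{def:innode_distr}.

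First I would verify the accounting, namely $K - f\ceil{K/P} \geq fL/P$, so that there are enough surviving checksums to match the $fL/P$ unknowns. Taking $K$ at the lower bound $K_{\min} = fL/P + f\ceil{fL/((P-f)P)}$ (the case of larger $K$ follows by using only a subset of checksums) and setting $M := \ceil{fL/((P-f)P)}$, the identity $\tfrac{fL}{P}+f\cdot\tfrac{fL}{(P-f)P}=\tfrac{fL}{P-f}$ together with $M \geq fL/((P-f)P)$ yields $K_{\min}/P \leq M$, hence $\ceil{K_{\min}/P} \leq M$. Thus $K_{\min} - f\ceil{K_{\min}/P} \geq K_{\min} - fM = fL/P$, giving at least $fL/P$ surviving checksums.

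Next I would write down the decoding equations. For each surviving checksum $C_i$ (so $\mathsf{Proc}(C_i)\notin\mathcal{F}$), equation \eqref{eq:in_node_cksum} can be rearranged as
\[
\sum_{j\in\mathcal{D}_{\mathcal{F}}}\tilG_{i,j}\,D_j \;=\; C_i \;-\; \sum_{j\notin\mathcal{I}_{\mathsf{Proc}(C_i)}\,\cup\,\mathcal{D}_{\mathcal{F}}}\tilG_{i,j}\,D_j,
\]
whose right-hand side is fully known from surviving data and surviving checksums. The crucial structural observation is that because $\mathsf{Proc}(C_i)\notin\mathcal{F}$, the index sets satisfy $\mathcal{I}_{\mathsf{Proc}(C_i)}\cap\mathcal{D}_{\mathcal{F}}=\emptyset$, so the coefficient of each $D_j$ on the left-hand side is exactly $\tilG_{i,j}$, not distorted by the ``exclude own processor'' rule in \eqref{eq:in_node_cksum}.

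Finally I would pick any $fL/P$ of the surviving checksum indices to form a square system in the $fL/P$ unknowns $\{D_j : j\in\mathcal{D}_{\mathcal{F}}\}$. Its coefficient matrix is a $(fL/P)\times(fL/P)$ submatrix of $\tilG$, and since $\tilG$ is the parity part of a systematic $(L+K,L)$ MDS code, every square submatrix of $\tilG$ is invertible. Hence the system has a unique solution, recovering all lost data blocks, and the lost checksums $\{C_i : i\in\mathcal{C}_{\mathcal{F}}\}$ can afterwards be re-encoded from \eqref{eq:in_node_cksum}. The only nontrivial step is the accounting in the second paragraph, where one has to carefully track floors and ceilings; everything else is a direct consequence of the MDS property together with the index-disjointness observation.
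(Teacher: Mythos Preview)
Your proposal is correct and follows essentially the same approach as the paper: reduce the surviving checksum equations to a square linear system in the $fL/P$ lost data blocks and invoke the MDS property of $\tilG$. Your write-up is in fact more careful than the paper's own proof, which simply asserts that the surviving nodes hold $fL/P$ checksum blocks (citing the necessary condition \eqref{eq:in_node_lem}) and does not spell out the index-disjointness observation or the ceiling/floor accounting that you include.
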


% Next, we apply the above generalized results for in-node checksum storage to the context of coded QR decomposition.
% In the following, we show that a simple adaptation of MDS coding with $K$ satisfying \eqref{eq:in_node_main} can tolerate any $f$ failures for both $Q$-factor and $R$-factor protection.

The proof of Theorem \ref{thm:in_node_mds} is given in Appendix \ref{appen_in_node_R1}.

% \textcolor{blue}{
\begin{rem}
\label{rem:alt_in_node}
We note that directly adopting the conventional MDS coding, i.e. encoding $C_i = \sum_{j=1}^L \tilG_{i,j} D_j$ instead of \eqref{eq:in_node_cksum},  also provides tolerance to $f$ failures following the similar argument as the proof of Theorem \ref{thm:in_node_mds}. Nevertheless, this would incur additional local computation, though negligible, for failure recovery due to the in-node structure. Particularly, we  note that $C_i - \sum_{j \in \mathcal{I}_{\mathsf{Proc}(C_i)}} \tilG_{i,j} D_j = \sum_{j \notin \mathcal{I}_{\mathsf{Proc}(C_i)}} \tilG_{i,j} D_j$. Thus, if $\mathsf{Proc}(C_i)$ is a surviving node, then it must locally compute the quantity $C_i - \sum_{j \in \mathcal{I}_{\mathsf{Proc}(C_i)}} \tilG_{i,j} D_j$, which is the numerical value of the weighted sum of other processors' data blocks, in the process of recovering failed nodes. To this end, our checksum construction \eqref{eq:in_node_cksum} alleviates such unnecessary step of local computation. 
\end{rem}
% }

% \begin{theorem} 
% % \label{thm:in_node_mds}
% The coding scheme given in Construction~\ref{const:in_node_qr} can tolerate any $f$ failed nodes in coded QR decomposition under in-node checksum storage. 
% \end{theorem}

\section{$Q$-Factor Protection}
\label{sec:q_factor_protection}
% For the horizontal checksums $AG_h$, we can adopt any existing MDS checksum-generator matrix available in the literature since the resulting encoded matrix naturally preserves the upper triangular structure of the $R$ factor. 
% \textcolor{red}{
We have demonstrated in Section \ref{in_node} that, as the resulting encoded matrix naturally preserves the upper triangular structure of the $R$ factor,  any existing MDS checksum-generator matrix and a well-designed MDS checksum-generator matrix would respectively suffice for the design of horizontal checksums $AG_h$ \eqref{eq:encoding} of out-of-node and in-node checksum storage. 
% }
On the other hand, if we apply off-the-shelf MDS encoding schemes for the vertical checksums $G_v A$, it does not preserve the orthogonality of the $Q$ factor~\cite[Theorem 5.1]{colchecksum}. The main challenge in this paper is thus designing a vertical checksum-generator matrix $G_v$ that can exhibit the MDS property and preserve orthogonality at the same time. 

In this section, we discuss how coding can be applied to the parallel Gram-Schmidt algorithm to protect the left $Q$ factor (an orthogonal matrix). From \eqref{eq:encoding}, we have both horizontal and vertical checksums, but in this section we will only consider the vertical checksum for simpler presentation. When we use $A$, one can regard it as $\begin{bmatrix}
A & AG_h
\end{bmatrix}$.

In \cite[Theorem 5.1]{colchecksum}, it was concluded:  \emph{``$Q$ in Householder QR factorization cannot be protected by performing factorization along with the vertical checksum.'' } The basis of this claim was that in the output retrieval $A = Q_1 R$ after the QR factorization of the vertically encoded matrix $\widetilde{A}$:
\begin{align}
\label{encoded_qr_compute}
    \widetilde{A} =\begin{bmatrix}
    A \\ GA
    \end{bmatrix} =\begin{bmatrix}
    Q_1 \\ Q_2
    \end{bmatrix} R,
\end{align}
$Q_1$ is not orthogonal, i.e. $Q_1 ^T Q_1 \neq I$. %This is because linearly combining rows do not preserve the orthogonality. 
Thus, $Q_1 R$ is not  the correct QR factorization of $A$. While the Theorem statement was proven for the Householder algorithm, the reasoning is general and not limited to a specific algorithm. 
% \textcolor{red}{[TODO] Refine the sentence and add the theorem? }

A major contribution of our work is showing that we can convert $Q_1$ into an orthogonal matrix with a small amount of overhead (same as the overhead of encoding in scaling sense). 
In Section~\ref{subsec:general_scheme}, we prove that if the checksum-generator matrix satisfies certain conditions (given in Theorem~\ref{thm:G_0}), there exists a low-cost linear transform that orthogonalizes $Q_1$. In Section~\ref{multiple_failure}, we propose a checksum-generator matrix construction for out-of-node checksum storage that satisfies the conditions given in Theorem~\ref{thm:G_0} while providing resilience to multiple-node failure. 
Then, in Section \ref{q_in_node}, we adapt the checksum-generator matrix construction to the setting of in-node checksum storage.
% Finally, we show through careful analysis that the overhead of fault tolerance including encoding, failure recovery, and low-cost post-orthogonalization is negligible. 

\subsection{Low-cost post-orthogonalization}
\label{subsec:general_scheme}

How ``non-orthogonal'' is $Q_1$? Can we still utilize $Q_1$ to recover the original QR factorization? We show that with a low-cost linear transform, $Q_1$ can be transformed into an orthogonal matrix, if the checksum-generator matrix $G$ satisfies certain conditions. 

\begin{theorem}\label{thm:G_0}
Let $G_1, V$ be submatrices of the vertical checksum-generator matrix $G$ as follows:
\begin{equation}
    G = \begin{bmatrix}  G_1 & V \end{bmatrix},
\end{equation}
where $G_1$ and $V$ have dimensions $c \times c$ and $c \times (n-c)$, respectively. Let $G_0$ be an $n \times n$ by matrix as follows:
\begin{equation}
\label{G0_formula1}
    G_0 = \begin{bmatrix}
I_{c}+G_1 & V \\
V^T & -I_{n-c} \end{bmatrix}.
\end{equation}

If $G$ satisfies the following condition:
\begin{align}
\label{eq:restrictionG}
G_1 = -\frac{1}{2} V V^T,
\end{align}
we can prove the following:

Claim 1: $G_0 Q_1$ is orthogonal, i.e.  $(G_0 Q_1)^T (G_0 Q_1) = I$.

Claim 2:  $G_0$ is invertible.

% \begin{enumerate}
%     \item $G_0 Q_1$ is orthogonal, i.e.  $(G_0 Q_1)^T (G_0 Q_1) = I$.
%     \item $G_0$ is invertible.
% \end{enumerate}
\end{theorem}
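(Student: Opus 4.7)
The plan is to reduce both claims to a single algebraic identity, $G_0^T G_0 = I_n + G^T G$, and to read invertibility off its positive-definite right-hand side.

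First I would invoke the checksum-preservation property from Lemma~\ref{lem:2way_checksum}, specialized to the vertically encoded case, which gives $Q_2 = G Q_1$. Since $\widetilde{A} = \begin{bmatrix} Q_1 \\ Q_2 \end{bmatrix} R$ is a genuine QR factorization produced by PBMGS, the stacked matrix has orthonormal columns, so $Q_1^T Q_1 + Q_2^T Q_2 = I$. Substituting $Q_2 = G Q_1$ produces the key identity $Q_1^T (I_n + G^T G) Q_1 = I$. Consequently, Claim~1 follows immediately once I establish the purely algebraic identity $G_0^T G_0 = I_n + G^T G$, because then $(G_0 Q_1)^T (G_0 Q_1) = Q_1^T G_0^T G_0 Q_1 = I$.

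Next I would verify that identity by direct $2 \times 2$ block multiplication. The constraint $G_1 = -\tfrac{1}{2} V V^T$ has two useful consequences: $G_1$ is symmetric (so $G_0$ is symmetric and $G_0^T G_0 = G_0^2$), and the cross term $2 G_1 + V V^T$ appearing in the top-left block of $G_0^2$ cancels exactly, leaving $I_c + G_1^2$. The top-right block becomes $(I_c + G_1) V - V = G_1 V$, the bottom-left is its transpose $V^T G_1$, and the bottom-right is $V^T V + I_{n-c}$. Expanding $I_n + G^T G$ with $G = \begin{bmatrix} G_1 & V \end{bmatrix}$ and using $G_1^T = G_1$ yields exactly the same four blocks, completing the identity.

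For Claim~2, invertibility of $G_0$ is then immediate: $G_0^2 = I_n + G^T G$ is positive definite because $G^T G \succeq 0$, and any real square matrix whose square is nonsingular must itself be nonsingular. As an independent sanity check, computing $\det G_0$ by Schur complement with respect to the $-I_{n-c}$ block gives $\det G_0 = (-1)^{n-c} \det\!\bigl(I_c + \tfrac{1}{2} V V^T\bigr)$, which is nonzero since $I_c + \tfrac{1}{2} V V^T$ is positive definite.

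The main obstacle is essentially bookkeeping: recognizing that the specific coefficient $-\tfrac{1}{2}$ in $G_1 = -\tfrac{1}{2} V V^T$ is precisely what cancels the top-left cross term, and that the choice of $-I_{n-c}$ in the lower-right of $G_0$ is what forces the two mixed-block contributions to match the off-diagonals of $I_n + G^T G$. No deeper tool than block-matrix algebra is required once the reduction via the checksum-preservation lemma has been made.
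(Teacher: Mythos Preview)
Your proposal is correct and follows essentially the same route as the paper: both reduce Claim~1 to the block identity $G_0^T G_0 = I_n + G^T G$ (verified by the same $2\times 2$ block multiplication under $G_1 = -\tfrac{1}{2}VV^T$), combined with the orthogonality of $\begin{bmatrix}Q_1\\ GQ_1\end{bmatrix}$, and for Claim~2 the paper uses exactly your Schur-complement determinant computation showing $\det G_0 = (-1)^{n-c}\det(I_c + \tfrac{1}{2}VV^T)\neq 0$. Your additional remark that invertibility already follows from $G_0^2 = I_n + G^T G \succ 0$ is a nice shortcut the paper does not take.
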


% The proof for Theorem \ref{thm:G_0} is in ~\cite[Appendix~\ref{Q_lemma}]{full_version}. \todo{note}

The proof of Theorem \ref{thm:G_0} is given in Appendix \ref{appen_post_ortho_proof1}. 
 Based on Theorem \ref{thm:G_0}, we next show how a checksum generator matrix satisfying \eqref{eq:restrictionG} can be used to retrieve the orthogonality of the original $Q$-factor. In Section \ref{multiple_failure},  we will  design a semi-random checksum generator matrix (Construction~\ref{const:G_formula1}) that jointly satisfies \eqref{eq:restrictionG} and is MDS with high probability (Theorem \ref{thm:prob_det}). 
Now, notice that the matrix $G_0$ is very sparse as the bottom-right submatrix is simply an $(n-c)\times (n-c)$ identity matrix, and $c$ is negligible. Claim 1 in Theorem~\ref{thm:G_0} suggests that by multiplying this sparse matrix $G_0$, we can convert $Q_1$ into an orthogonal matrix. 
We now demonstrate how we can use $G_0 Q_1$ in place of the original $Q$ in solving a full-rank square system of linear equations $A\bm{x} = \bm{b}$.
Let $A' = G_0 A$. Then the following factorization is a valid QR factorization of $A'$:
\begin{equation} \label{eq:A'}
    A'  = (G_0 Q_1) R,
\end{equation}
with the left factor $(G_0 Q_1)$ and the right factor $R$. As $G_0$ is invertible by Claim 2 in Theorem~\ref{thm:G_0}, 
\begin{equation}
    A\bm{x} =\bm{b} \iff (G_0 A)\bm{x} = G_0 \bm{b}.
\end{equation}
The linear system on the right side can be solved using the QR factorization of $A'$ given in \eqref{eq:A'}. i.e.,
\begin{align}
    (G_0 Q_1) R\bm{x} & = G_0 \bm{b},  \label{linearsolve1}\\ 
   (G_0 Q_1)^T (G_0 Q_1) R\bm{x} &= (G_0 Q_1)^T (G_0 \bm{b}), \label{linearsolve2}\\
     R\bm{x} &= (G_0 Q_1)^T (G_0 \bm{b}). \label{linearsolve3}
\end{align}
Then, we can perform triangular solve to get the final answer $\bm{x}$. Remember that we already have $Q_1$ and $R$ from the QR factorization of the encoded matrix $\widetilde{A}$. Hence, all we need to perform in the above steps is computing post-orthogonalization, $G_0 Q_1$ and $G_0 \bm{b}$. We show in Theorem~\ref{thm:overhead} that the overhead of post-orthogonalization is negligible. 

% \hj{Repetition of theorem. Replication is not the most effective way of redundancy :P}
% \begin{theorem}
% $\tilde{G}$ has the MDS property if and only if every square submatrix of $\tilde{G}$ is full-rank. 
% \end{theorem}

% \subsection{Checksum-Generator Matrix for Multiple-Node Failure} 
\subsection{Checksum-Generator Matrix for Out-of-Node Checksum Storage} 
\label{multiple_failure}

The low-cost post-orthogonalization scheme exists under the constraint \eqref{eq:restrictionG} on the checksum-generator matrix $G$. One crucial question  is whether we can construct  $G$  that has good error correction/detection capability while satisfying \eqref{eq:restrictionG}. In this subsection, we present one such construction of $G$ for $f$ node failures recovery. 
Throughout this section, we assume  $p_r$ divides $n$ for simplicity, but results generalize to any $p_r$ and $n$. %\hj{How does it generalize?} 

Recall from  \eqref{tildeG_eq} that we can write $G = \widetilde{G} \otimes I_{\frac{n}{p_r}}$. Let $\tilde{G_1}, \tilde{V}$ be submatrices of $\widetilde{G}$ such that $\widetilde{G} = \begin{bmatrix}  \widetilde{G}_1 & \widetilde{V} \end{bmatrix}  $, where $\widetilde{G}_1$ and $\widetilde{V}$ have dimensions $m_r \times m_r$ and $m_r \times (p_r-m_r)$, respectively. Since ${G} = \begin{bmatrix}  {G}_1 & {V} \end{bmatrix}  $ where ${G}_1$ and ${V}$ have dimensions $c \times c$ and $c \times (n-c)$, we have: $G_1 = \widetilde{G}_1 \otimes I_{n/p_r}$ and $V = \widetilde{V} \otimes I_{n/p_r}$. The following Lemma establishes the equivalence of the post-orthogonalization condition being imposed on $G$ versus $\widetilde{G}$.
\begin{lemma}[Equivalence of post-orthogonalization condition]
The following relation holds:
\begin{align}
    G_1 = -\frac{1}{2} V V^T \iff \widetilde{G_1} = -\frac{1}{2} \widetilde{V} \widetilde{V}^T.
\end{align}

\end{lemma}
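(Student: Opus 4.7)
The plan is to reduce both sides of the claimed equivalence to a single identity between Kronecker products and then invoke the injectivity of the map $X \mapsto X \otimes I_{n/p_r}$. First I will substitute $G_1 = \widetilde{G}_1 \otimes I_{n/p_r}$ and $V = \widetilde{V} \otimes I_{n/p_r}$ into the left-hand condition $G_1 = -\tfrac{1}{2} V V^T$ and simplify $V V^T$ using the standard Kronecker identities $(A \otimes B)^T = A^T \otimes B^T$ and $(A \otimes B)(C \otimes D) = (AC) \otimes (BD)$. This gives
\begin{equation*}
V V^T = \bigl(\widetilde{V} \otimes I_{n/p_r}\bigr)\bigl(\widetilde{V}^T \otimes I_{n/p_r}\bigr) = (\widetilde{V}\widetilde{V}^T) \otimes I_{n/p_r},
\end{equation*}
so the left-hand condition becomes $\widetilde{G}_1 \otimes I_{n/p_r} = -\tfrac{1}{2}(\widetilde{V}\widetilde{V}^T) \otimes I_{n/p_r}$.

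Next, I would establish the elementary fact that for any two matrices $X, Y$ of the same dimension, $X \otimes I_{n/p_r} = Y \otimes I_{n/p_r}$ if and only if $X = Y$. This is immediate from the block structure of the Kronecker product: the $(i,j)$-block of $X \otimes I_{n/p_r}$ equals $x_{ij} I_{n/p_r}$, so two such matrices coincide block-by-block precisely when their defining matrices are equal entry-by-entry. Applied to $X = \widetilde{G}_1$ and $Y = -\tfrac{1}{2}\widetilde{V}\widetilde{V}^T$, this delivers both directions of the equivalence in one line.

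Since the proof is essentially a two-line Kronecker algebra computation, I do not anticipate any real obstacle. The only point meriting care is verifying that the dimensions line up in the mixed-product step ($\widetilde{V}$ is $m_r \times (p_r - m_r)$ and $I_{n/p_r}$ is square, so the product is well defined and the factor sizes are $m_r \times m_r$ and $\tfrac{n}{p_r} \times \tfrac{n}{p_r}$, matching those of $\widetilde{G}_1 \otimes I_{n/p_r}$). Once that bookkeeping is in place, the equivalence follows with no further work.
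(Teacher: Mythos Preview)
Your proposal is correct and follows essentially the same approach as the paper: substitute the Kronecker factorizations of $G_1$ and $V$, apply the mixed-product rule to collapse $VV^T$ to $(\widetilde{V}\widetilde{V}^T)\otimes I_{n/p_r}$, and then use the injectivity of $X\mapsto X\otimes I_{n/p_r}$ to pass between the two conditions. The paper's proof is exactly this chain of equivalences, just written more tersely.
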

\begin{proof}
We have:
\begin{align*}
     G_1 = -\frac{1}{2} V V^T  &\iff  \widetilde{G_1} \otimes I_{n/p_r} = -\frac{1}{2} (\widetilde{V} \otimes I_{n/p_r}) (\widetilde{V}^T \otimes I_{n/p_r}) \\
     & \iff  \widetilde{G_1} \otimes I_{n/p_r}  = (-\frac{1}{2} \widetilde{V} \widetilde{V}^T) \otimes I_{n/p_r}  \\
     & \iff  \widetilde{G_1} = -\frac{1}{2} \widetilde{V} \widetilde{V}^T.
\end{align*}
\end{proof}

\begin{construction}[$Q$-factor Checksum-Generator Matrix for Multiple-Node Failure Recovery]
\label{const:G_formula1}

Let $\widetilde{G} = \begin{bmatrix} \widetilde{G}_1 & \widetilde{V} \end{bmatrix} = \{ \tilg_{i,j}\}$ be a $f \times p_r$ matrix, where $\widetilde{V}$ is a $f \times (p_r-f)$ matrix whose entries are chosen randomly from $\mathrm{Unif}(0,1)$ and $\widetilde{G}_1 = -\frac{1}{2} \widetilde{V} \widetilde{V}^T$ is an $f\times f$ matrix. 

The following $\frac{f n}{p_r} \times n$ checksum-generator matrix $G$ satisfies the restriction \eqref{eq:restrictionG}, and, as established in Theorem \ref{thm:prob_det}, guarantees multiple-node fault tolerance for out-of-node checksum  storage with probability 1: 
\begin{align}
\label{generator_mat1}
G =\widetilde{G} \otimes I_{n/p_r} = \begin{bmatrix} \tilg_{0,0} I_{\frac{n}{p_r}} & \tilg_{0,1}I_{\frac{n}{p_r}}  & \cdots & \tilg_{0,p_r-1} I_{\frac{n}{p_r}} \\ 
\vdots  &\ddots & & \vdots \\
\tilg_{f-1,0} I_{\frac{n}{p_r}} & \tilg_{f-1,1}I_{\frac{n}{p_r}}  & \cdots & \tilg_{f-1,p_r-1} I_{\frac{n}{p_r}} \end{bmatrix}.
\end{align}

\end{construction}

\newcommand\coolover[2]{\mathrlap{\smash{\overbrace{\phantom{%
    \begin{matrix} #2 \end{matrix}}}^{\mbox{$#1$}}}}#2}

\newcommand\coolunder[2]{\mathrlap{\smash{\underbrace{\phantom{%
    \begin{matrix} #2 \end{matrix}}}_{\mbox{$#1$}}}}#2}

\newcommand\coolleftbrace[2]{%
#1\left\{\vphantom{\begin{matrix} #2 \end{matrix}}\right.}

\newcommand\coolrightbrace[2]{%
\left.\vphantom{\begin{matrix} #1 \end{matrix}}\right\}#2}

Under the \emph{out-of-node checksum storage}, the vertical checksums $C = G A$ are distributed into $f \times p_r$ checksum nodes, and each $\frac{n}{p_r} \times \frac{n}{p_c}$ block $C_{i, j}$ is owned by the checksum processor $\Pi_C(i, j)$. Furthermore, by \eqref{generator_mat1}, each checksum $C_{i,j}$ can be computed as:

\begin{align}
\label{checksum_overhead1}
    C_{i,j} = \sum_{t=0}^{p_r-1} \tilg_{i,t} A_{t, j}.
\end{align}
The following Theorem  shows that the random checksum generator matrix constructed in Construction~\ref{const:G_formula1}  satisfies the MDS condition with probability 1.
\begin{theorem}\label{thm:prob_det}
Every square submatrix of the matrix $\widetilde{G}$ constructed following Construction~\ref{const:G_formula1} is full rank with probability 1. 
\end{theorem}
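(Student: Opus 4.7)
The plan is to reduce the almost-sure statement to showing that each fixed submatrix determinant is a non-identically-zero polynomial in the entries of $\widetilde{V}$, and then verify this by a Cauchy--Binet expansion combined with a monomial-support argument.

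First, for any row set $R \subseteq \{1,\dots,f\}$ and column set $C \subseteq \{1,\dots,p_r\}$ with $|R|=|C|=k$, the map $\widetilde{V}\mapsto \det(\widetilde{G}|_{R,C})$ is a polynomial in the $f(p_r-f)$ entries of $\widetilde{V}$, and the zero locus of a nonzero real polynomial on $\mathbb{R}^{f(p_r-f)}$ has Lebesgue measure zero. Since there are only finitely many such pairs $(R,C)$, it is enough to show that each of these polynomials is not identically zero; the product $\mathrm{Unif}(0,1)$ distribution then assigns probability $0$ to the finite union of their zero loci, yielding the claim.

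For the non-vanishing step I would start from the factorization $\widetilde{G}=\widetilde{V}\cdot\bigl[-\tfrac{1}{2}\widetilde{V}^{T}\mid I_{p_r-f}\bigr]$, which follows by a direct block computation. Splitting $C=C_1\sqcup C_2$ with $C_1\subseteq\{1,\dots,f\}$ of size $s$ and $C_2':=C_2-f\subseteq\{1,\dots,p_r-f\}$ of size $k-s$, and applying Cauchy--Binet (valid because $k\le f\le p_r-f$ under the assumption $f\le p_r/2$) together with a cofactor argument that kills every index set $S$ not containing $C_2'$, one arrives at
\begin{equation*}
\det(\widetilde{G}|_{R,C}) \;=\; \bigl(-\tfrac{1}{2}\bigr)^{s}\sum_{J}\epsilon_{J}\,\det\!\bigl(\widetilde{V}|_{R,\,C_2'\sqcup J}\bigr)\cdot \det\!\bigl(\widetilde{V}|_{C_1,\,J}\bigr),
\end{equation*}
where $J$ ranges over $s$-subsets of $\{1,\dots,p_r-f\}\setminus C_2'$ and $\epsilon_{J}\in\{\pm1\}$ is a bookkeeping sign coming from the row reordering $S=C_2'\sqcup J$.

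The main obstacle is then ruling out cancellation among the different $J$'s, and I would sidestep any sign chasing by showing that the summands have pairwise disjoint monomial supports. If $J\neq J'$, pick $l\in J\setminus J'$; every monomial of the $J$-summand must contain a variable $\widetilde{V}_{c,l}$ with $c\in C_1$, because in the signed-permutation expansion of $\det(\widetilde{V}|_{C_1,J})$ the column $l$ is used exactly once. On the other hand, no monomial of the $J'$-summand can involve any $\widetilde{V}_{i,l}$ with $i\in R\cup C_1$, since the column indices appearing in that summand all lie in $C_2'\cup J'$, which excludes $l$. Hence distinct summands occupy disjoint monomial spaces, so the total sum is zero only if every summand is zero. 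Each summand is a product of two minors of the formal variable matrix $\widetilde{V}$, hence a nonzero element of the integral domain $\mathbb{R}[\{\widetilde{V}_{i,l}\}]$, which finishes the non-vanishing step and the theorem.
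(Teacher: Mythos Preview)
Your argument is correct, and it reaches the goal by a genuinely different route from the paper.

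The paper applies the Leibniz formula directly to the submatrix $S$ of $\widetilde{G}$, isolates the identity permutation $\sigma=[1\;2\;\cdots\;n_s]$, expands each diagonal entry $\tilde g_{i,i}=-\tfrac12\sum_j v_{i,j}^2$, and then tracks one carefully chosen monomial $\psi=(v_{1,p_r})^2(v_{2,p_r-1})^2\cdots(v_{t,p_r-t+1})^2\prod_l v_{l,t+l}$ through all other permutations $\sigma'$ to show it cannot be cancelled; the inequality $2n_s\le 2f\le p_r$ is used to rule out collisions with the $\widetilde{V}$-columns. You instead exploit the factorization $\widetilde{G}=\widetilde{V}\,[-\tfrac12\widetilde{V}^{T}\mid I_{p_r-f}]$ and let Cauchy--Binet do the bookkeeping, reducing the determinant to a sum $\sum_J \epsilon_J\,\det(\widetilde{V}|_{R,C_2'\sqcup J})\det(\widetilde{V}|_{C_1,J})$ over index sets $J$, after which your disjoint-support observation (the $J$-summand uses column $l\in J\setminus J'$ of $\widetilde{V}$ while the $J'$-summand cannot) shows the whole sum is a nonzero polynomial without ever naming a specific surviving monomial. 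Your approach is more structural and avoids the somewhat delicate index chasing in the paper's proof; the paper's approach is more elementary in that it stays at the level of the Leibniz expansion of the original submatrix. Both rely on the same standing hypothesis $f\le p_r/2$, and both conclude by the standard measure-zero argument for zero sets of nonzero real polynomials.
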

The proof of Theorem \ref{thm:prob_det} and an example for the code construction are respectively deferred to Appendix \ref{mds_theorem_proof} and \ref{construction_example_appen}. 
We further empirically validate Theorem \ref{thm:prob_det}, by comparing the smallest determinant and highest condition number of all square submatrices of $\widetilde{G}= \begin{bmatrix} \widetilde{G}_1 & \widetilde{V} \end{bmatrix}$ following Construction \ref{const:G_formula1} (i.e. random $\widetilde{V}$ and  $\widetilde{G}_1 = -\frac{1}{2} \widetilde{V} \widetilde{V}^T$ ) versus that of the completely random matrix $\widetilde{G}' = \begin{bmatrix} \widetilde{G}_1' & \widetilde{V} \end{bmatrix}$ where entries of $\widetilde{G}_1'$  are also drawn iid from $\mathrm{Unif}(0,1)$.
We compare the least determinants and highest condition numbers of these 2 matrices for $f= 2, \ldots, 6$ and $p_r = 16, 32, 64$ in Figure \ref{fig: rand}. The results in the plot are the average of 100 trials\footnote{Just for $(f =6, p_r =64)$, the number of trials was 10 due to the long running time.}. The results show that the least determinant of square submatrices of the structured matrix $\widetilde{G}$ is comparable to that of the completely random matrix $\widetilde{G}'$, and the highest condition number of square submatrices of the structured matrix $\widetilde{G}$ is at most a factor of $3$ from that of the completely random matrix $\widetilde{G}'$. 
% As it is well-known that the random matrix $\widetilde{G}' $ is MDS with probability 1 , this empirically asserts that our code construction has the MDS property with probability 1. 

\begin{figure}[t]
\centering
\begin{subfigure}{0.43\textwidth}
\includegraphics[width=\textwidth]{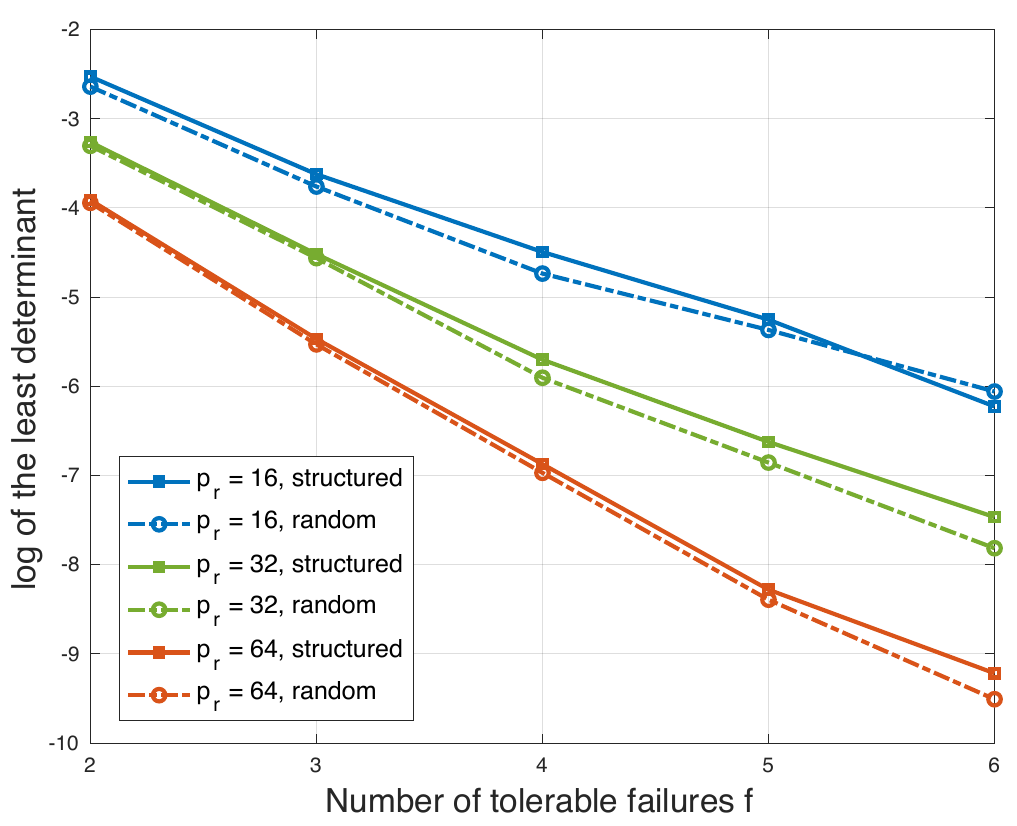}
\caption{Log of the least determinant of square submatrices.}
\label{fig: experiment}
\end{subfigure}
\qquad \qquad
\begin{subfigure}{0.43 \textwidth}
\includegraphics[width=\textwidth]{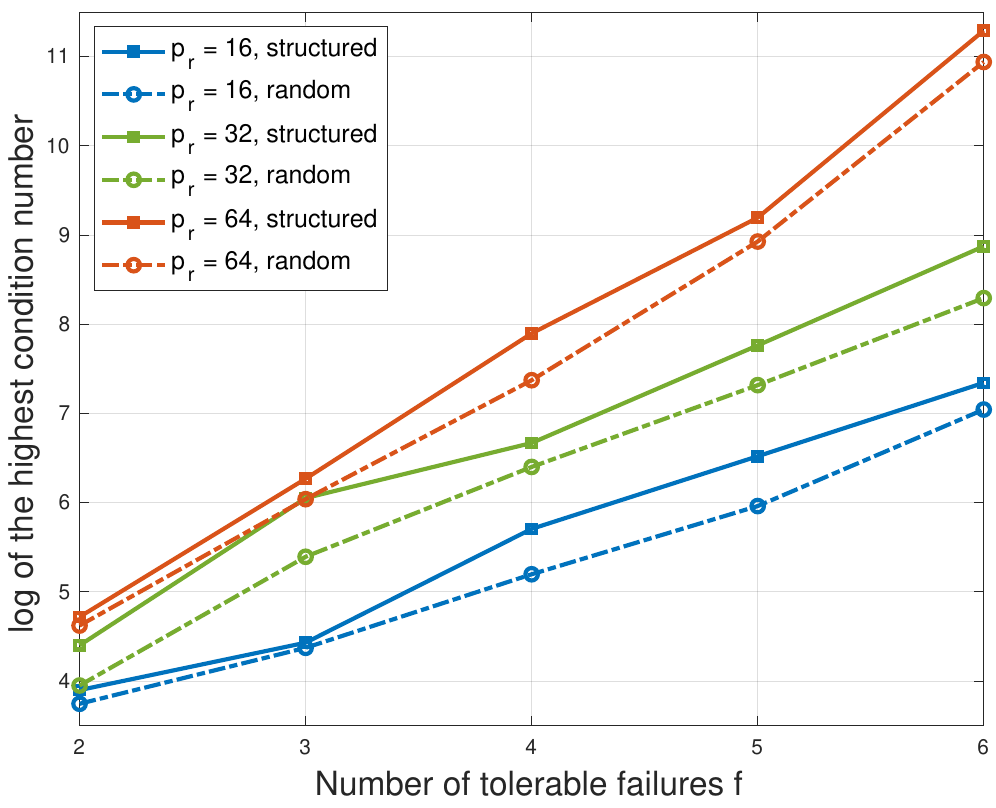}
\caption{Log of the highest condition number of square submatrices.}
\label{fig: condition_number}
\end{subfigure}
\caption{ \small
Comparison between the structured matrix $\widetilde{G}$ and the completely random matrix $\widetilde{G}'$, both of which have size $f \times p_r$, in terms of the least determinant of square submatrices (Figure \ref{fig: experiment}) and the highest condition number of square submatrices (Figure \ref{fig: condition_number}). We report the results for $f= 2, \ldots, 6$ and $p_r = 16, 32, 64$. Each color (blue, green, red) corresponds to each value of $p_r$ ($=16, 32, 64$) being used for comparing $\widetilde{G}$ versus $\widetilde{G}'$. Solid lines are for the structured matrices and dotted lines are for completely random matrices. }
\label{fig: rand}
\end{figure}

\subsection{Checksum-Generator Matrix for In-Node Checksum Storage} 
\label{q_in_node}

We highlight that the above results hold for the general in-node checksum storage setting. Next, we demonstrate the application of our in-node systematic MDS coding to the design of coded QR decomposition.
For clarity, we now turn back to differentiating between the vertical checksum generator matrix $G_v$ and the horizontal checksum generator matrix $G_h$ as in \eqref{eq:encoding}, and recall that the encoded input matrix $\widetilde{A}$ can be written as: 
 % Section \ref{sys_model:checksum_generation}
\begin{equation*}
 \widetilde{A} = \begin{bmatrix}
	 A & A G_h\\
	 G_v A & G_v A G_h
\end{bmatrix},
\end{equation*}
in which, by Lemma \ref{lem:2way_checksum},  $G_v$ 
 and $G_h$ respectively translate to $Q$-factor protection and  $R$-factor protection. Adapting Theorem \ref{thm:in_node_opt} to our setting of 2D block distribution, we have $L= p_r$ data blocks and $P= p_r = p_c$, and consequently consider $K = f + f \ceil{\frac{f}{P-f}}$ that achieves equality in \eqref{eq:in_node_main}. Then the design of horizontal checksum generator matrix $G_h$ can follow immediately Construction \ref{const:in_node}. On the other hand, such direct adoption for $Q$-factor protection is not feasible, since as discussed in Section \ref{subsec:general_scheme}  the vertical checksum generator matrix $G_v$ must satisfy the post-orthogonalization condition \eqref{eq:restrictionG} to circumvent the degraded orthogonality in the decoding phase. Leveraging our idea of semi-random checksum design in Construction \ref{const:G_formula1} and the fact  that the MDS coding is  applicable to in-node setting with negligible additional cost (Remark \ref{rem:alt_in_node}), we propose the following vertical checksum generator matrix $G_v$ that  with high probability  is MDS and thus tolerable to $f$ node failures.

% \begin{construction}[In-Node Systematic MDS Coding] \label{const:in_node}
% Let $D_1, \ldots, D_L$ be data blocks and $C_1, \ldots C_K$ be checksum blocks, and let $P$ be the number of nodes. We assume that $P$ divides $L$ and we assume the load-balanced setting, i.e, each node has $L/P$ data blocks and the checksum distribution follows the load-balanced in-node checksum distribution. Let $\mathcal{I}_p$ be the set of data block indices in the $p$-th node and let $\mathsf{Proc}(D_i)$  or $\mathsf{Proc}(C_i)$  denote the processor index the data or checksum block belongs to. 
% Let $\tilG \in \bbR^{K \times \frac{(P-1)L}{P}}$ be a checksum-generating matrix of a $(K+(P-1)\frac{L}{P}, (P-1)\frac{L}{P})$ systematic MDS code. Then, In-Node Systematic MDS Coding encodes the $i$-th checksum block $C_i$ as follows:
% \begin{equation} \label{eq:in_node_cksum}
%     C_i = \sum_{j \notin \mathcal{I}_{\mathsf{Proc}(C_i)}} \tilG_{i,j} D_j.
% \end{equation}
% \end{construction}

\begin{construction}[In-Node $Q$-factor  Checksum-Generator Matrix for Multiple-Node
Failure Recovery]
\label{const:in_node_qr}
Let $\widetilde{G} = \begin{bmatrix} \widetilde{G}_1 & \widetilde{V} \end{bmatrix} = \{ \tilg_{i,j}\}$ be a $K \times p_r$ matrix, where $\widetilde{V}$ is a $K \times (p_r-K)$ matrix whose entries are chosen randomly from $\mathrm{Unif}(0,1)$ and $\widetilde{G}_1 = -\frac{1}{2} \widetilde{V} \widetilde{V}^T$ is an $f\times f$ matrix. The vertical checksum generator matrix is constructed as $G_v =\widetilde{G} \otimes I_{n/p_r}$. 
\end{construction}
\begin{theorem}\label{thm:prob_det_1}
Every square submatrix of the matrix $\widetilde{G}$ constructed following Construction~\ref{const:in_node_qr} is full rank with probability 1. 
\end{theorem}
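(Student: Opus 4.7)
The plan is to observe that Construction~\ref{const:in_node_qr} is structurally identical to Construction~\ref{const:G_formula1}, with the only difference being that $\widetilde{G}$ has $K$ rows instead of $f$ rows. Since the proof of Theorem~\ref{thm:prob_det} deferred to Appendix~\ref{mds_theorem_proof} does not exploit any specific relationship between the number of rows and the failure tolerance parameter — it only uses the block structure $[\widetilde{G}_1 \mid \widetilde{V}]$ with $\widetilde{G}_1 = -\tfrac{1}{2}\widetilde{V}\widetilde{V}^T$ and the continuous distribution of the entries of $\widetilde{V}$ — the argument transfers verbatim. Thus the main task is to verify that nothing in the original proof depends on the row count being $f$.

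First I would fix an arbitrary $s \times s$ square submatrix $M$ of $\widetilde{G}$ (with $s \le \min(K, p_r)$) and view $\det(M)$ as a polynomial $P(\widetilde{V})$ in the $K(p_r - K)$ entries of $\widetilde{V}$, using the identity $\widetilde{G}_1 = -\tfrac{1}{2}\widetilde{V}\widetilde{V}^T$ to express every entry of $M$ as a polynomial in those variables. If $P$ is not identically zero, then standard Schwartz--Zippel-type reasoning (or simply the fact that the zero set of a nontrivial polynomial has Lebesgue measure zero) implies $P(\widetilde{V}) \neq 0$ almost surely. A union bound over the finite collection of square submatrices of $\widetilde{G}$ then shows that, with probability $1$, every square submatrix is full rank.

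The only nontrivial step — and the main obstacle inherited from the proof of Theorem~\ref{thm:prob_det} — is exhibiting, for each such submatrix, a specific assignment of values to the entries of $\widetilde{V}$ witnessing $P \not\equiv 0$. This is exactly the content of the earlier argument: one can choose $\widetilde{V}$ so that $\widetilde{G} = [\widetilde{G}_1 \mid \widetilde{V}]$ becomes generic enough that any $s \times s$ minor is nonzero (e.g., by a perturbation argument or by explicitly evaluating along a Vandermonde-like direction). Since this witness construction depends only on the algebraic structure imposed by $\widetilde{G}_1 = -\tfrac{1}{2}\widetilde{V}\widetilde{V}^T$ and not on the row count, it applies with $K$ rows just as it did with $f$ rows.

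Therefore my write-up would simply reduce Theorem~\ref{thm:prob_det_1} to Theorem~\ref{thm:prob_det} by noting the identical algebraic form, observing that the proof in Appendix~\ref{mds_theorem_proof} treats the number of rows as an arbitrary parameter bounded by $p_r$, and concluding that the full-rank-with-probability-1 property carries over to the $K \times p_r$ setting used in the in-node construction.
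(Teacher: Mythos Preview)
Your approach matches the paper's exactly: its entire proof of Theorem~\ref{thm:prob_det_1} is the single sentence ``The proof of Theorem~\ref{thm:prob_det_1} can be done similarly to that of Theorem~\ref{thm:prob_det},'' and you have spelled out precisely that reduction. One caveat worth tightening in your write-up: the argument in Appendix~\ref{mds_theorem_proof} is not quite row-count agnostic---it explicitly invokes $2n_s \le 2f \le p_r$, so the transfer requires $2K \le p_r$ rather than merely $K \le p_r$; indeed without this the statement fails outright, since $\widetilde{G}_1 = -\tfrac{1}{2}\widetilde{V}\widetilde{V}^T$ then has rank at most $p_r - K < K$ and is itself a singular square submatrix of $\widetilde{G}$.
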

The proof of Theorem \ref{thm:prob_det_1} can be done similarly to that of Theorem \ref{thm:prob_det}.

\section{Overhead Analysis of the $Q$-factor Protection Scheme}

% Finally, we show through careful analysis that the overhead of fault tolerance including encoding, failure recovery, and low-cost post-orthogonalization is negligible. 
As discussed in Section \ref{in_node}, while there have been optimal coding designs and overhead analysis for $R$-factor protection schemes, the literature for $Q$-factor protection schemes via coding theory remains nascent, as demonstrated in the summary of previous work in Table \ref{table_of_summary}.  This motivated our proposal of our novel coding scheme for $Q$-factor protection in Section \ref{sec:q_factor_protection}. Nevertheless, even if a coding scheme can provide a great failure resilience, if it has a prohibitive computational cost, it would not be a practical solution. In this section, we show through a rigorous analysis that the overhead of coding for $Q$-factor protection is negligible compared to the cost of original computation. We analyze the overhead in terms of time complexity that accounts for both communication and computation. A detailed cost model is given in the following section.

\subsection{Cost Model and MPI collective communication operations}
\label{MPI_oper}

We use the $\alpha$-$\beta$-$\gamma$ model~\cite{MPI_collective1, MPI_collective2} 
to analyze communication and computation costs:
\begin{align}\label{eq:alpha_beta_gamma}
T = \alpha C_1 + \beta C_2 + \gamma C_3 ,
\end{align}
where $C_1$ is the number of communication rounds, $C_2$ is the number of bytes communicated on the critical path, and $C_3$ is the number of floating point operations (flops). The $\alpha$ and $\beta$ terms model
the communication latency and bandwidth, respectively, and the $\gamma$ term is the computation cost.

Both parallel QR decomposition and our coding strategy are implemented via MPI collective operations, which are common communication-computation patterns in parallel computing. Our overhead analysis  depends on the costs of these operations including broadcast,  reduce and all-reduce.  We  denote by $T_{broadcast} (p, w), T_{reduce} (p, w)$ and $ T_{allreduce} (p, w) $ respectively   the cost of MPI\_Broadcast($p, w$), MPI\_Reduce($p, w$), and MPI\_Allreduce($p, w$) for $p$ processors to transfer an array of $w$ words \footnote{It can be other units, such as bytes, but then the term $\beta$ is the bandwidth per bytes. }. We consider the algorithms given in  \cite{optimal_broadcast} for broadcasts, and \cite{MPI_collective2} for reduce and all-reduce, which are optimized for communication of long messages. Illustrations and costs of these MPI operations are given in Table \ref{tab:overhead_analysis}.

\begin{table}[t]
\caption{MPI operations }
\label{tab:overhead_analysis}
\centering
\begin{tabularx}{1\textwidth} { 
  | >{\centering\arraybackslash}X 
  | >{\centering\arraybackslash}X 
  | >{\centering\arraybackslash}X | }
 \hline 
 \textbf{MPI\_Broadcast($p, w$)} & \textbf{MPI\_Reduce($p, w$)} & \textbf{MPI\_Allreduce($p, w$)} \\
 \hline
  \includegraphics[width=0.3\textwidth]{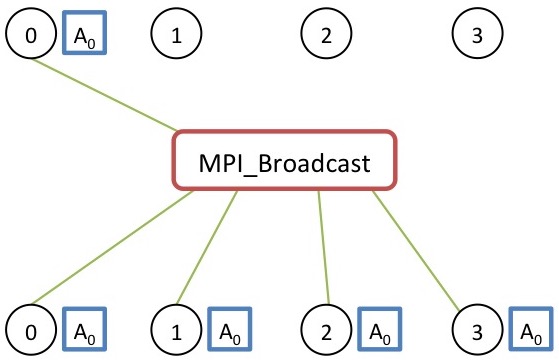} &  \includegraphics[width=0.3\textwidth]{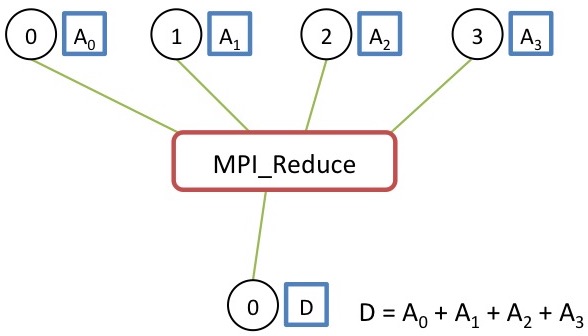}  & \includegraphics[width=0.3\textwidth]{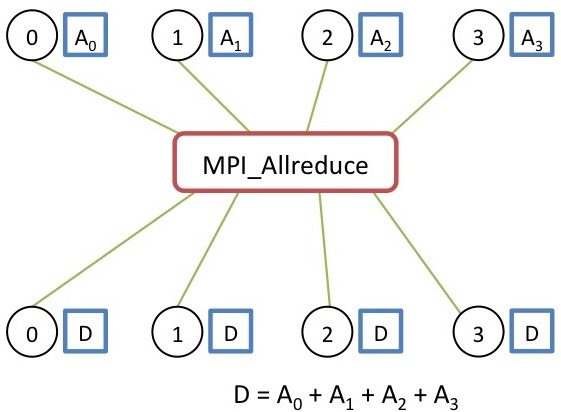} \\
\hline
$T_{broadcast} (p, w)=(\sqrt{\alpha  (\lceil \log p \rceil -1)} + \sqrt{\beta  w})^2$  &  $T_{reduce} (p, w) = 2\alpha \log p + 2\beta \frac{p-1}{p} w +\gamma \frac{p-1}{p} w$ &  $T_{allreduce} (p, w) = 2\alpha \log p + 2\beta \frac{p-1}{p} w +\gamma \frac{p-1}{p} w$  \\
\hline
\end{tabularx}
\end{table}

%%%%%%%%%%%%%%%%%%%%%%%%%%%%%%%%%%%%%%%%%%%%%%

\subsection{Overhead Analysis}
\label{overhead_analysis}
\vspace*{-.1cm}

Two types of overhead are considered: the total overhead of coding $T_{coding}$ and  the overhead for recovering   multiple-nodes failure  $T_{recov}$. The coding overhead $T_{coding}$ is modeled as:
\begin{align} \label{eq:T_coding}
T_{coding} = T_{enc}+ T_{post} +T_{comp}
\end{align}
where $T_{enc}$,  $T_{post}$ and $T_{comp}$ are the overhead for encoding, post-orthogonalization, and increased computation cost for QR factorization of the encoded matrix. We compare $T_{coding}$ with the cost $T_{QR}$ of QR factorization without coding for an $n \times n$ matrix.

% Finally, we analyze the overhead of the proposed coding strategy for $Q$-factor protection. We provide $T_\text{coding}$ in Theorem~\ref{overhead0} and $T_\text{recov}$ in Theorem~\ref{recovery_overhead} by constructing explicit computation and communication strategies for our coding scheme. Combining these two theorems, we show that the overall overhead of the proposed $Q$-factor protection strategy has negligible overhead in Theorem~\ref{thm:overhead}.
% % As the encoding and the retrieval of the uncoded output matrix  respectively are preprocessing and postprocessing of the QR decomposition, and the recovery depends mainly on the generator matrix, the following overheads $T_{enc}, T_{post}$ and $T_{recov}$ are independent of the algorithm used. They only depend on the pattern of 2D block-cyclic distribution and the generator matrix, which is illustrated by the following Theorem \ref{overhead1}.

The following Theorem, whose proof is given in Appendix \ref{overhead_theorem_main1}, summarizes the overhead analysis:

\begin{theorem}
\label{thm:overhead}
Let $T_{QR}$ be the overall cost of the the MGS algorithm~\cite{GS2} for factorizing an $n$-by-$n$ matrix on a $p_r$-by-$p_c$ processor grid. Then, the overhead of the $Q$-factor protection scheme given in Construction~\ref{const:G_formula1} is given as:
\begin{align*}
T_{coding} = O\left(\frac{f}{p_r} + \frac{f}{p_c}\right) \cdot T_{QR}, \;\;\; T_{recov} = O\left(\frac{f}{p_c}\right) \cdot T_{QR},
\end{align*}
where $f$ is the number of failures. Especially,
\begin{align}\label{eq:alpha_cost}
    T^{\alpha}_{coding} = O\left(\frac{f}{n}\right) \cdot T^{\alpha}_{QR}.
\end{align}
\end{theorem}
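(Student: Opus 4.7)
The plan is to decompose $T_{coding}$ into the three additive pieces $T_{enc}$, $T_{post}$, $T_{comp}$ as in \eqref{eq:T_coding}, bound each against $T_{QR}$ in the $\alpha$-$\beta$-$\gamma$ model, and then treat $T_{recov}$ the same way. First I would record a baseline for PBMGS on an $n \times n$ matrix on a $p_r \times p_c$ grid, whose $\alpha$-coefficient scales like $n(\log p_r + \log p_c)$, its $\beta$-coefficient like $n^2/p_r + n^2/p_c$, and its $\gamma$-coefficient like $n^3/(p_r p_c)$; these are the three scales every overhead must be compared against.

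For $T_{enc}$, the key observation is that by the Kronecker structure $G = \widetilde{G} \otimes I_{n/p_r}$ in Construction~\ref{const:G_formula1}, computing $GA$ reduces to applying the $f \times p_r$ block matrix $\widetilde{G}$ along each block-column of $A$. Each block-column can be handled in parallel across the $p_c$ grid-columns by $f$ MPI\_Reduce operations on blocks of size $(n/p_r) \times (n/p_c)$, and symmetrically for $AG_h$ along block-rows. Using Table~\ref{tab:overhead_analysis}, each reduce contributes $\beta$- and $\gamma$-terms of order $f n^2/(p_r p_c)$, i.e.\ an $O(f/p_r + f/p_c)$ fraction of the corresponding components of $T_{QR}$, while the $\alpha$-cost is only $O(f(\log p_r + \log p_c))$. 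The same template handles $T_{post}$: since $G_0$ factors as $\widetilde{G}_0 \otimes I_{n/p_r}$ with the sparsity pattern of \eqref{G0_formula1}, forming $G_0 Q_1$ reduces to broadcasting the top $f$ block-rows of $Q_1$ down each grid-column and reducing $p_r$ blocks into $f$ output blocks, while $G_0 \mathbf{b}$ is a cheap vector operation that is subsumed.

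Next I would bound $T_{comp}$ by expanding the PBMGS cost as a function of matrix dimensions. The encoded matrix $\widetilde{A}$ has size $(n+c) \times (n+d)$ with $c = fn/p_r$ and $d = fn/p_c$, so $T_{comp} = T_{QR}(n+c, n+d) - T_{QR}(n,n)$. Substituting into each component, the overhead factor expands as $(1 + f/p_r)(1 + f/p_c) - 1 = O(f/p_r + f/p_c)$ under the standing assumption $f \le \tfrac{1}{2}\min\{p_r,p_c\}$. For $T_{recov}$, failure recovery is performed column-wise: using the surviving vertical checksums, the lost blocks in each affected grid-column are reconstructed by solving a small $f \times f$ linear system, whose dominant cost is again a reduce/broadcast pattern that, summed across the $p_c$ grid-columns, yields $T_{recov} = O(f/p_c) \cdot T_{QR}$. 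The main obstacle I expect is the bookkeeping across broadcast, reduce, and all-reduce terms in the $\alpha$-$\beta$-$\gamma$ model, making sure that no hidden $\log p$ factor inflates the ratio to $T_{QR}$ beyond $O(f/p_r + f/p_c)$ and that the constants from $\widetilde{G}_0$'s dense top-$f$ block rows do not blow up.

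Finally, the refined bound $T_{coding}^\alpha = O(f/n)\, T_{QR}^\alpha$ in \eqref{eq:alpha_cost} comes from contrasting the latency terms alone. The $\alpha$-cost of PBMGS is $\Theta(n(\log p_r + \log p_c))$ because it proceeds in $\Theta(n)$ panel iterations each requiring logarithmic-cost collectives. In contrast, $T_{enc}^\alpha$, $T_{post}^\alpha$, and the extra panels contributing to $T_{comp}^\alpha$ each incur only $O(f(\log p_r + \log p_c))$ latency, since encoding and post-orthogonalization use a constant number of collectives and the encoded PBMGS adds only $c/b + d/b = O(f)$ panel iterations. Dividing the two yields the sharper $f/n$ ratio, completing the analysis.
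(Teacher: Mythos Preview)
Your decomposition into $T_{enc}+T_{post}+T_{comp}$ and the use of the Kronecker block structure to reduce encoding and post-orthogonalization to $O(f)$ reduce/broadcast operations on $(n/p_r)\times(n/p_c)$ blocks is exactly the paper's approach, and your treatment of $T_{comp}$ via the dimension-expansion factor $(1+f/p_r)(1+f/p_c)-1$ is a clean variant of the paper's Observation~\ref{overhead2}. The $\alpha$-cost argument is also right in spirit.

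Two points need correction. First, the theorem concerns only the $Q$-factor protection of Construction~\ref{const:G_formula1}, i.e.\ vertical encoding alone; there is no $AG_h$, no $d=fn/p_c$, and consequently no extra column panels. Your ``$c/b+d/b=O(f)$ extra panel iterations'' is off on both counts: adding $c$ rows never creates column panels, and here $d=0$, so in fact $T_{comp}^{\alpha}=O(1)$ (this is how the paper gets \eqref{eq:alpha_cost}). Your looser $O(f)$ still yields the stated bound, but the reasoning should be fixed.

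Second, and more important, your $T_{recov}$ argument says the reduce/broadcast cost is ``summed across the $p_c$ grid-columns.'' It is not: the $p_c$ grid-columns recover in parallel, and $T_{recov}$ is the time on the critical path, namely $f$ linear-combination reduces over $p_r$ nodes on blocks of size $n^2/P$ (plus an $O(f^3)$ local inversion). If you literally summed over $p_c$ columns you would get $O(f)\cdot T_{QR}$, not $O(f/p_c)\cdot T_{QR}$. The paper's Lemma~\ref{recovery_overhead} makes this parallelism explicit; you should replace ``summed'' with ``done in parallel, so the critical-path cost is that of a single grid-column.''
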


This Theorem shows that as $p_r$ and $p_c$ grow, the overhead of coding becomes negligible compared to $T_\text{QR}$, as we assume that $f$ is a small fixed constant. Also, \eqref{eq:alpha_cost} shows that the $\alpha$ cost, i.e., the latency cost which is usually orders of magnitude higher than others, scales as $O(\frac{1}{n})$.

% \textcolor{red}{[TODO] Move the below lemmas to the appendix.}
% \begin{lemma}
% \label{overhead0}
% Under the out-of-node checksum storage, the coding strategy in Construction \ref{const:G_formula1} can achieve the following coding overheads:
% \small{
% \begin{align}
% %T_d &=T_{all-to-all} (p_r, \frac{n^2}{P} +\frac{n}{P}) + \gamma (\frac{2n^2}{P}+\frac{2n}{P})\\
% &T_\text{enc} =    f \cdot T_\text{lin-comb}(p_r,\frac{n^2}{P}) + T_{allreduce} (p_r+f, f^2)+ \gamma f^2,  \\
% &T_\text{post} = f \cdot T_\text{broadcast}(p_r, \frac{n(n+p_c)}{P}) + f \cdot T_\text{reduce}(p_r-f+1, \frac{n(n+p_c)}{P}) +
%  \gamma  (2f-1) \frac{n(n+p_c)}{P}, \\
%  & T_\text{comp} \leq  \frac{f}{p_r} T_\text{QR}. 
% \end{align}
% }
% \end{lemma}

% \begin{lemma}
% \label{recovery_overhead} 
% Under the out-of-node checksum storage, the coding strategy in Construction \ref{const:G_formula1} can achieve the following recovery overhead:
% \begin{align}
%     & T_\text{recov} = f \cdot T_{reduce}(p_r+1, \frac{n^2}{P}) + \gamma (f_1^2  + \frac{2}{3} f_1^3+ \frac{fn^2}{P} ).
% \end{align}
% \end{lemma}

\section{Experimental Evaluation}
% \begin{itemize}
%     \item experiment configuration (machine specs, mpi library, mkl) 
%     \item error patterns (diagonal, every iteration)
%     \item any difference from our algorithm description 
% \end{itemize}

\begin{figure}[h]
    \centering
    \includegraphics[width=0.5\linewidth]{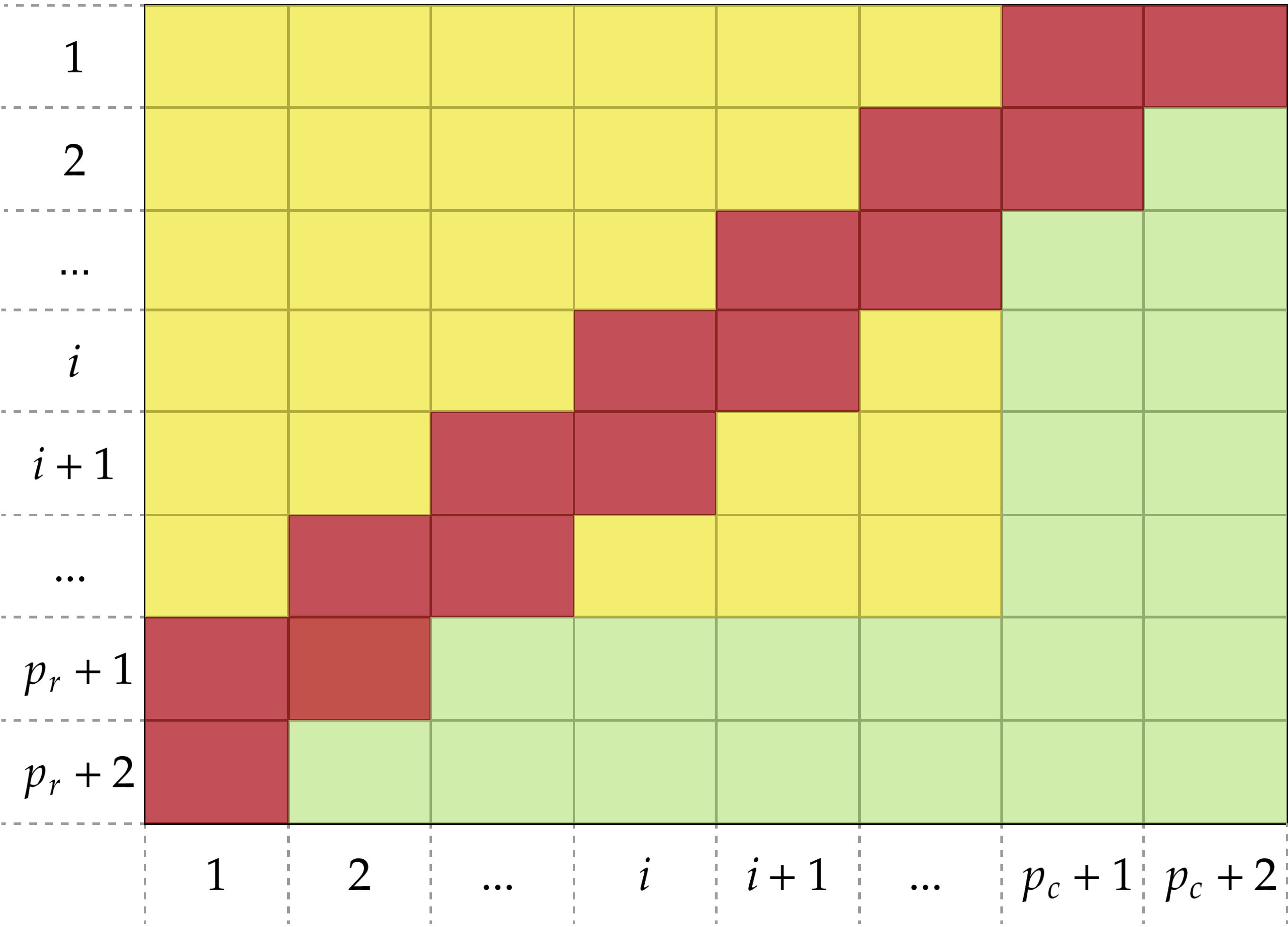}
    \caption{Reverse-diagonal failure pattern for $f=2$.}
    \label{fig:fail-pattern}
\end{figure}

In this section, we empirically demonstrate the negligible overhead of our coded computing framework for fault-tolerant QR decomposition.   

\subsection{Implementation and Experimental Setting}

For practicality, we have a minor deviation from the PBMGS (Algorithm~\ref{alg: PBMGS}) considered in Section~\ref{sec:pbmgs}. 
Specifically, we replace the ICGS algorithm therein with the MGS algorithm for further practical acceleration with negligible error, on the order of $10^{-9}$ on average, as measured by $||A-QR||_2$. Moreover, we make the following considerations in our implementation and cost measurement of the post-orthogonalization step. First, as solving either $ R\bm{x} = Q_1^T  \bm{b}$ or $ R\bm{x} = (G_0 Q_1)^T (G_0 \bm{b})$ (as done in our coded computing framework in Section~\ref{subsec:general_scheme}) requires one call of the triangular solve operation on the original problem size and does not reflect additional cost due to the coding framework, we neglect this step from the cost measurement of the time $T_{QR}$ for a full QR decomposition. Yet, we still take into account the overhead for computing $(G_0 Q_1)$ and $(G_0 \bm{b})$ in the post-orthogonalization overhead $T_{post}$. Second, for the computation of $(G_0 Q_1)$ and $(G_0 \bm{b})$, instead of deploying the optimized procedure in Appendix~\ref{overhead_theorem_main1} that utilizes the sparsity of $G_0$, we directly use the parallel matrix-matrix multiplication routine and empirically demonstrate that the total overhead remains negligible. 

All the experiments were performed on the CSC Pod cluster. Each batch used six nodes containing dual 20-core Intel 6148s, 160 GB RAM, and an OmniPath interconnect. At the software level, Intel MKL 2022.0.2 was used for BLAS and LAPACKE routines, random number generation, and memory management. Intel MPI 2021.5.1 was used for message passing, and batch jobs were submitted through SLURM.

Run-times are measured per step and do not include MPI initialization, input matrix generation, or floating point error calculations.
We report the total overhead $T_{coding}$ of our coding framework 
% as well as its breakdown ($T_{enc}, T_{post}$ and $T_{comp}$), 
and the recovery overhead $T_{recov}$ in the form of the percentage increase with respect to the total running time $T_{QR}$ of parallel QR decomposition on a $n \times n$ matrix. All the results are averaged over 10 trials\footnote{The plots also include standard deviation of every point, which may not be visible due to negligible variation.}. 

\textbf{Setups and failure modeling:} In all tests, we fix $n = 24000$ and vary the number of processors via $p=p_r=p_c \in \{6, 8, 10, 12\}$ as well as the number of tolerable failures (per row/column) $f \in \{0, 1, 2, 3\}$. Note that $f=0$ corresponds to a vanilla run of parallel QR decomposition without the coding scheme. 
For failure modeling, we inject fail-stop errors in every iteration of the PBMGS algorithm. In particular, at the start of every iteration of PBMGS, we simulate the failures of at most $f$ processors on each row and each column of the processor grid in a reverse diagonal pattern as shown in Figure~\ref{fig:fail-pattern}. This results in a total of $O(f p)$ processor failures for a given iteration. Consequently, in every iteration, the coded QR decomposition framework recovers all the failed processors via checksums before proceeding with the computation.

\begin{figure}[t]
    \centering
    \begin{subfigure}[t]{0.30\linewidth}
        \centering
        \includegraphics[width=1.0\linewidth]{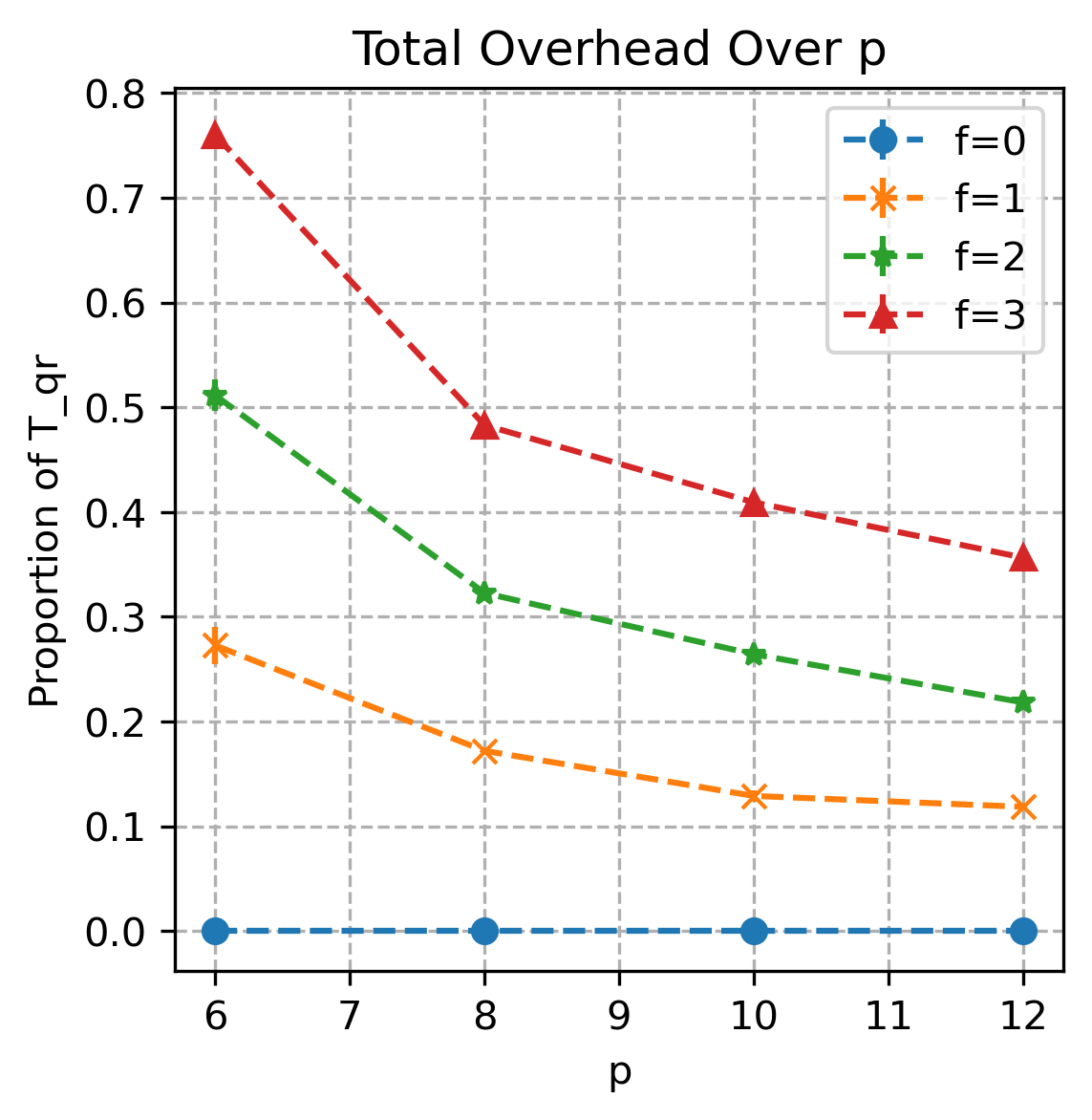}
        \caption{Strong scaling of $T_{coding}$ with $p=p_r=p_c$}
        \label{fig:a}
    \end{subfigure}
    \hfill
    \begin{subfigure}[t]{0.30\linewidth}
        \centering
        \includegraphics[ width=1.0\linewidth]{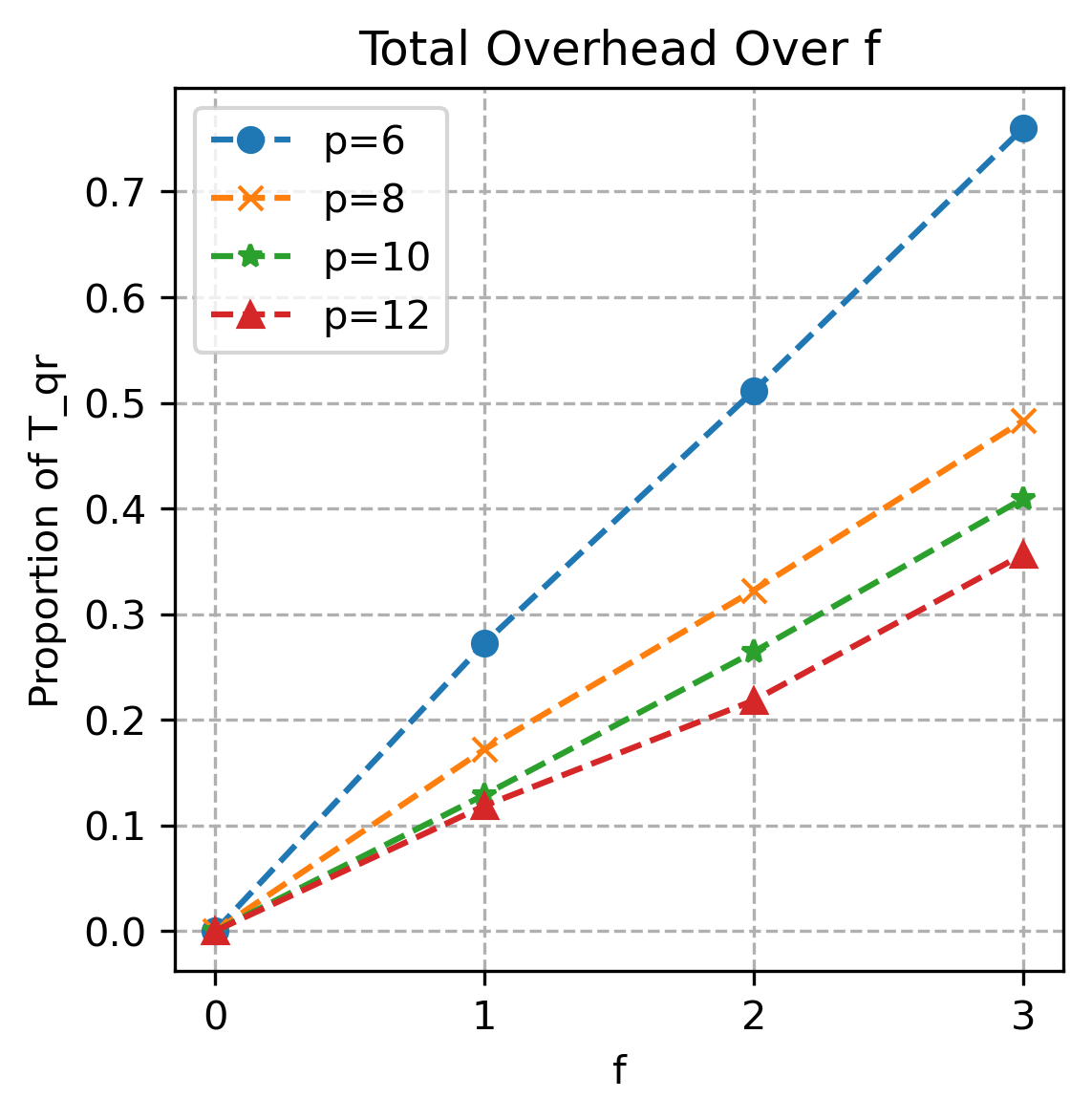}
        \caption{Linear relationship between $T_{coding}$  and~$f$.}
        \label{fig:b}
    \end{subfigure}
    \hfill
    \begin{subfigure}[t]{0.335\linewidth}
        \centering
        \includegraphics[width=1.0\linewidth]{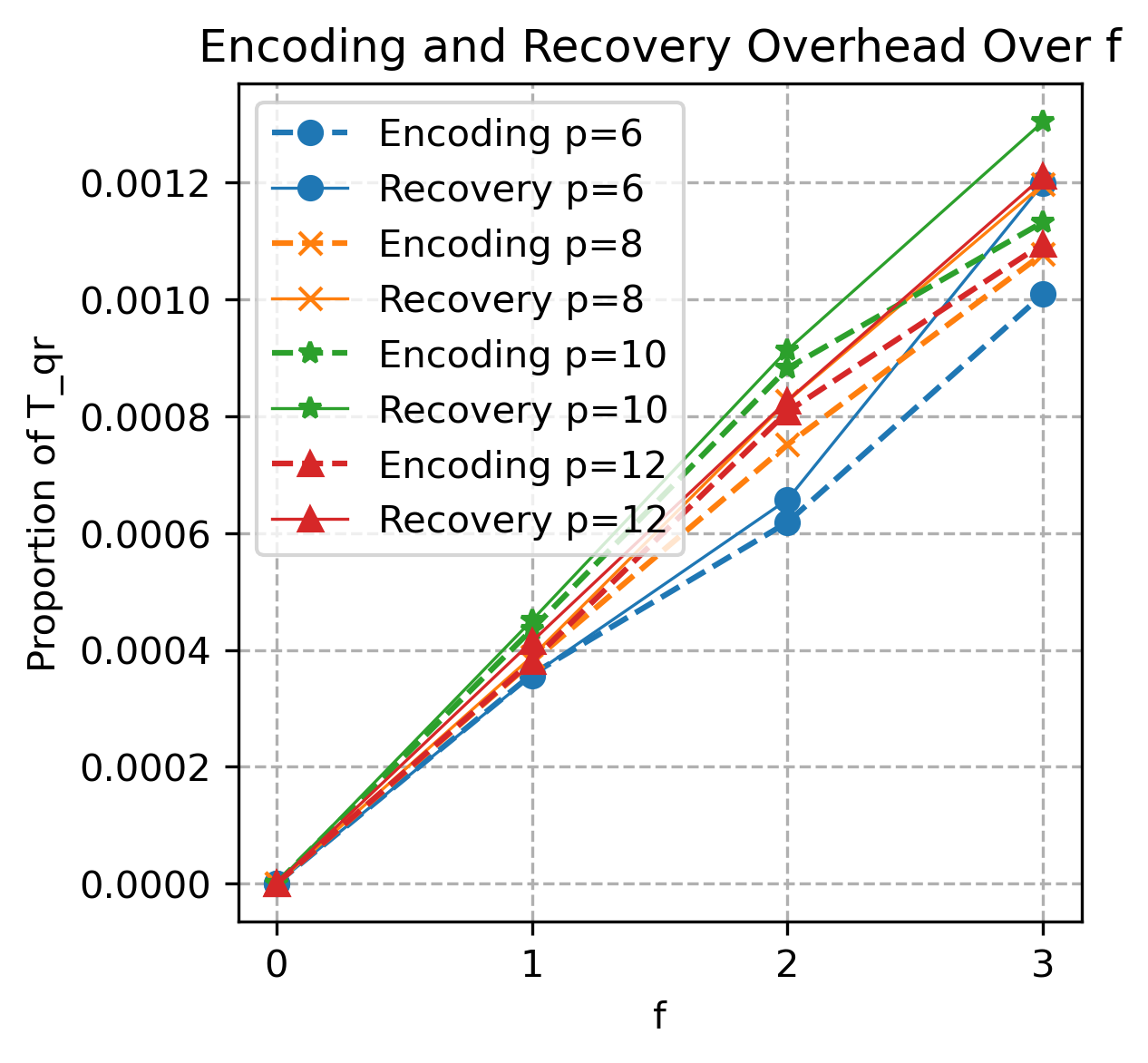}
        \caption{Linear relationship between breakdown overheads (encoding cost $T_{enc}$ and total recovery cost $p \cdot T_{recov}$)  and $f$.}
        \label{fig:c}
    \end{subfigure}
    \caption{Overhead Scaling for Coded QR Decomposition.}
    \label{fig:experiment-plots-ab}
\end{figure}

\subsection{Overhead Analysis}

We evaluate the total overhead $T_{coding}$ incurred by our coding scheme with respect to $p$ and $f$, respectively in Figure \ref{fig:a} and \ref{fig:b}. The results empirically validate that our coding overhead (taken as a proportion of the total QR decomposition cost $T_{QR}$) scales inversely with $p$ (Figure \ref{fig:a}) and linearly with $f$ (Figure \ref{fig:b}), thereby confirming Theorem \ref{thm:overhead}. Moreover, the linear trend in Figure \ref{fig:b} persists across various processor grid sizes $p$, providing evidence that the complexity of our fault tolerance mechanism scales predictably with the degree of redundancy. 

Besides requiring less amount of redundancies for fault-tolerance than checkpointing \cite{BOSILCA2009410}, ABFT is known to support more efficient recovery  \cite{colchecksum}. 
We measure the total recovery overhead of $p \cdot T_{recov} $  across all failures throughout the $p$ iterations of the QR decomposition, each of which incurs $T_{recov}$ for recovering in parallel all the failures within such iteration. We report both the total recovery overhead and the encoding overhead $T_{enc}$ in Figure \ref{fig:c} for different degrees of failures $f$ given a fixed number of processors $p$. The total overhead $p \cdot T_{recov} $ exhibits linear correlation with $f$ and boundedness independent from $p$, thereby implying the growth rate of $O(\frac{f}{p})$ with respect to $T_{QR}$ as suggested by  Theorem \ref{thm:overhead}. Moreover, we note that as the total overhead matches the encoding overhead $T_{enc}$, the actual recovery cost per failure of our coding scheme is minimal.

The empirical findings validate the efficiency of our coding framework in terms of both the negligible overhead required for maintaining fault-tolerance and recovery time, all of which result in increasing uptime for large-scale HPC systems.

\section{Conclusion and Discussion}
%It only works for square systems.. maybe also mention experimental validation?  maybe extension to other algorithms? e.g., comm-efficient householder

In this paper, we extend our prior work \cite{quangQR}  and propose a novel coding strategy for parallel QR decomposition that tolerates multiple-node failures with negligible overhead under both of the out-of-node and in-node checksum storage settings.  
% This paper is motivated by \cite{colchecksum}, and improves upon their work in two fronts. 
For $R$-factor protection, while the optimal coding schemes for out-of-node checksum storage have been presented in the literature, we improve the in-node checksum storage codes in \cite{colchecksum}, by first proving a fundamental bound and proposing a strategy that achieves such lower bound and outperforms the existing strategy~\cite{colchecksum} by $\sim 2$x. 
We inherit realistic assumptions for the HPC computing environment from \cite{colchecksum} and bring innovations from the coding-theoretic perspective.
For $Q$-factor protection, which has only relied on checkpointing-type techniques as the straightforward application of coding can break the orthogonality of the $Q$-factor, we propose a coding strategy with low-cost post-orthogonalization that can recover an orthogonal matrix at the end. 
Extensive experiments across different scales of problem sizes, number of processors, and failures validate our theories and demonstrate the negligible overhead of our coding scheme.
We believe that this work would open up exciting future directions at the interplay between practical HPC algorithms and coded computing.
% We believe that there are a lot of exciting future directions that marry practical HPC algorithms and coded computing, and our work is just one example. 
% In the paper, we study coded computing strategies for parallel QR decomposition and amount of redundancies required for  \emph{in-node checksum storage} in HPC. Specifically, while $R$ factor protection is straightforward, we design the code for $Q$ factor protection against multiple-node failures that addresses the challenge of degraded orthogonality and proves that it satisfies the MDS property with probability 1. We also derive the minimal number of checksums required for \emph{in-node checksum storage} to tolerate any $f$ failures and
% proposes a in-node systematic MDS coding scheme that meets the bound. 

% While we theoretically proved that our coding strategies incur negligible overhead compared to the cost of QR decomposition, an empirical study that measures the overhead on a large-scale HPC system can add more practical value to our work. 
Since our scope in this work is limited to full-rank square matrix $A$, coded QR decomposition for singular matrix or a rectangular matrix $A$  remains as an open question. Also, designing coding strategies for other commonly used algorithms for parallel QR decomposition, such as Householder Transformation~\cite{blockhouseholder1, TSQR_Householder} and Givens Rotation~\cite{givens_rotation2}, would be an interesting direction for future work.

% \section*{Overview}

% \textcolor{red}{[TODO] Catch repeat citations}

% \textcolor{red}{[TODO] Edit the author list and keywords}

\section*{Acknowledgements}
Use was made of computational facilities purchased with funds from the National Science Foundation (CNS-1725797) and administered by the Center for Scientific Computing (CSC). The CSC is supported by the California NanoSystems Institute and the Materials Research Science and Engineering Center (MRSEC; NSF DMR 2308708) at UC Santa Barbara.

% \section*{Acknowledgments}
% We would like to acknowledge the assistance of volunteers in putting
% together this example manuscript and supplement.

\bibliographystyle{siamplain}
\bibliography{references}

\newpage

\appendix
% \appendix

\section*{APPENDIX}

\section{The PBMGS Algorithm and Its Overhead}
\label{pbmgs_appen}

\begin{algorithm}[h]
 \setcounter{AlgoLine}{0}
\lstset{language=Pascal} 
\lstset{numbers=left, numberstyle=\tiny, stepnumber=1, numbersep=5pt}
\caption{PBMGS($A, n, n, p_r, p_c$)}
\label{alg: PBMGS}
\SetAlgoLined
\KwIn{$n \times n$ matrix $A $ distributed 2D-block cyclically over $P=p_r \times p_c$ processors. }
\KwOut{$Q,R$}
\KwResult{$A = QR$, where $Q^T Q = I$}
Q=A 

\For{$i=1$ \KwTo $n$ \textbf{Step} $b$ }{ 
	$\bar{Q} = Q[:, i:i+b-1]$
 
	$ICGS(\bar{Q}, R[i:i+b-1, i:i+b-1])$
 
	$Q[:, i:i+b-1] = \bar{Q}$
 
   \If{$i < n-b$ }{ 
   
	All-reduce: $\bar{R} = \bar{Q}^T \cdot Q[:, i+b:n]$
 
	$R[i:i+b-1, i:i+b-1] = \bar{R}$
 
	$Q[:, i+b:n] -= \bar{Q} \cdot \bar{R}$
   }
}
\end{algorithm}

We consider the parallel blocked MGS algorithm (PBMGS) in   \cite{GS2}, which is specifically designed for 2D-block cyclic distribution. However, our analysis can be adapted to other algorithms  in \cite{givens_rotation2, GS1}.
At high level,  PBMGS (Algorithm \ref{alg: PBMGS}) iterates through $p_c = \frac{n}{b}$ block-columns. At the $i^{th}$ iterations it performs the iterated classical Gram-Schmidt algorithm (ICGS) on the whole block-column, and updates the remaining part (on the right side of the block-column) of the matrix. 
%As the update after ICGS involves all-reduce operation to perform matrix multiplication, it is the dominating cost. We use this cost as the lower bound for the PBMGS algorithm's cost. 
%If $T_{ICGS} (h, w, p)$ is the cost of the ICGS algorithm for QR-decomposing $h\times w$ matrix with $p$ vertical processors, then $T_{QR} \geq \frac{n}{b} T_{ICGS} (m, b, p_r)$

The ICGS algorithm is essentially the classical Gram-Schmidt (CGS), except that in each iteration ICGS further reorthogonalizes for one more time, depending on an easy-to-compute criterion \cite{criterion1}. The cost of ICGS algorithm is thus lower bounded by the cost of CGS algorithm (Algorithm \ref{alg: CGS}). To formally analyze the overhead of the PBMGS routine, we denote respectively by:

\begin{itemize}
    \item $T_{CGS}(A, n , b, p_r)$  the cost of performing CGS algorithm on the $n \times b$ matrix $A$ distributed 1D-block cyclically over $p_r$ processors,
\item and $T_{PMGS}(A, n , n, p_r, p_c)$  the cost of performing PMGSGS algorithm on the $n \times n$ matrix $A$ distributed 2D-block cyclically over $P=p_r \times p_c$ processors.
\end{itemize}

% If $T_{CGS} (h, w, p)$ is the cost of the CGS algorithm for QR-decomposing $h\times w$ matrix with $p$ vertical processors, then $T_{ICGS} (m, b, p_r) \geq T_{CGS} (m, b, p_r)$. 

%Therefore, $T_{QR} \geq \frac{n}{b} T_{CGS} (m, b, p_r) $. 

%We consider the sequential version of CGS in Algorithm \ref{alg: nCGS}, and further extend it to the parallel CGS in  Algorithm \ref{alg: CGS}. The cost  $ T_{CGS} (m, b, p_r)$ is formulated based on Algorithm \ref{alg: CGS}. 

In the $i^{th}$ iteration of algorithm $CGS(A, n , b, p_r)$, the computation of the $(b-i+1) \times 1$ vector $\bm{v}$ requires the local matrix-vector multiplication on each of the $p_r$ processors incurring $\gamma 2 \frac{n}{p_r} \cdot (b-i+1) $, and one MPI\_Reduce call with overhead $T_{reduce} (p_r, b-i+1)$. Summing up over $b$ iterations,  the cost of the CGS algorithm can be lower bounded by:
\begin{flalign*}
T_{CGS}(A, n , b, p_r)  & \geq \sum^{b}_{i=1}[ T_{reduce} (p_r, b-i+1) + \gamma 2 \frac{n}{p_r}  (b-i+1) ] \\
& = \sum^{b}_{i=1}[ 2\alpha \log p_r  +\beta \frac{p_r-1}{p_r} (b-i+1) \\
&+\gamma \frac{p_r-1}{p_r} (b-i+1)]+ \gamma \frac{nb(b+1)}{p_r}. 
\end{flalign*}
Hence,  
\begin{flalign}
T_{CGS}(A, n , b, p_r) &\geq 2\alpha b\log p_r   + \gamma \frac{nb(b+1)}{p_r}. \label{CGSbound1}
\end{flalign}

In the $i^{th}$ iteration of algorithm $PBMGS(A, n , n, p_r, p_c)$, the computation of the $b\times (n-i-b+1)$  matrix $\bar{R}$ requires the following main steps (we do not list all steps ) and communication patterns:
\begin{enumerate}
\item ICGS algorithm to update $\bar{Q}$, incurring $\geq T_{CGS}(A, n , b, p_r)$ overhead. 
\item Horizontal broadcast of $\bar{Q}$ via MPI\_Broadcast (in preparation for next step), incurring $T_{broadcast}(p_c, \frac{n}{p_r} b)$ overhead.
\item  Local matrix-matrix  multiplication:  each of the $p_r$ processors prepares its local inner-products through a
matrix-matrix multiplication of the local parts of $\bar{Q}$ (transposed) with its local parts of $Q$. This incurs $\gamma 2 b \frac{n}{p_r} \cdot \frac{n-i-b+1}{p_c} =\gamma 2 \frac{n}{P} b(n-i-b+1)$ overhead. 
\item MPI\_Reduce call with overhead $T_{reduce} (p_r, b(n-i-b+1))$ over $p_r$ vertical processors on each block-column. (Note: The local matrix-matrix multiplication in step 3 results in a $b\times (n-i-b+1)$ submatrix.)
\end{enumerate}

Summing up over all $\frac{n}{b}$ iterations (with step $b$), the cost of the PBMGS algorithm can be lower bounded by:
\begin{align*}
&T_{PBMGS}(A, n , n, p_r, p_c) \\
&\geq \frac{n}{b} T_{CGS}(A, n , b, p_r) + \frac{n}{b}  T_{broadcast}(p_c, \frac{n}{p_r} b) \\
&+ \sum^{\frac{n}{b}-1}_{\substack{j=0 \\ i =1+b j}}[T_{reduce} (p_r, b(n-i-b+1)) + \gamma 2 \frac{n}{P} b(n-i-b+1)]  \\
& \geq  \frac{n}{b}[2\alpha b\log p_r   + \gamma \frac{nb(b+1)}{p_r}] +\frac{n}{b} (\alpha \log p_c +\beta \frac{p_c-1}{p_c} \frac{n}{p_r} b) \\
&+ \sum^{\frac{n}{b}-1}_{\substack{j=0 \\ i =1+b j}}[\gamma 2 \frac{n}{P} b(n-i-b+1)]   \text{   (by (\ref{CGSbound1}))}\\
& \geq [2\alpha n \log p_r + \gamma \frac{n^2(b+1)}{p_r} ]+[\beta \frac{(p_c-1)n^2}{P}] \\
&+ \gamma (\frac{n^3}{P} -\frac{n^2b}{P} ) \\
&= 2\alpha n \log p_r +\beta \frac{(p_c-1)n^2}{P} + \gamma (\frac{n^3}{P} +\frac{n^2(b+1)p_c}{P}-\frac{n^2b}{P} ) \\
&\geq  2\alpha n \log p_r +\beta \frac{1}{2}\frac{p_cn(n+1)}{P} + \gamma \frac{n^2(n+1)}{P}.
\end{align*}
Therefore,
\begin{align}
\label{QR_bound1}
T_{QR} \geq  2\alpha n \log p_r +\beta \frac{1}{2}\frac{p_cn(n+1)}{P} + \gamma \frac{n^2(n+1)}{P}.
\end{align}

\begin{algorithm}[h]
 \setcounter{AlgoLine}{0}
\lstset{language=Pascal} 
\lstset{numbers=left, numberstyle=\tiny, stepnumber=1, numbersep=5pt}
\caption{CGS($A, n, b, p_r$)}
\label{alg: CGS}
\SetAlgoLined
\KwIn{$n \times b$ matrix $A $ distributed 1D-block cyclically over $p_r$ processors. }
\KwOut{$Q,R$}
\KwResult{$A = QR$, where $Q^T Q = I$}
Q=A 

\For{$i=1$ \KwTo $b$ }{ 
	$\bm{q} = Q[:, i]$ 
 
   All-reduce: vector	$\bm{v} = Q[:, i:b]^T \cdot Q[:,i]$
   
	$\bm{q}= \bm{q}/\sqrt{\bm{v}[1]}$
 
	$Q[:, i] = \bm{q}$ 
 
   	   	$R[i:, i:b] = \bm{v}^T$
        
   \If{$i < b$ }{ 
   	$Q[:, i+1:b] -= \bm{q} \cdot \bm{v}[i+1:b]^T$

   }
}
\end{algorithm}

\begin{observation}
\label{overhead2}
For the MGS algorithm \cite{GS2}, the overhead $T_{comp}$ for running QR decomposition on the larger $(m+c) \times n$ encoded matrix $\widetilde{A}$ instead of $m \times n$ matrix $A$ is bounded by:
\begin{align}
T_{comp} &\leq  \frac{c}{m} \cdot T_{QR}
\end{align}
\textit{Note:} We hereby derive the generalized result for $m \times n$ matrix A, and later set $m=n$ in our overhead analysis. 
\end{observation}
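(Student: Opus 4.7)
The plan is to compare the cost $T_{PBMGS}(\widetilde{A}, m+c, n, p_r, p_c)$ of running PBMGS on the encoded $(m+c)\times n$ matrix against the cost $T_{PBMGS}(A, m, n, p_r, p_c)$ on the original $m \times n$ matrix, and show that the difference $T_{comp}$ is at most a $c/m$ fraction of the latter. The key structural observation is that in the cost decomposition of PBMGS already carried out in this appendix (the inequality leading to \eqref{QR_bound1}), every term that depends on the row dimension enters linearly through factors of the form $m/p_r$ or $m/P$ (coming from the local row count per processor), while the remaining terms (latency $\alpha$-terms involving $\log p_r$ or $\log p_c$ and the $b$-only terms) do not scale with the row count at all.

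First, I would reread the PBMGS cost breakdown step by step and annotate each contribution by its dependence on the row dimension: the ICGS/CGS step contributes a $\gamma\,n b(b+1)/p_r$-type term; the horizontal broadcast of $\bar Q$ contributes $\beta\,(p_c-1)/p_c\cdot (n/p_r)b$-type bandwidth cost; and the local matrix-matrix multiplication followed by reduce contributes $\gamma\,(n/P)\, b(n-i-b+1)$. Replacing the row dimension by $m+c$ everywhere this dimension appears yields the cost on $\widetilde A$; subtracting the original cost gives a quantity $\Delta$ in which every surviving term carries an extra factor of $c$ in place of the row dimension factor $m$. Thus one can write $\Delta = (c/m)\cdot S$, where $S$ is exactly the row-dimension-dependent sub-sum of the PBMGS cost.

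Next I would argue $S \le T_{PBMGS}(A, m, n, p_r, p_c)$ by dropping the nonnegative terms of the total cost that do \emph{not} scale with $m$ (the $\alpha\,n\log p_r$ latency term in particular, and any purely $b$-dependent pieces). This yields
\begin{equation*}
T_{comp} \;=\; T_{PBMGS}(\widetilde{A}, m+c, n, p_r, p_c) - T_{PBMGS}(A, m, n, p_r, p_c) \;=\; \frac{c}{m}\, S \;\le\; \frac{c}{m}\, T_{QR},
\end{equation*}
which is the desired bound. Specializing to $m=n$ recovers the form used in the main overhead theorem.

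I expect the main obstacle to be bookkeeping rather than any conceptual hurdle: one must carefully verify that \emph{every} step of PBMGS whose cost actually scales with the number of rows (including the implicit updates $Q[:,i{+}b:n] \mathrel{-}= \bar Q\cdot \bar R$ and the internal reorthogonalization pass inside ICGS that CGS lower-bounds) picks up only a multiplicative $(m+c)/m$ factor, and that no cross-term appears which scales with $c$ but not with $m$. Once that inventory is complete, the comparison $\Delta \le (c/m)\,T_{QR}$ follows by monotonicity of each $\gamma$- and $\beta$-term and by discarding the row-independent $\alpha$-contributions on the right-hand side.
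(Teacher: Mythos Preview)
Your approach is correct and captures the same key idea as the paper: the PBMGS cost is affine in the row dimension $m$ (the $\beta$- and $\gamma$-terms scale through the local row count $m/p_r$ or $m/P$, while the $\alpha$-latency terms are row-independent), so the increment from $m$ to $m+c$ is exactly $c/m$ times the $m$-dependent part, which is at most $(c/m)\,T_{QR}$. The paper's own proof is considerably less granular---rather than doing the term-by-term subtraction you outline, it simply writes down the asymptotic cost $T_{QR} = O(\alpha\, n\log P + \beta\, mn^2/P + \gamma\, mn^2/P)$, observes that this is at most linear in $m$, and concludes the bound from that; your row-dependent sub-sum $S$ and the inequality $S \le T_{QR}$ make explicit what the paper leaves implicit, and your remark that the $\alpha$-term is unaffected is exactly the content of the paper's Remark~\ref{no_latency1}.
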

\begin{proof}[Proof of Observation \ref{overhead2}]
As any algorithm in \cite{givens_rotation2, GS1, GS2} iterates from left to right over the width of the matrix, there are $O(n)$ such iterations. In each iteration, all the mentioned algorithms use some form of MPI call (in Appendix \ref{MPI_oper}) excluding, MPI\_Alltoall. In particular, MPI\_Broadcast is required to communicate the data blocks, or MPI\_Reduce/MPI\_Allreduce is used to compute inner-products or matrix-products. As each processor owns $\frac{mn}{P}$ data, the data transferred in any such MPI call does not exceed $O(\frac{mn}{P})$. Any such MPI call (excluding MPI\_Alltoall) by Appendix \ref{MPI_oper} would incur $O(\alpha \log P )+ O(\beta \frac{mn}{P})$ communication. In terms of computation, all these efficient algorithms do not exceed $\gamma O(\frac{mn}{P})$ overhead. Therefore, summing up over $O(n)$ iterations, we can get the asymptotical cost for $T_{QR}$ as:
\begin{align}
\label{asym1}
T_{QR} = O(\alpha n \log P + \beta \frac{mn^2}{P} +\gamma \frac{mn^2}{P})
\end{align}

As $m$ increases at most linearly with the asymptotical cost of $T_{QR}$, the overhead $T_{comp}$ for running QR decomposition on some $(m+c) \times n$ matrix against some $m \times n$ matrix can be bounded by:
\begin{align*}
T_{comp} \leq \frac{c}{m} T_{QR}
\end{align*}

\begin{remark}
\label{no_latency1} 
In (\ref{asym1}), we also note that $T^{\alpha}_{comp} = O(1)$, as the increase of $m$ does not affect the $\alpha$ term. 
%Moreover, the asymptotical cost in (\ref{asym1}) gives $T^{\beta}_{comp} = O(\frac{cn^2}{P})$ and $T^{\gamma}_{comp} = O(\frac{cn^2}{P})$ .
\end{remark}
\end{proof}

\section{Checksum-preservation}
\label{appendixA:checksum_preserving}

%In this section, we discuss how horizontal and vertical checksums are respectively preserved as checksums for $R$-factor and $Q$-factor protection in MGS. We also consider BMGS, an alternative of MGS, as its parallel version PBMGS can better utilize the Level-3 operations in HPC \cite{criterion1} and thus be used in practice.  

We now proceed to prove the checksum-preservation, equations \eqref{eq:2way_checksum1} and \eqref{eq:2way_checksum2} in Lemma \ref{lem:2way_checksum}, for sequential MGS (Appendix \ref{appendixA:seqMGS}) and parallel MGS (Appendix \ref{appendixA:parMGS}) respectively. Then the proof of Corollary \ref{corol: final_form} is given in Appendix \ref{appen_corol_preservation1}.

\subsection{Checksum-preservation for sequential MGS}
\label{appendixA:seqMGS}

We first consider the columns of the encoded matrix $\widetilde{A}$:
\begin{align*}
&\widetilde{A} = \begin{bmatrix} \bm{\widetilde{a}_1} & \bm{\widetilde{a}_2} & \cdots & \bm{\widetilde{a}_{n+d}} \end{bmatrix} \text{, where $\bm{\widetilde{a}_i} = \begin{bmatrix} \bm{a_i} \\ G_v \bm{a_i}  \end{bmatrix} $} ,
\end{align*}
and the columns of the final $Q$-factor $Q^{(t)}$ at the $t^{th}$ iteration for $t=1 \rightarrow T$:
\begin{align*}
&Q^{(t)}= \begin{bmatrix} \bm{q_1}^{(t)} & \bm{q_2}^{(t)} & \cdots & \bm{q_{n+d}}^{(t)} \end{bmatrix},\\
&Q_1^{(t)}= \begin{bmatrix} \bm{q_{11}}^{(t)} & \bm{q_{12}}^{(t)} & \cdots & \bm{q_{1(n+d)}}^{(t)} \end{bmatrix},\\
&Q_2^{(t)}= \begin{bmatrix} \bm{q_{21}}^{(t)} & \bm{q_{22}}^{(t)} & \cdots & \bm{q_{2(n+d)}}^{(t)} \end{bmatrix},\\
&\text{where $\bm{q_i}^{(t)}=\begin{bmatrix} \bm{q_{1i}}^{(t)} \\ \bm{q_{2i}}^{(t)}  \end{bmatrix} $.}
\end{align*} 
For completeness, we now present the sequential MGS algorithm on $\widetilde{A}$ in Algorithm \ref{appendixA:alg: sequential_mgs}.
\begin{algorithm}
\setcounter{AlgoLine}{0}
\small
\lstset{language=Pascal} 
\lstset{numbers=left, numberstyle=\tiny, stepnumber=1, numbersep=5pt}
\caption{Sequential MGS}
\SetAlgoLined
\label{appendixA:alg: sequential_mgs}
\KwIn{$\widetilde{A} = \begin{bmatrix} \bm{\widetilde{a}_1} & \bm{\widetilde{a}_2} & \cdots & \bm{\widetilde{a}_{n+d}} \end{bmatrix}$ is $(n+c) \times (n+d)$ matrix}
\KwOut{$Q = \begin{bmatrix}\bm{q_1} & \bm{q_2} & \dots & \bm{q_{n+d}}\end{bmatrix}$ is $(n+c) \times (n+d)$  matrix  , and $R = (r_{ij})$ is $(n+d) \times (n+d)$  matrix}
\KwResult{$\widetilde{A} = QR$, where $Q^T Q = I$}
\For{$i=1$ \KwTo $n$ }{ 
    $\bm{u_i} = \bm{\widetilde{a}_i}$
    
    }
\For{$i=1$ \KwTo $n+d$ }{ 
   $r_{ii} = || \bm{u_i}||_2$ 
   
   $\bm{q_i} = \bm{u_i}/r_{ii} $
   
   \For{$j=i+1$ \KwTo $n+d$ }{ 
   
   	$r_{ij} = \bm{q_i}^T \bm{u_j}$ 
    
   	$\bm{u_{j}} = \bm{u_{j}} - r_{ij}\bm{q_i}$
   }
}
\end{algorithm}

In the Algorithm \ref{appendixA:alg: sequential_mgs}, we use the temporary $(n+c) \times (n+d)$ matrix $U = \begin{bmatrix} \bm{u_1} & \bm{u_2} & \cdots & \bm{u_{n+d}} \end{bmatrix}$. For analysis, we consider its value $U^{(t)}$ at the end of each iteration $t=1 \rightarrow T$. $U$ is initialized to $U^{(0)}= \widetilde{A}$ by the loop from line 1 to line 3. We further consider the breakdowns of the intermediate matrices as follows:
\begin{align*}
&U^{(t)}=\begin{bmatrix}U^{(t)}_1 \\ U^{(t)}_2 \end{bmatrix} 
= \begin{bmatrix} \bm{u_1} & \bm{u_2} & \cdots & \bm{u_{n+d}} \end{bmatrix}  \text{,  where $U^{(t)}_1: n \times (n+d)$, $U^{(t)}_2: c \times (n+d)$,}\\ &U^{(t)}_1=\begin{bmatrix} \bm{u_{11}}^{(t)} & \bm{u_{12}}^{(t)} & \cdots & \bm{u_{1(n+d)}}^{(t)} \end{bmatrix},\\
&U^{(t)}_2=\begin{bmatrix} \bm{u_{21}}^{(t)} & \bm{u_{22}}^{(t)} & \cdots & \bm{u_{2(n+d)}}^{(t)}
\end{bmatrix} \text{, where $\bm{u_i}^{(t)} = \begin{bmatrix}\bm{u_{1i}}^{(t)} \\ \bm{u_{2i}}^{(t)} \end{bmatrix}$}, \\
&R^{(t)} = \begin{bmatrix} R^{(t)}_1 & R^{(t)}_2\end{bmatrix}= (r^{(t)}_{ij}) = \begin{bmatrix} \bm{r_1}^{(t)} \\ \vdots \\ \bm{r_{n+c}}^{(t)} \end{bmatrix}, \\
&R^{(t)}_1= \begin{bmatrix} \bm{\bar{r}_{11}}^{(t)} \\ \vdots \\ \bm{\bar{r}_{(n+c)1}}^{(t)} \end{bmatrix} , 
R^{(t)}_2= \begin{bmatrix} \bm{\bar{r}_{12}}^{(t)} \\ \vdots \\ \bm{\bar{r}_{(n+c)2}}^{(t)} \end{bmatrix} \text{, where $\bm{r_i}^{(t)} = \begin{bmatrix} \bm{r_{i1}}^{(t)} & \bm{r_{i2}}^{(t)}\end{bmatrix}$}.
\end{align*}
For the sequential MGS (Algorithm \ref{appendixA:alg: sequential_mgs}), there are $T=n+d$ iterations in the main loop from line 4 to line 11. 
We proceed to prove by induction that for $t=0 \rightarrow T$, we have  $U^{(t)}_2 = G_v U^{(t)}_1$, $Q_2^{(t)}= G_v Q_1^{(t)}$ and $R_2^{(t)}=  R_1^{(t)} G_h$.

\textbf{Base case:} For $t=0$:\\
As $Q$ and $U$ are initialized to $Q^{(0)}= U^{(0)}=\widetilde{A}=\begin{bmatrix}
	 A & A G_h\\
	G_v A & G_v A G_h
\end{bmatrix}$, we have $Q_2^{(0)}= G_v Q_1^{(0)}$ and  $U_2^{(0)}= G_v U_1^{(0)}$. As $R$ is initialized to $R^{(0)} = 0$, we have $R_2^{(0)}=  R_1^{(0)} G_h=0$.

\textbf{Inductive step:} For the inductive step from $t-1$ to $t$ ($t\geq 1$): \\ 
Only the column $\bm{q_t}^{(t)}$ of $Q^{(t)}$ is updated on line 6:
\begin{align*}
    \bm{q_t}^{(t)} &= \bm{u_t}^{(t-1)}/r_{tt}=  \frac{1}{r_{tt}} \begin{bmatrix} \bm{u_{1t}}^{(t-1)} \\ \bm{u_{2t}}^{(t-1)}  \end{bmatrix} \\
    &=\frac{1}{r_{tt}} \begin{bmatrix} \bm{u_{1t}}^{(t-1)} \\ G_v \bm{u_{1t}}^{(t-1)}  \end{bmatrix} \text{ (as $U_2^{(t-1)}= G_v U_1^{(t-1)}$)}\\
    &=  \begin{bmatrix} \bm{u_{1t}}^{(t-1)}/r_{tt} \\ G_v ( \bm{u_{1t}}^{(t-1)}/r_{tt})  \end{bmatrix} = \begin{bmatrix} \bm{q_{1t}}^{(t)} \\ \bm{q_{2t}}^{(t)}  \end{bmatrix}
\end{align*}
We thus obtain: $\bm{q_{2t}}^{(t)} = G_v \bm{q_{1t}}^{(t)}$. Other columns of $Q$ remain the same: $\bm{q_{j}}^{(t)}= \bm{q_{j}}^{(t-1)}$, so $\bm{q_{2j}}^{(t)} = G \bm{q_{1j}}^{(t)}$ for all other $j\neq t$. We conclude that $Q_2^{(t)}= G_v Q_1^{(t)}$.\\
For the matrix $U$, there are $n-t$ columns $\bm{u_j}$ updated ( $j=t+1 \rightarrow n$) from line 7 to line 9: 
\begin{align*}
    \bm{u_j}^{(t)} &= \bm{u_j}^{(t-1)} - r_{tj} \bm{q_j}^{(t)}
    =\begin{bmatrix} \bm{u_{1j}}^{(t-1)} \\ \bm{u_{2j}}^{(t-1)}  \end{bmatrix} - r_{tj} \begin{bmatrix} \bm{q_{1j}}^{(t)}\\ \bm{q_{2j}}^{(t)}  \end{bmatrix}\\
    &=\begin{bmatrix} \bm{u_{1j}}^{(t-1)} \\ G_v \bm{u_{1j}}^{(t-1)} \end{bmatrix} - r_{tj} \begin{bmatrix} \bm{q_{1j}}^{(t)} \\ G_v \bm{q_{1j}}^{(t)} \end{bmatrix}\\
    &\text{ (as $U_2^{(t-1)}= G_v U_1^{(t-1)}$ and $Q_2^{(t)}= G_v Q_1^{(t)}$)}\\
    &= \begin{bmatrix} \bm{u_{1j}}^{(t-1)} - r_{tj}\bm{q_{1j}}^{(t)}   \\ G_v ( \bm{u_{1j}}^{(t-1)} - r_{tj}\bm{q_{1j}}^{(t)} ) \end{bmatrix} 
    = \begin{bmatrix} \bm{u_{1j}}^{(t)} \\ \bm{u_{2j}}^{(t)}  \end{bmatrix}
\end{align*}
We thus obtain: $\bm{u_{2t}}^{(t)} = G_v \bm{u_{1t}}^{(t)}$. Other columns of $U$ remain the same: $\bm{u_j}^{(t)}= \bm{u_j}^{(t-1)}$, so $\bm{u_{2j}}^{(t)} = G_v \bm{u_{1j}}^{(t)}$ for all other $j\leq t$. We conclude that $U_2^{(t)}= G_v U_1^{(t)}$.\\
Now, it is left to show that $R_2^{(t)}=  R_1^{(t)} G_h$. We note that only the row $\bm{r_t}^{(t)}$ of $R^{(t)}$ is updated on line 8, and have the following property.

\begin{observation}
\label{seqMGS_preserve}
At the end of iteration $t$, we have:
\begin{align}
\label{R_equation1}
\bm{r_t}^{(t)} &= \bm{q_t}^T U^{(t)} = \bm{q_t}^T \widetilde{A}
\end{align}
Note that here $\bm{q_t}$ refers to $\bm{q_t}^{(t)}$. However, as the value of $\bm{q_t}$ is last updated in the $t^{th}$ iteration and becomes the final value afterward, we neglect the superscript $(t)$ for simplicity.
\end{observation}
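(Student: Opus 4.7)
The plan is to split the chain $\bm{r_t}^{(t)} = \bm{q_t}^T U^{(t)} = \bm{q_t}^T \widetilde{A}$ into two equalities and attack them separately. Throughout, I will read the middle $U^{(t)}$ as the state of $U$ at the moment Line~8 reads $\bm{u_j}$ in iteration $t$ --- equivalently $U^{(t-1)}$ in the end-of-iteration convention fixed earlier in the appendix --- since in iteration $t$ no $\bm{u_j}$ with $j>t$ has yet been updated when Line~8 fires.

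The first equality then reduces to the bookkeeping of Algorithm~\ref{appendixA:alg: sequential_mgs}. For $j>t$, the assignment $r_{tj}^{(t)} \leftarrow \bm{q_t}^T \bm{u_j}^{(t-1)}$ on Line~8 is a literal definition. For $j=t$, combining $\bm{q_t} = \bm{u_t}^{(t-1)}/r_{tt}$ with $r_{tt}=\|\bm{u_t}^{(t-1)}\|_2$ gives $\bm{q_t}^T \bm{u_t}^{(t-1)}=r_{tt}$. For $j<t$, both sides are zero: $r_{tj}^{(t)}=0$ since $R$ is upper triangular, and $\bm{q_t}^T \bm{u_j}^{(t-1)}=0$ once MGS orthogonality is in hand, because $\bm{u_j}$ is frozen after iteration $j$ at the scalar multiple $r_{jj}\bm{q_j}$ of $\bm{q_j}$.

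For the second equality, my plan is to prove by induction on $s$ the update invariant
\begin{equation*}
\bm{u_j}^{(s)} \;=\; \widetilde{a}_j \;-\; \sum_{i=1}^{\min(s,\,j-1)} r_{ij}^{(i)}\,\bm{q_i},
\end{equation*}
which drops out immediately from the initialization $\bm{u_j}^{(0)} = \widetilde{a}_j$ and the Line~9 update rule $\bm{u_j} \leftarrow \bm{u_j} - r_{ij}\bm{q_i}$. Applying $\bm{q_t}^T$ to this identity at $s=t-1$ and invoking orthogonality $\bm{q_t}^T \bm{q_i}=0$ for $i<t$ collapses the sum, leaving $\bm{q_t}^T \bm{u_j}^{(t-1)} = \bm{q_t}^T \widetilde{a}_j$ for all $j$, which is exactly the third side of the chain.

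The nontrivial ingredient, and the main obstacle, is that this orthogonality is not a background fact here but must be proved in tandem with the $\bm{u}$-invariant. I plan to run a joint induction on $t$: assuming $\{\bm{q_1},\ldots,\bm{q_{t-1}}\}$ is orthonormal and that the invariant holds through stage $t-1$, expand $\bm{q_i}^T \bm{q_t} = \bm{q_i}^T \bm{u_t}^{(t-1)}/r_{tt}$ by applying the invariant twice --- once at stage $t-1$ to rewrite $\bm{u_t}^{(t-1)}$ in terms of $\widetilde{a}_t$ and the earlier $\bm{q_k}$'s, and once at stage $i-1$ to identify the coefficient $r_{it}^{(i)}$ originally written on Line~8 with $\bm{q_i}^T \widetilde{a}_t$. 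The two contributions cancel, giving $\bm{q_i}^T \bm{q_t}=0$, closing the induction and simultaneously furnishing the orthogonality invoked in both equalities of the observation.
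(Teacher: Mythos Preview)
Your argument follows the same two-step structure as the paper's own proof: split the chain into the two equalities, handle the first via the Line~5/Line~8 bookkeeping together with orthogonality for $j<t$, and handle the second via the invariant $\bm{u_j}^{(s)} = \widetilde{a}_j - \sum_{i} r_{ij}\,\bm{q_i}$ followed by an application of orthogonality to kill the sum. You are in fact more careful than the paper on two points: you correctly flag that the middle term must be read as $U^{(t-1)}$ (the paper writes $\bm{u_j}^{(t)}$ throughout while effectively using the pre-update value, and its formula $\bm{u_i}^{(t)} = \widetilde{a}_i - \sum_{l=1}^{\min(i,t)-1} r_{lj}\bm{q_l}$ is off by one for $i>t$ under the end-of-iteration convention), and you establish the needed orthogonality $\bm{q_t}^T\bm{q_i}=0$ inside the induction rather than simply invoking ``$Q$ is orthogonal'' as the paper does.
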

\begin{proof}
We first prove that $\bm{r_t}^{(t)} = \bm{q_t}^T U^{(t)}$. By line 5 and 8, $r^{(t)}_{tj} = \bm{q_t}^T \bm{u_j}^{(t)}$ for $j \geq t$, so it is left to show that $r^{(t)}_{tj} = \bm{q_t}^T \bm{u_j}^{(t)}$ also for $j < t$. We note that for $j < t$:
\begin{itemize}
    \item $r^{(t)}_{tj} = 0$ (as $r^{(t)}_{tj}$ is not updated by the algorithm and thus maintains the initial value $0$).
    \item $\bm{q_j}^{(j)} = \bm{u_j}^{(j)}/ r_{jj} $, updated on line 8 in the previous $j^{th}$ iteration.
    \item $\bm{q_j}^{(t)}= \bm{q_j}^{(j)}= \bm{q_j}$, as $\bm{q_j}^{(j)}$ is  last updated in the  $j^{th}$ iteration, and becomes the final value of $\bm{q_j}$. 
    \item $\bm{u_j}^{(t)}= \bm{u_j}^{(j)}= \bm{u_j}$, as $\bm{u_j}^{(j)}$ is  last updated in the  $j^{th}$ iteration, and becomes the final value of $\bm{u_j}$.
\end{itemize}
For $j < t$, we thus obtain that $\bm{q_t}^T  \bm{u_j}^{(t)}= r_{jj}\bm{q_t}^T\bm{q_j}$. As $Q = \begin{bmatrix}\bm{q_1} & \bm{q_2} & \dots & \bm{q_{n+d}}\end{bmatrix}$ is orthogonal, we have $I = Q^T Q = (\bm{q_i}^T\bm{q_j})_{ij}$, implying $\bm{q_t}^T\bm{q_j}=0$. Therefore, $\bm{q_t}^T \bm{u_j}^{(t)} = 0 = r^{(t)}_{tj}$. \\
To complete the proof of \eqref{R_equation1}, we proceed to prove that $\bm{q_t}^T U^{(t)} = \bm{q_t}^T \widetilde{A}$. This is equivalent to proving that $\bm{q_t}^T \bm{u_i}^{(t)} = \bm{q_t}^T \bm{\widetilde{a}_i}$ for any $i \in [1, n+d]$. Each $\bm{u_i}^{(t)}$ is only updated on line 9 in the first $i$ iterations, and thus formulated as:
\begin{align*}
    \bm{u_i}^{(t)} &= \bm{\widetilde{a}_i} - \sum_{l=1}^{\min(i,t)-1} r_{lj} \bm{q_l}^{(l)}\\
        &=  \bm{\widetilde{a}_i} - \sum_{l=1}^{\min(i,t)-1} r_{lj} \bm{q_l} \\
        &\text{ (as $\bm{q_l}^{(l)}$ is last updated in the $l^{th}$ iteration.)} 
\end{align*}
Left multiplying $\bm{q_t}^T$, we obtain:
\begin{align*}
    \bm{q_t}^T \bm{u_i}^{(t)}  &= \bm{q_t}^T \bm{\widetilde{a}_i} - \sum_{l=1}^{\min(i,t)-1} r_{lj} \bm{q_t}^T \bm{q_l} \\
    &= \bm{q_t}^T \bm{\widetilde{a}_i},
\end{align*}
where the last line holds because $I = Q^T Q = (\bm{q_i}^T\bm{q_j})_{ij}$, implying $\bm{q_t}^T\bm{q_j}=0$ for any $j\neq t$, and $l<\min(i,t) \leq t$.
\end{proof}

\noindent 
Back to our main proof, we now have:
\begin{align*}
\bm{r_t}^{(t)} &= \bm{q_t}^T \widetilde{A} = \bm{q_t}^T
\begin{bmatrix} A & A G_h \\ G_v A & G_v A G_h \end{bmatrix} \\
&=
\begin{bmatrix} \bm{q_t}^T\begin{bmatrix} A \\ G_v A \end{bmatrix} & \bm{q_t}^T\begin{bmatrix} A \\ G_v A \end{bmatrix} G_h  \end{bmatrix} \\
&=\begin{bmatrix} \bm{r_{t1}}^{(t)} & \bm{r_{t2}}^{(t)}\end{bmatrix}\\
\implies &\begin{cases}
\bm{r_{t1}}^{(t)} = \bm{q_t}^T\begin{bmatrix} A \\ G_v A \end{bmatrix}\\
\bm{r_{t2}}^{(t)} = \bm{q_t}^T\begin{bmatrix} A \\ G_v A \end{bmatrix} G_h
\end{cases} 
\end{align*}
We thus obtain: $\bm{r_{t2}}^{(t)} = \bm{r_{t1}}^{(t)} G_h$. Other rows of $R$ remain the same: $\bm{r_{j}}^{(t)}= \bm{r_{j}}^{(t-1)}$, so $\bm{r_{j2}}^{(t)} = \bm{r_{j1}}^{(t)} G_h$ for all other $j\neq t$. We conclude that $R_2^{(t)}=  R_1^{(t)} G_h$.

\subsection{Checksum-preservation for parallel MGS}
\label{appendixA:parMGS}

We consider the columns of the encoded matrix $\widetilde{A}$:
\begin{align*}
&\widetilde{A} = \begin{bmatrix} \bm{\widetilde{a}_1} & \bm{\widetilde{a}_2} & \cdots & \bm{\widetilde{a}_{n+d}} \end{bmatrix} \text{, where $\bm{\widetilde{a}_i} = \begin{bmatrix} \bm{a_i} \\ G_v \bm{a_i}  \end{bmatrix} $} 
\end{align*}
and the columns of the final $Q$-factor $Q^{(t)}$ at the $t^{th}$ iteration for $t=1 \rightarrow T$:
\begin{align*}
&Q^{(t)}= \begin{bmatrix} \bm{q_1}^{(t)} & \bm{q_2}^{(t)} & \cdots & \bm{q_{n+d}}^{(t)} \end{bmatrix}\\
&Q_1^{(t)}= \begin{bmatrix} \bm{q_{11}}^{(t)} & \bm{q_{12}}^{(t)} & \cdots & \bm{q_{1(n+d)}}^{(t)} \end{bmatrix}\\
&Q_2^{(t)}= \begin{bmatrix} \bm{q_{21}}^{(t)} & \bm{q_{22}}^{(t)} & \cdots & \bm{q_{2(n+d)}}^{(t)} \end{bmatrix} \text{, where $\bm{q_i}^{(t)}=\begin{bmatrix} \bm{q_{1i}}^{(t)} \\ \bm{q_{2i}}^{(t)}  \end{bmatrix} $.}
\end{align*} 
Next, we present the BMGS algorithm on $\widetilde{A}$:
\begin{algorithm}
\lstset{language=Pascal} 
\lstset{numbers=left, numberstyle=\tiny, stepnumber=1, numbersep=5pt}
\caption{BMGS}
\SetAlgoLined
\label{alg: bmgs}
\KwIn{$\widetilde{A} = \begin{bmatrix} \widetilde{a}_1 & \widetilde{a}_2 & \cdots & \widetilde{a}_n \end{bmatrix}$ is $(m+c) \times n$ matrix}
\KwOut{$Q = \begin{bmatrix}q_1 & q_2 & \dots & q_n\end{bmatrix}$ is $(m+c) \times n$  matrix  , and $R = (r_{ij})$ is $n \times n$  matrix}
\KwResult{$\widetilde{A} = QR$, where $Q^T Q = I$}
$Q=\widetilde{A}$\\
\For{$i=1$ \KwTo $n$ \textbf{Step} $b$}{ 
    $\bar{Q}=Q[:, i:i+b-1]$\\
    MGS($\bar{Q}, R[i:i+b-1, i:i+b-1]$)\\
    $Q[:, i:i+b-1]=\bar{Q}$\\
    \If{$i<n-b$}{
    $\bar{R}= \bar{Q}^T Q[:, i+b:n]$ \\
    $R[i:i+b-1, i+b:n]=\bar{R}$\\
    $Q[:,i+b:n] -= \bar{Q} \bar{R}$
    } 
}
\end{algorithm}

\noindent
The BMGS is essentially the reformulation of MGS beneficial for parallelization.
The parallel BMGS, called PBMGS, is obtained by parallelizing certain steps in BMGS via MPI operations to adapt to the 2D cyclic distribution, where step $b$ is set to the block size. As the computation and thus numerical output at each iteration of PBMGS remain the same as that of BMGS through the parallelization, it suffices to prove the checksum-preservation for BMGS. 
% Figure \ref{fig: bmgs} illustrates the checksum-preservation for PBMGS. 

\noindent
In the Algorithm \ref{alg: bmgs}, we use the temporary $(n+c) \times b$ matrix $\bar{Q}$. 
For analysis, we consider its value $\bar{Q}^{(t)}$ at the end of each iteration $t=1 \rightarrow T$. $\bar{Q}$ is initialized to $\bar{Q}^{(0)}= \widetilde{A}$ on line 1 of Algorithm \ref{alg: bmgs}.

\noindent
Next, we further consider the breakdowns of the
intermediate matrices as follows:

\begin{align*}
&\bar{Q}^{(t)}=\begin{bmatrix}\bar{Q}^{(t)}_1 \\ \bar{Q}^{(t)}_2 \end{bmatrix} \text{, where $\bar{Q}_1: n \times b$ and $\bar{Q}_2: c \times b$},\\
&R^{(t)} = \begin{bmatrix} R^{(t)}_1 & R^{(t)}_2\end{bmatrix}= (r^{(t)}_{ij}) = \begin{bmatrix} \bm{r_1}^{(t)} \\ \vdots \\ \bm{r_{n+c}}^{(t)} \end{bmatrix}, \\
&R^{(t)}_1= \begin{bmatrix} \bm{\bar{r}_{11}}^{(t)} \\ \vdots \\ \bm{\bar{r}_{(n+c)1}}^{(t)} \end{bmatrix} , 
R^{(t)}_2= \begin{bmatrix} \bm{\bar{r}_{12}}^{(t)} \\ \vdots \\ \bm{\bar{r}_{(n+c)2}}^{(t)} \end{bmatrix} \text{, where $\bm{r_i}^{(t)} = \begin{bmatrix} \bm{r_{i1}}^{(t)} & \bm{r_{i2}}^{(t)}\end{bmatrix}$}.
\end{align*}
We proceed to prove by induction that for $t=0 \rightarrow T$, we have $Q_2^{(t)}= G_v Q_1^{(t)}$ and $R_2^{(t)}=  R_1^{(t)} G_h$.

\textbf{Base case:} For $t=0$:\\
As $Q$ is initialized to $Q^{(0)}=\widetilde{A}=\begin{bmatrix}
	 A & A G_h\\
	G_v A & G_v A G_h
\end{bmatrix}$, we have $Q_2^{(0)}= G_v Q_1^{(0)}$.\\

\textbf{Inductive step:} For the inductive step from $t-1$ to $t$ ($t\geq 1$): \\
At the $t^{th}$ iteration, we have $i=1+tb$ in the for-loop of BMGS. 
The $Q$-factor is updated on line 5 for the $b$ columns $Q[:, i:i+b-1]$ and on line 9 for the $n-i-b+1$ columns $Q[:, i+b:n]$.
The matrix $\bar{Q}$ is initialized to $\bar{Q}= Q^{(t-1)}[:, i:i+b-1]$ on line 3 and then serves as the data input for the MGS algorithm. As $Q_2^{(t-1)}= G_v Q_1^{(t-1)}$, we obtain that $\bar{Q}_2= G_v \bar{Q}_1$ on line 3.
By the checksum preservation properties of MGS proven in Appendix \ref{appendixA:seqMGS}, the checksum relation $\bar{Q}_2= G_v \bar{Q}_1$ of $\bar{Q}$ is maintained throughout and at the end of the computation of MGS on line 4. Then, for the $b$ columns $Q[:, i:i+b-1]$ updated on line 5, we have:
\begin{align*}
    Q^{(t)}[:, i:i+b-1]&=\bar{Q}=
    \begin{bmatrix}\bar{Q}_1 \\ \bar{Q}_2\end{bmatrix}
    =\begin{bmatrix}\bar{Q}_1 \\ G_v \bar{Q}_1\end{bmatrix}\\
    &= \begin{bmatrix} Q^{(t)}_1[:, i:i+b-1] \\  Q^{(t)}_2[:, i:i+b-1]\end{bmatrix}.
\end{align*}
We thus obtain: $Q^{(t)}_2[:, i:i+b-1] = G_v Q^{(t)}_1[:, i:i+b-1]$.  
Now, for the $n-i-b+1$ columns $Q[:, i+b:n]$ updated on line 9:
\begin{align*}
    Q^{(t)}[:, i+b:n]&=Q^{(t-1)}[:, i+b:n]-\bar{Q} \bar{R}\\
    &= \begin{bmatrix} Q^{(t-1)}_1[:, i+b:n] \\  Q^{(t-1)}_2[:, i+b:n] \end{bmatrix} - \begin{bmatrix}\bar{Q}_1 \\  \bar{Q}_2 \end{bmatrix} \bar{R}\\
    &=\begin{bmatrix} Q^{(t-1)}_1[:, i+b:n] \\  G Q^{(t-1)}_1[:, i+b:n] \end{bmatrix} - \begin{bmatrix}\bar{Q}_1 \\ G \bar{Q}_1 \end{bmatrix} \bar{R}\\
    &\text{ (as $Q_2^{(t-1)}= G Q_1^{(t-1)}$ and $\bar{Q}_2= G \bar{Q}_1$)}\\
    &= \begin{bmatrix} Q^{(t-1)}_1[:, i+b:n] - \bar{Q}_1 \bar{R} \\ G ( Q^{(t-1)}_1[:, i+b:n] -\bar{Q}_1 \bar{R}) \end{bmatrix}\\
    &= \begin{bmatrix}Q^{(t)}_1[:, i+b:n] \\ Q^{(t)}_2[:, i+b:n] \end{bmatrix}. 
\end{align*}
We thus obtain: $Q^{(t)}_2[:, i+b:n] = G Q^{(t)}_1[:, i+b:n]$.\\
All in all, we have just proved that $q^{(t)}_{2j} = G q^{(t)}_{1j}$ for $j\in[i, n]$, where each of these columns are updated on either line 5 or line 9 of the BMGS algorithm. Other columns of $Q$ remain the same: $q^{(t)}_j= q^{(t-1)}_j$, so $q^{(t)}_{2j} = G q^{(t)}_{1j}$ for all other $j < i=1+tb$. Consequently, we conclude that $Q_2^{(t)}= G Q_1^{(t)}$. 
    
\noindent
Now, it is left to show that     $R_2^{(t)}=  R_1^{(t)} G_h$. We note that      the $R$-factor is updated on line 8 for the $b$ rows $R[i:i+b-1,:]$, and have the following property. 

\begin{observation}
\label{parMGS_preserve}
At the end of iteration $t$, we have:
\begin{align}
\label{eq:parMGS_preserve}
R^{(t)}[i:i+b-1, :] = \bar{Q}^T \widetilde{A}
\end{align}
\end{observation}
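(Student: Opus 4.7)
The plan is to extend the sequential argument in Observation \ref{seqMGS_preserve} to the block setting, by partitioning the columns of $R^{(t)}[i:i+b-1,:]$ into three ranges $[1,i-1]$, $[i,i+b-1]$, and $[i+b, n+d]$, and verifying the identity on each range separately. The one fact that drives all three cases is a cross-orthogonality invariant: the orthonormal block $\bar{Q}$ produced in iteration $t$ satisfies $\bar{Q}^T \bar{Q}_s = 0$ for every previously finalized block $\bar{Q}_s$ with $s<t$. This invariant is proved inductively alongside the main claim, using the fact that line 9 of each earlier iteration projects later columns onto the complement of the current block, and the MGS call on line 4 keeps the output $\bar{Q}$ in the span of its input $\bar{Q}_0 := Q^{(t-1)}[:, i:i+b-1]$.

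For columns $j<i$, the entries $R^{(t)}[i:i+b-1, j]$ are never updated and remain $0$; I would argue they also equal $\bar{Q}^T \bm{\widetilde{a}}_j$ by expanding $\bm{\widetilde{a}}_j$, via the line-9 recurrence, as a linear combination of $\{\bar{Q}_1,\ldots,\bar{Q}_s\}$ for the iteration $s<t$ that finalized column $j$, and then invoking the cross-orthogonality. For columns $j\in[i,i+b-1]$, the inner MGS call on line 4 yields $\bar{Q}_0 = \bar{Q}\, R_{\mathrm{diag}}$, where $R_{\mathrm{diag}} := R^{(t)}[i:i+b-1,\,i:i+b-1]$; left-multiplying by $\bar{Q}^T$ and using $\bar{Q}^T \bar{Q} = I$ gives $R_{\mathrm{diag}} = \bar{Q}^T \bar{Q}_0$, and unrolling the prior line-9 updates as $\bar{Q}_0 = \widetilde{A}[:,i:i+b-1] - \sum_{s<t} \bar{Q}_s M_s$ reduces this to $\bar{Q}^T \widetilde{A}[:,i:i+b-1]$ by the invariant.

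For columns $j\in[i+b, n+d]$, line 7 directly stores $R^{(t)}[i:i+b-1, j] = \bar{Q}^T Q^{(t-1)}[:,j]$, since $Q[:, i+b:n+d]$ has not yet been modified in the current iteration at that point. Unrolling line 9 again expresses $Q^{(t-1)}[:,j]$ as $\bm{\widetilde{a}}_j$ plus a combination of the $\bar{Q}_s$ with $s<t$, and the cross-orthogonality finishes the case. I expect the main obstacle to be the bookkeeping of the two intertwined inductive claims (cross-orthogonality of the $\bar{Q}_s$ and the evolving linear-combination formula for $Q^{(t)}$); once those are recorded cleanly, the three cases combine to give $R^{(t)}[i:i+b-1,:] = \bar{Q}^T \widetilde{A}$, after which the block structure of $\widetilde{A}$ in \eqref{eq:encoding} yields $R_2^{(t)} = R_1^{(t)} G_h$ on these rows by the same column-split used in the sequential case, completing the inductive step for parallel MGS.
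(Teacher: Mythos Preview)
Your proposal is correct and is exactly the rigorous block-level adaptation of the sequential argument in Observation~\ref{seqMGS_preserve} that the paper alludes to in its opening sentence but then declines to write out. Instead, the paper presents a short ``intuitive'' shortcut: it notes that both $R[i{:}i+b-1,:]$ and $Q[:,i{:}i+b-1]=\bar{Q}$ attain their final values at iteration $t$, and then simply invokes the overall correctness of BMGS (that the terminal $Q,R$ satisfy $\widetilde{A}=QR$ with $Q^TQ=I$) to read off $R[i{:}i+b-1,:]=Q[:,i{:}i+b-1]^T\widetilde{A}=\bar{Q}^T\widetilde{A}$. Your three-range case split, the cross-orthogonality invariant $\bar{Q}^T\bar{Q}_s=0$ for $s<t$, and the line-9 unrolling of $Q^{(t-1)}[:,j]$ are precisely the ingredients that a direct proof needs and that the paper's shortcut hides behind the appeal to BMGS correctness. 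The trade-off is clear: your argument is self-contained and does not rely on the algorithm's global correctness as a black box, at the price of the inductive bookkeeping you flag; the paper's argument is two lines long but imports a stronger external fact.
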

\begin{proof}
The same reasoning as Observation \ref{seqMGS_preserve}'s proof  can be adapted. Here we would like to present an intuitive proof to show the validity of \eqref{eq:parMGS_preserve}. 
From the algorithm, we know that $R^{(t)}[i:i+b-1, :]$, and $Q^{(t)}[:, i:i+b-1]=\bar{Q}$ are last updated in this $t^{th}$ iteration and thus also the final values:
\begin{align*}
    &R[i:i+b-1, :]=R^{(t)}[i:i+b-1, :], \\
    &Q[:, i:i+b-1]= Q^{(t)}[:, i:i+b-1]=\bar{Q}.
\end{align*}
If the algorithm correctly computes the QR decomposition of $\widetilde{A}$, at the end we would have $\widetilde{A}= QR$, implying:
\begin{align*}
    &R = Q^T \widetilde{A}, \\
    &R[i:i+b-1, :] = Q[:, i:i+b-1]^T  \widetilde{A} ,\\
    &R^{(t)}[i:i+b-1, :] = Q^{(t)}[:, i:i+b-1]^T  \widetilde{A} ,\\
    &R^{(t)}[i:i+b-1, :] = \bar{Q}^T  \widetilde{A} .
\end{align*}
\end{proof}

\noindent
Back to the main proof, we have: 
\begin{align*}
    R^{(t)}[i:i+b-1, :] &= \bar{Q}^T \widetilde{A} = \bar{Q}^T \begin{bmatrix} A & A G_h \\ G_v A & G_v A G_h \end{bmatrix} \\
&=
\begin{bmatrix} \bar{Q}^T\begin{bmatrix} A \\ G_v A \end{bmatrix} & \bar{Q}^T\begin{bmatrix} A \\ G_v A \end{bmatrix} G_h  \end{bmatrix} \\
&=\begin{bmatrix} R_1^{(t)}[i:i+b-1, :] & R_2^{(t)}[i:i+b-1, :] \end{bmatrix}\\
\implies &\begin{cases}
R_1^{(t)}[i:i+b-1, :] =  \bar{Q}^T\begin{bmatrix} A \\ G_v A \end{bmatrix}\\
R_2^{(t)}[i:i+b-1, :] =  \bar{Q}^T\begin{bmatrix} A \\ G_v A \end{bmatrix} G_h
\end{cases}. 
\end{align*}
We thus obtain: $\bm{r_{j2}}^{(t)} = \bm{r_{j1}}^{(t)} G_h$ for $j\in [i, i+b-1]$. Other rows of $R$ remain the same: $\bm{r_{j}}^{(t)}= \bm{r_{j}}^{(t-1)}$, so $\bm{r_{j2}}^{(t)} = \bm{r_{j1}}^{(t)} G_h$ for all other $j \notin [i, i+b-1]$. We conclude that $R_2^{(t)}=  R_1^{(t)} G_h$.
    
% \begin{figure}[t] 
% \centering
% \includegraphics[width=0.5\textwidth]{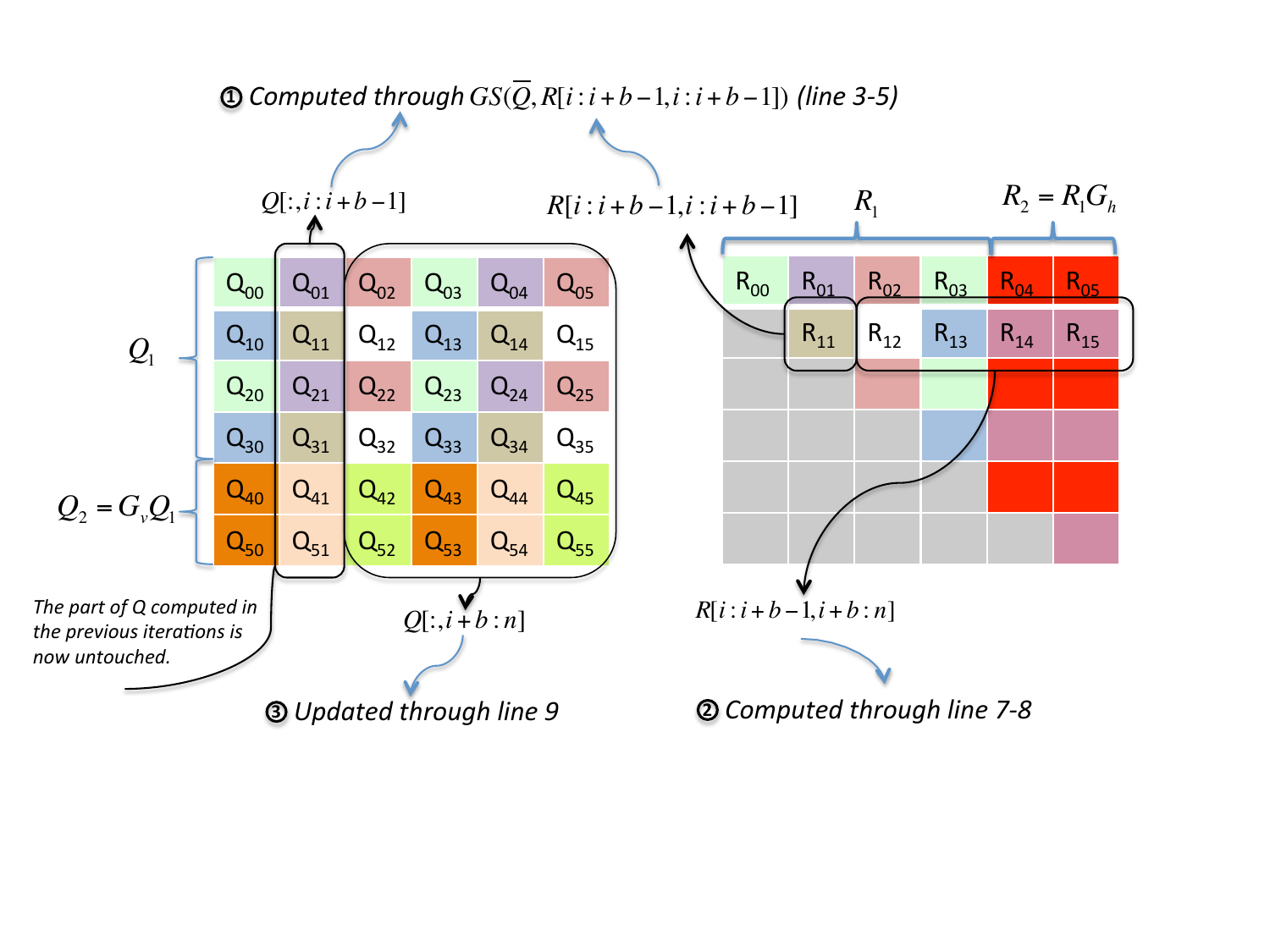}
% \caption{$2^{nd}$ iteration of PBMGS, i.e. parallelized BMGS (Algorithm \ref{alg: bmgs}), for  $p_r=2, p_c =3$. Different colors represent different processors. (a) Six systematic processors (green, purple, red, blue, brown and white) own the original data blocks; three extra checksum-processors (orange, light orange, light green) own the vertical checksum-blocks $Q_{ij}$; and two extra  checksum-processors (red, pink) own the horizontal checksum-blocks $R_{ij}$
% (b) The left matrix illustrates  $Q$ , and the right matrix illustrates  $R$.} 
% \label{fig: bmgs}
% \end{figure}    
    
\subsection{Corollary \ref{corol: final_form}: final form of the QR decomposition on $\widetilde{A}$}
\label{appen_corol_preservation1}

By Lemma \ref{lem:2way_checksum}, we know that at the end of the algorithm, $Q_2^{(T)}=  G_v Q_1^{(T)}$ and $R_2^{(T)}=  R_1^{(T)} G_h$. 
As the QR decomposition is retrieved as $\widetilde{A}= QR$, where $Q= Q^{(T)} = \begin{bmatrix} Q_1^{(T)} \\ Q_2^{(T)} \end{bmatrix}= \begin{bmatrix} Q_1 \\ Q_2 \end{bmatrix}$ and $R= R^{(T)} = \begin{bmatrix} R_1^{(T)} & R_2^{(T)} \end{bmatrix}= \begin{bmatrix} R_1 & R_2 \end{bmatrix}$, equation \eqref{reduced_QR} directly holds: 
\begin{align*}
\widetilde{A} = \begin{bmatrix}.
Q_1 \\
Q_2
\end{bmatrix} \begin{bmatrix}
R_1 &
R_2
\end{bmatrix}
=
\begin{bmatrix}
Q_1 \\
G_v Q_1
\end{bmatrix} \begin{bmatrix}
R_1 &
R_1 G_h
\end{bmatrix}. 
\end{align*}

\section{Example for Code Construction \ref{const:G_formula1} and Theorem \ref{thm:prob_det}}
\label{construction_example_appen}

% We hereby present an example for illustration of  Code Construction \ref{const:G_formula1} and Theorem \ref{thm:prob_det}.

\begin{ex}[Code construction for $f=1$ and $n = p_r = p_c =2$]
Consider  the $2 \times 2$ matrix $A = \begin{bmatrix} A_{11} & A_{12} \\ A_{21} & A_{22}
\end{bmatrix}$, each entry of which is held by a  processor.
\end{ex}

\noindent
We now describe in details the steps of encoding, failure recovery and decoding.

\textbf{Encoding:} For $R$-factor protection, we use  $G_h = \begin{bmatrix} 1 & 1 \end{bmatrix}$, which is MDS. For $Q$-factor protection, we follow Construction \ref{const:G_formula1} to initiate $\widetilde{V} = [v]$ and $\widetilde{G}_1 = [ g_1]$ whereby $v \sim Unif(0, 1)$ and $g_1 = -\frac{1}{2} v^2$ (since $\widetilde{G}_1 = -\frac{1}{2} \widetilde{V} \widetilde{V}^T$). Given $\widetilde{G} = \begin{bmatrix} g_1 & v \end{bmatrix}$, we  construct the vertical checksum matrix in view of \eqref{generator_mat1} as $G_v = \widetilde{G} \otimes I_{n/p_r}= \begin{bmatrix} g_1 & v \end{bmatrix}$. Now, we encode the input matrix $A$ as:
{\small
\begin{equation} \label{eq:A_example}
   A 
    \xrightarrow{\text{Encode}} \widetilde{A} = \begin{bmatrix}
	 A & A G_h\\
	 G_v A & G_v A G_h
\end{bmatrix} = 
\begin{bmatrix} A_{11} & A_{12} & \widetilde{A}_{13} \\ A_{21} & A_{22} & \widetilde{A}_{23} \\ \widetilde{A}_{31} & \widetilde{A}_{32} & \widetilde{A}_{33}
\end{bmatrix}
\end{equation},
}\\
where, from our checksum design, the input matrix $A$ is protected via:
\begin{itemize}
    \item Horizontal checksums: $\widetilde{A}_{13} = A_{11} + A_{12}$ and $\widetilde{A}_{23} = A_{21} + A_{22}$ %and $A_{33} = A_{31} + A_{32}$
    \item Vertical checksums: $\widetilde{A}_{31} = g_1  A_{11} + v A_{21}$ and $\widetilde{A}_{32} = g_1  A_{12} + v A_{22}$ 
   % and $A_{33} = g_1  A_{13} + v A_{23}$
\end{itemize}

\textbf{Failure Recovery:} At the end of the first iteration of the MGS algorithm, we have the following view of the $Q$-factor and $R$-factor:
{\small
\begin{equation} 
   Q^{(1)} = 
\begin{bmatrix} Q_{11} & A_{12} & \widetilde{A}_{13} \\ Q_{21} & A_{22} & \widetilde{A}_{23} \\ Q_{31} & \widetilde{A}_{32} & \widetilde{A}_{33}
\end{bmatrix}, \quad R^{(1)} = 
\begin{bmatrix} R_{11} & R_{12} & R_{13} \\ 0 & 0 & 0 \\ 0 & 0 & 0
\end{bmatrix}
\end{equation}.
}\\
Now, from Lemma \ref{lem:2way_checksum}, we have $Q_{31} = g_1 Q_{11} + v Q_{12}$ and $R_{13} = R_{11} + R_{12}$, which, if complemented with the horizontal checksums of $A$ as in the encoding phase, would provide the full protection for $Q^{(1)}$ and $R^{(1)}$; e.g. assume that the processor $\Pi(1,1)$ holding $Q_{11}$ and $R_{11}$ fails. Then we can recover $Q_{11} = \frac{1}{g_1} (Q_{31} - v Q_{12})$ and $R_{11} = R_{13} - R_{12}$.

\textbf{Decoding:} At the end of the MGS algorithm,  we have the following view of the $Q$-factor and $R$-factor:
{\small
\begin{equation} 
   Q^{(3)} = 
\begin{bmatrix} Q_{11} & Q_{12} & Q_{13} \\ Q_{21} & Q_{22} & Q_{23} \\ Q_{31} & Q_{32} & Q_{33}
\end{bmatrix}, \quad R^{(3)} = 
\begin{bmatrix} R_{11} & R_{12} & R_{13} \\ 0 & R_{22} & R_{23} \\ 0 & 0 & 0
\end{bmatrix}
\end{equation}
}\\
from which we  retrieve $A = Q_1 R$ with $Q_1 = \begin{bmatrix} Q_{11} & Q_{12}  \\ Q_{21} & Q_{22} 
\end{bmatrix}$ and $R = \begin{bmatrix} R_{11} & R_{12}  \\ 0 & R_{22} 
\end{bmatrix} $. Next, we  construct  $G_0 = \begin{bmatrix} 1 + g_1 & v \\ v & -1\end{bmatrix}$  as in \eqref{G0_formula1}. Finally, to solve the full-rank square system $A \bm{x} = \bm{b}$, we compute the orthogonal matrix $G_0 Q_1$ and the transformed vector $G_0 \bm{b}$ in order to solve for $\bm{x}$ from \eqref{linearsolve3}.

\section{Omitted Proofs}

\subsection{Proof of Theorem~\ref{thm:in_node_opt} }
\label{appen_bound_K1}
\begin{proof}
We first prove a lemma that provides a condition on $K$. 
\begin{lemma}\label{lem:in_node_lb}
Under the load-balanced in-node checksum distribution, to recover from any $r$ failed nodes out of $P$ nodes, the number of checksum blocks $K$ must satisfy:
\begin{equation} \label{eq:in_node_lem}
   a \ceil{\frac{K}{P}} + b \floor{\frac{K}{P}} \geq \; f \frac{L}{P}
\end{equation}
for any $a+b = P-f$ and $a \leq \overline{a}$ and $b \leq  \overline{b}$.
\end{lemma}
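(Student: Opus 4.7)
The plan is to derive the inequality by fixing an admissible pair $(a,b)$, exhibiting a specific $f$-failure pattern consistent with $(a,b)$, and then running a standard dimension-counting (information-theoretic) argument on the surviving linear equations.

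First I would fix any $(a,b)$ with $a+b = P-f$, $a\le \overline{a}$, $b\le \overline{b}$, and consider the failure pattern in which $\overline{a}-a$ of the $\overline{a}$ ``heavy'' nodes (those holding $\lceil K/P\rceil$ checksum blocks) and $\overline{b}-b$ of the $\overline{b}$ ``light'' nodes (those holding $\lfloor K/P\rfloor$ checksum blocks) fail. Since $\overline{a}-a\ge 0$, $\overline{b}-b\ge 0$, and $(\overline{a}-a)+(\overline{b}-b) = P-(a+b) = f$, this is a valid $f$-node failure. In particular, the surviving nodes are exactly the $a$ heavy survivors and $b$ light survivors, contributing $a\lceil K/P\rceil + b\lfloor K/P\rfloor$ surviving checksum blocks in total.

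Next I would count lost data. Because $P$ divides $L$ and the data blocks are evenly distributed, each node holds exactly $L/P$ data blocks, so exactly $fL/P$ data blocks are erased by this failure pattern. Since recovery must be guaranteed for every $f$-failure pattern (and in particular for the one just constructed), the surviving information must determine those $fL/P$ lost blocks uniquely.

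Finally I would apply a dimension argument. Treat the $L$ data blocks as free variables. The surviving data blocks give $L - fL/P$ known values which can be substituted into each surviving checksum; what remains is a linear system in the $fL/P$ unknown (lost) data blocks, with exactly one equation per surviving checksum. If $a\lceil K/P\rceil + b\lfloor K/P\rfloor < fL/P$, this system has more unknowns than equations and therefore admits at least two distinct solutions, contradicting the assumption that the lost blocks can be recovered. Hence
\begin{equation*}
a\lceil K/P\rceil + b\lfloor K/P\rfloor \;\ge\; fL/P,
\end{equation*}
as claimed.

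The only mild subtlety — not really an obstacle — is that the bound must be derived without assuming any particular structure on the checksum-generator matrix, since the lemma constrains $K$ uniformly across all coding schemes. This is handled by the counting argument above, which says that no linear code whatsoever can recover more lost symbols than it has surviving parity equations; no algebraic computation beyond tallying equations versus unknowns is needed.
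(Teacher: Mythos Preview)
Your proof is correct and follows essentially the same approach as the paper: both construct a failure pattern leaving exactly $a$ heavy and $b$ light survivors, count the surviving checksum blocks against the $fL/P$ lost data blocks, and conclude via the underdetermined-system argument. Your write-up is somewhat more explicit in verifying that the failure pattern is valid and in noting that the argument is code-agnostic, but the substance is identical.
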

\begin{proof}[Proof of Lemma~\ref{lem:in_node_lb}]
Assume there exists $a'$ and $b'$ such that $a'+b' = P-f$, $a' \leq \overline{a}$, $b' \leq \overline{b}$ that satisfy $ a' \ceil{\frac{K}{P}} + b' \floor{\frac{K}{P}} < f \frac{L}{P}$.
If $a'$ nodes with $\ceil{\frac{K}{P}}$ checksum blocks and $b'$ nodes with $\floor{\frac{K}{P}}$ checksum blocks are the only surviving nodes, then we have fewer than $f \frac{L}{P}$ from the $P-f$ surviving nodes. However, we have $f \frac{L}{P}$ data blocks missing from the $f$ failed nodes. Since we have fewer equations than the number of variables, we cannot recover all the data blocks (Contradiction).
\end{proof}

By Lemma 1, we have: 
\begin{equation} \label{eq:inf_thr_lb}
  a \ceil{\frac{K}{P}} + b \floor{\frac{K}{P}} \geq \; f \frac{L}{P}
\end{equation}
for any $a+b = P-f$ and $a \leq \overline{a}$ and $b \leq  \overline{b}$.
As this has to be satisfied for the worst case $a$ and $b$, let us consider:
\begin{equation}
    \min_{\substack{a,b \geq 0, \; a+b =f, \\ a \leq \overline{a}, \; b \leq \overline{b}}}  a \ceil{\frac{K}{P}} + b \floor{\frac{K}{P}} = \begin{cases} (P-f) \floor{\frac{K}{P}} & \text{ if } \overline{b} \geq P- f\\ 
    (P-f-\overline{b}) \ceil{\frac{K}{P}} + \overline{b} \floor{\frac{K}{P}}  & \text{ otherwise. }\end{cases}
\end{equation}
Note that for any $K$, $(P-f) \ceil{\frac{K}{P}}  \geq (P-f) \floor{\frac{K}{P}}$ and  $(P-f) \ceil{\frac{K}{P}}  \geq (P-f-\overline{b}) \ceil{\frac{K}{P}} + \overline{b} \floor{\frac{K}{P}} $. Hence, 
\begin{equation}
    (P-f) \ceil{\frac{K}{P}} \geq \min_{\substack{a,b \geq 0, \; a+b =f, \\ a \leq \overline{a}, \; b \leq \overline{b}}}  a \ceil{\frac{K}{P}} + b \floor{\frac{K}{P}}  \geq f \frac{L}{P}.
\end{equation}
Because $\ceil{\frac{K}{P}} $ is an integer,
\begin{equation}
    \ceil{\frac{K}{P}} \geq \ceil{\frac{f L}{(P- f)P}}.
\end{equation}
Now, let us find $K$ such that $\ceil{\frac{K}{P}}  = \ceil{\frac{ f L}{(P-f)P}}$ that satisfy \eqref{eq:inf_thr_lb}. Then, $K$ be expressed as $K = P \ceil{\frac{ f L}{(P-f)P}} - m$ where $m = 0, \ldots, P-1$. We will try to find the largest $m$, and we divide these into three cases: i) $m \geq P-f$, ii) $0 < m < P-f$, and iii) $m = 0$. 

\textbf{Case i)} $m \geq P-f$: Notice that $\overline{b} = m \geq P- f$, and thus we have:
\begin{align}
    \min_{\substack{a,b \geq 0, \; a+b =f, \\ a \leq \overline{a}, \; b \leq \overline{b}}}  a \ceil{\frac{K}{P}} + b \floor{\frac{K}{P}} &=  (P-f) \floor{\frac{K}{P}} =  (P-f) \left( \ceil{\frac{fL}{(P-f)P}}-1 \right). \label{eq:thm_lb_prf1} 
\end{align}
When $\frac{fL}{(P-f)P}$ is an integer, $\text{\eqref{eq:thm_lb_prf1}} = (P-f)\left(\frac{f L}{(P-f)P}-1 \right) < \frac{f L}{P}$.
When $\frac{f L}{(P-f)P}$ is not an integer, $\text{\eqref{eq:thm_lb_prf1}} = (P-f)\floor{\frac{f L}{(P-f)P}} < \frac{f L}{P}$.
Thus, there does not exist $m \geq P-f$ that satisfies \eqref{eq:inf_thr_lb}.

\textbf{Case ii)} $0 < m < P-f$: As  $\overline{b} = m < P- f$, we have:
\begin{align*}
    \min_{\substack{a,b \geq 0, \; a+b =f, \\ a \leq \overline{a}, \; b \leq \overline{b}}}  a \ceil{\frac{K}{P}} + b \floor{\frac{K}{P}} &=  (P-f-\overline{b}) \ceil{\frac{K}{P}} + \overline{b} \floor{\frac{K}{P}} =  K - f \left( \floor{\frac{K}{P}}+1 \right ) \\
    &= K - f \ceil{\frac{K}{P}} = K - f \ceil{\frac{f L}{(P-f)P}} \geq \frac{f L}{P}
\end{align*}
Hence, $K = f \ceil{\frac{f L}{(P-f)P}} +\frac{fL}{P}$ satisfy \eqref{eq:inf_thr_lb}. We can also check that: 
\begin{align*}
 m &= P\ceil{\frac{f L}{(P-f)P}} - K   =P\ceil{\frac{f L}{(P-f)P}} - f\ceil{\frac{f L}{(P-f)P}} - \frac{f L}{P} \\ 
 &= (P-f)\ceil{\frac{f L}{(P-f)P}}-\frac{f L}{P} < P-f.
\end{align*}
As we already found $m$ in \textbf{Case ii)}, we do not have to consider \textbf{Case iii)}. 
\end{proof}

\subsection{Proof of Theorem~\ref{thm:in_node_mds}}
\label{appen_in_node_R1}

\begin{proof} 
From any $P-f$ surviving nodes, we have $(P-f)\frac{L}{P}$ data blocks and $f\frac{L}{P}$ checksum blocks as $K$ satisfies the condition in \eqref{eq:in_node_lem}. Let $p_1, \ldots, p_r$ be the index of failed nodes. After subtracting the data blocks from the survived nodes, the checksum equations given in \eqref{eq:in_node_cksum} reduces to equations in $f\frac{L}{P}$ data blocks as follows: 
\begin{equation}
     C_i = \sum_{j \in \cup_{k=1}^{r} \mathcal{I}_{p_{k}}} \tilG_{i,j} D_j.
\end{equation}
We have $f\frac{L}{P}$ such equations and since any square submatrix of $\tilG$ is full-rank, we can solve the set of linear equations to retrieve $D_j$'s for $j \in \cup_{k=1}^{r} \mathcal{I}_{p_{k}}$.
\end{proof}

\subsection{Proof of Theorem~\ref{thm:G_0}}
\label{appen_post_ortho_proof1}

\begin{proof}
Recall that given $G = \begin{bmatrix}  G_1 & V \end{bmatrix}$, the $n \times n$  matrix $G_0$ is computed by:
\begin{align*}
G_0 =
\begin{bmatrix}
I_{c}+G_1 & V \\
V^T & -I_{n-c}
\end{bmatrix}
\end{align*}

The following Observation \ref{G0_relation} is used in our proof later: 
\begin{observation}
\label{G0_relation}
Under the restriction \eqref{eq:restrictionG} that $G_1 = -\frac{1}{2} V V^T$, the following equation holds:  $G_0^T G_0 = I + G^T G$. 
 \end{observation}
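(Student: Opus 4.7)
The plan is to prove Observation~\ref{G0_relation} by direct block-matrix computation, partitioning everything according to the $c + (n-c) = n$ split inherited from the definitions of $G_0$ and $G$. Since $G = \begin{bmatrix} G_1 & V \end{bmatrix}$, we have
\[
G^T G = \begin{bmatrix} G_1^T G_1 & G_1^T V \\ V^T G_1 & V^T V \end{bmatrix},
\qquad
I_n + G^T G = \begin{bmatrix} I_c + G_1^T G_1 & G_1^T V \\ V^T G_1 & I_{n-c} + V^T V \end{bmatrix}.
\]
So I would first write this out, then compute $G_0^T G_0$ block by block using $G_0^T = \begin{bmatrix} (I_c+G_1)^T & V \\ V^T & -I_{n-c} \end{bmatrix}$.

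Expanding the product yields a block matrix whose entries I would verify against $I_n + G^T G$ in sequence. The off-diagonal blocks simplify immediately: the top-right block is $(I_c+G_1)^T V - V = G_1^T V$ and symmetrically the bottom-left is $V^T G_1$. The bottom-right block is $V^T V + I_{n-c}$, which matches. This leaves only the top-left block, which equals
\[
(I_c+G_1)^T(I_c+G_1) + V V^T \;=\; I_c + G_1 + G_1^T + G_1^T G_1 + V V^T,
\]
and matching this to $I_c + G_1^T G_1$ requires exactly the identity $G_1 + G_1^T + V V^T = 0$.

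The only substantive step is showing that this identity follows from the post-orthogonalization condition~\eqref{eq:restrictionG}. Since $V V^T$ is symmetric, the hypothesis $G_1 = -\tfrac{1}{2} V V^T$ gives $G_1^T = -\tfrac{1}{2}(V V^T)^T = -\tfrac{1}{2} V V^T$, so $G_1 + G_1^T = -V V^T$, which closes the argument. There is no real obstacle here — the proof is a routine block-algebra verification — but it is worth noting that the factor of $\tfrac{1}{2}$ in~\eqref{eq:restrictionG} is precisely what is required to cancel the $V V^T$ cross term appearing in the $(1,1)$ block of $G_0^T G_0$, which is the only place where the hypothesis is used.
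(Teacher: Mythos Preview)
Your proof is correct and follows essentially the same direct block-matrix computation as the paper's own argument. The only cosmetic difference is that the paper observes $G_1^T = G_1$ at the outset and thus writes $(I+G_1)^2$ and $G_1^2$ in place of your $(I+G_1)^T(I+G_1)$ and $G_1^T G_1$, but the substance and the key cancellation $G_1 + G_1^T + V V^T = 0$ are identical.
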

\begin{proof}[Proof of Observation \ref{G0_relation}]
From  \eqref{eq:restrictionG}, we have $G_1^T=G_1=  \frac{-1}{2}  V V^T$, and obtain that:
\begin{align*}
G_0^T G_0 &= \begin{bmatrix}
I+G_1 & V \\
V^T & -I
\end{bmatrix}^T 
\begin{bmatrix}
I+G_1 & V \\
V^T & -I
\end{bmatrix} =
\begin{bmatrix}
(I+G_1)^2 + V V^T & (I+G_1) V - V \\
V^T (I+G_1)-V^T & V^T V + I
\end{bmatrix} \\
&=
\begin{bmatrix}
I+G_1^2 & G_1 V \\
V^T G_1 & V^T V + I
\end{bmatrix} = I +\begin{bmatrix}
G_1^2 & G_1 V \\
V^T G_1 & V^T V
\end{bmatrix} = I + G^T G.
\end{align*} 
\end{proof}
 At the end of QR decomposition of encoded $\widetilde{A}$ in equation  (\ref{encoded_qr_compute}), the left factor $\begin{bmatrix} Q_1 \\ GQ_1\end{bmatrix}$ is orthogonal : 
 \begin{align*}
 I &= \begin{bmatrix} Q_1 \\ GQ_1\end{bmatrix}^T \begin{bmatrix} Q_1 \\ GQ_1\end{bmatrix}= Q_1^T Q_1 + Q_1^T G^T G Q_1 \\
 & = Q_1^T (I + G^T G)Q_1 = Q_1^T G_0^T G_0 Q_1 \text{   (by Observation \ref{G0_relation})} \\
 &= (G_0 Q_1)^T (G_0 Q_1).
 \end{align*}
 Therefore,$G_0 Q_1$  is orthogonal.
The following Observation \ref{detG0} will be useful for computing the determinant of $G_0$.

\begin{observation}
\label{detG0}
Consider matrices $A,B,C,D$ of size $n\times n, n \times m, m\times n, m \times m$ respectively. If D is invertible then:
$$det\begin{bmatrix} A & B \\ C & D \end{bmatrix} = det(D) det(A-BD^{-1}C)$$
\end{observation}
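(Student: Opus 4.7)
The plan is to establish the identity by exhibiting an explicit block LU-type factorization of $\begin{bmatrix} A & B \\ C & D \end{bmatrix}$ whose factors are block triangular, and then reading off the determinant as a product of the determinants of the diagonal blocks. Since $D$ is invertible, $D^{-1}$ exists and we may use it freely to ``eliminate'' the block $C$ (or equivalently $B$) via a block row (or column) operation.

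Concretely, I would write down the factorization
\begin{equation*}
\begin{bmatrix} A & B \\ C & D \end{bmatrix}
=
\begin{bmatrix} A - B D^{-1} C & B \\ 0 & D \end{bmatrix}
\begin{bmatrix} I_n & 0 \\ D^{-1} C & I_m \end{bmatrix},
\end{equation*}
and verify it by direct block multiplication: the $(1,1)$ block becomes $(A - B D^{-1} C) + B D^{-1} C = A$, the $(1,2)$ block is $B$, the $(2,1)$ block is $D \cdot D^{-1} C = C$, and the $(2,2)$ block is $D$, as required. This is the only computational step of the proof, and it is routine.

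Next, I would invoke the standard fact that the determinant of a block triangular matrix with square diagonal blocks equals the product of the determinants of those diagonal blocks. Applied to the first factor (block upper triangular with diagonal blocks $A - B D^{-1} C$ and $D$), this gives determinant $\det(A - B D^{-1} C)\,\det(D)$. Applied to the second factor (block lower triangular with diagonal blocks $I_n$ and $I_m$), it gives determinant $1$. Multiplicativity of the determinant then yields
\begin{equation*}
\det\begin{bmatrix} A & B \\ C & D \end{bmatrix}
= \det(D)\,\det(A - B D^{-1} C),
\end{equation*}
which is the claim.

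There is essentially no obstacle here: the only content is choosing a factorization in which $D^{-1}$ makes the off-diagonal blocks of one factor vanish, and the need for $D$ to be invertible is precisely what makes this factorization available. If desired, I could alternatively justify the block-triangular determinant fact itself by a short induction on $m$ using cofactor expansion along the columns corresponding to the zero block, but since this is standard I would simply cite it.
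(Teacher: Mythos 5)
Your proof is correct. Note, though, that the paper does not actually supply a proof of this Observation: it states it and immediately applies it to compute $\det(G_0)$, treating it as a known fact (the standard Schur complement determinant identity). Your block $UL$ factorization, verified by direct multiplication and combined with the block-triangular determinant rule and multiplicativity of the determinant, is exactly the standard argument one would cite here, and it cleanly fills in what the paper leaves implicit.
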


We compute the determinant of $G_0$: 
\begin{align*}
det(G_0) &= det\begin{bmatrix}
I_{c}+G_1 & V\\
V^T & -I_{m-c}
\end{bmatrix} \\
&= det(-I_{m-c}) det((I +G_1) - V(-I)^{-1}  V^T)\\
&= (-1)^{m-c} det(I + G_1 + V V^T)
\end{align*}
where the second line follows Observation \ref{detG0}. 
Under our setting $G_1 = \frac{-1}{2}  V V^T$, the determinant of $G_0$ becomes:
\begin{align*}
det(G_0) =  (-1)^{m-c} det(I  +\frac{1}{2} V V^T).
\end{align*}
Now taking any non-zero column vector $\bm{z}$, we have $\bm{z}^T (I  +\frac{1}{2} V V^T) \bm{z} = \bm{z}^T \bm{z} + \frac{1}{2} (V^T \bm{z})^T V^T\bm{z} \geq \bm{z}^T \bm{z} > 0$. This means that $I  +\frac{1}{2} V V^T$ is positive definite, and $det(I  +\frac{1}{2} V V^T) \neq 0$. 
Therefore, $det(G_0) \neq 0$ and $G_0$ is invertible. 
\end{proof}

\subsection{Proof of Theorem~\ref{thm:prob_det}}
\label{mds_theorem_proof}
\begin{proof}
Let $n_s$ denote the dimension of $S$. We can divide this into two cases: (i) When $S$ is a submatrix of $\widetilde{V}$, and (ii) when $t$ columns of $S$ are from $\widetilde{G}_1$ and $(n_s-t)$ columns are from $\widetilde{V}$ ($1 \leq n_s \leq f$, $0 \leq t \leq n_s$). For the first case, any square submatrix of $\widetilde{V}$ is a random matrix where each entry was chosen iid. Hence, is is full rank with probability 1. Thus, we focus on the second case. 

Let $\mathrm{Perm}(n_s)$ be all permutations of $\{1, \ldots, n_s\}$, and for permutation $\sigma$, $\sigma_i$ is the $i$-th element in the sequence. For example when $n_s = 3$, if $\sigma = [2 \;\; 3 \;\; 1]$, then $\sigma_1 =2, \sigma_2 = 3, \sigma_3=1$. The Leibniz formula for matrix determinant is:
\begin{equation}
    \mathrm{det}(S) = \sum_{\sigma \in \mathrm{Perm}(n_s)} \mathrm{sgn}(\sigma) \prod_{i=1}^{n_s} s_{i, \sigma_i} = \sum_{\sigma \in \mathrm{Perm}(n_s)} \xi(\sigma), \qquad \xi(\sigma) \triangleq \mathrm{sgn}(\sigma) \prod_{i=1}^{n_s} s_{i, \sigma_i}.
\end{equation}
$\mathrm{sgn}(\sigma)$ is $+1$ or $-1$, but since this term is not important for our argument, we refer to~\cite{trefethen1997numerical} for more explanation. 

We now prove that $\mathrm{det}(S)$ is a non-zero polynomial in $v_{i,j}$'s.  Let us consider the term:
\begin{equation}
    \xi (\sigma = [ 1\;\; \ldots \;\; n_s]) = \prod_{i=1}^{n_s} s_{i, i}. 
\end{equation}
For simplicity, let us assume that the $t$ columns from $\widetilde{G}_1$ are $\mbg_1, \ldots \mbg_t$ and the $n_s -t$ columns of $\widetilde{V}$ are $\mbv_{1}, \ldots \mbv_{n_s-t}$. The argument naturally extends to a general case. Now, $\prod_{i=1}^{n_s} s_{i, i}$ can be expanded as:
\begin{align}
    \prod_{i=1}^{n_s} s_{i, i} &= \prod_{i=1}^{t} \tilg_{i,i} \prod_{i=1}^{n_s -t} v_{i, t+i} \\ 
    &= \left(-\frac{1}{2}\right)^t \cdot \left( \prod_{i=1}^t \sum_{j=1}^{p_r} v_{i,j}^2 \right) \cdot \prod_{i=1}^{n_s -t} v_{i, t+i}. \label{eq:det_expsn}
\end{align}
In \eqref{eq:det_expsn}, let us focus on one term: $(v_{1,p_r})^2 (v_{2, p_r-1})^2 \cdots (v_{t,p_r-t+1})^2 \cdot \prod_{l=1}^{n_s -t} v_{l, t+l} \triangleq \psi$. We show that this term does not appear anywhere else in $\xi (\sigma')$ when $\sigma' \neq [ 1\;\; \ldots \;\; n_s] $, and thus this term cannot be canceled out. We prove this by contradiction. 

Assume there exists a $\sigma'$ such that $\psi \in \xi(\sigma')$. First, let us assume that there exists $1 \leq i \leq t$ such that $\sigma'_i = j \neq i$. We get the $v_{i, p_r-i+1}$ term only from $\tilg_{i,j}$ and $\tilg_{k,i}$ ($k \neq i$) as:
\begin{align}
    \tilg_{i,j} &= v_{i,1}v_{j,1} + \cdots + v_{i,p_r-i+1}v_{j,p_r-i+1} + \cdots v_{i,p_r}v_{j,p_r} \\
    \tilg_{k,i} &= v_{k,1}v_{i,1} + \cdots + v_{k,p_r-i+1}v_{i,p_r-i+1} + \cdots v_{k,p_r}v_{i,p_r}.
\end{align}
Then, the only term that has $(v_{i, p_r-i+1})^2$ is the product of the following terms $v_{i,p_r-i+1}v_{j,p_r-i+1} $ and $v_{k,p_r-i+1}v_{i,p_r-i+1}$, i.e.  
\begin{equation}
    v_{i,p_r-i+1}^2 v_{j,p_r-i+1} v_{k,p_r-i+1}.  \label{miniterm1}
\end{equation}
Notice that this term has $v_{j,p_r-i+1} v_{k,p_r-i+1}$. We next proceed to prove that  $v_{j,p_r-i+1} $ and similarly $v_{k,p_r-i+1}$ are not present in $\psi$. 
First, since $j\neq i$, the term $v_{j,p_r-i+1} $ does not appear in $\prod_{l=1}^t (v_{l, p_r-l+1})^2$ of $\psi$. It is left to show that the term $v_{j,p_r-i+1} $  also does not appear in $ \prod_{l=1}^{n_s -t} v_{l, t+l}$. For the sake of contradiction, suppose that $v_{j,p_r-i+1}$ is one of those $  v_{l, t+l}$'s for $ l =1 \to n_s -t$. Then, this is equivalent to having $j\in [1, n_s -t]$ and $p_r - i + 1 = t+j$, which would imply:
\begin{align*}
    n_s - t &\geq  j = p_r -i + 1 -t \\
    n_s &\geq p_r - i + 1 \geq p_r - n_s +1\\
    \therefore 2n_s &\geq p_r +1 > p_r.
\end{align*}
However, the last line is a contradiction since $2n_s \leq 2f \leq p_r$. Therefore, we have proven that $v_{j,p_r-i+1} $ is not present in $\psi$. Similarly, $v_{k,p_r-i+1}$ and thus \eqref{miniterm1} are not present in $\psi$.
Hence, $\psi \notin \xi(\sigma)$.
Now, assume that there exists $t+1 \leq i \leq n_s$ such that $\sigma'_i = j \neq i$. Then, $\xi(\sigma')$ always has the term $v_{t+i, j}$ ($j \neq i$), which was not present in $\psi$. Hence, $\psi \notin \xi(\sigma)$. 
This proves that $\psi$ appears only in $\xi(\sigma = [1 \;\; \cdots \;\; n_s])$ and hence this term has a a non-zero coefficient, $(-1/2)^t$. As $\mathrm{Pr}(\{v_{i,j}\}_{\substack{i=1,\ldots,r, \\ j=1, \ldots, p_r}}: p(\{ v_{i,j}\}) =0) = 0$ for a non-zero polynomial $p$, this completes the proof.
\end{proof}

\subsection{Proof of Theorem~\ref{thm:overhead}}
\label{overhead_theorem_main1}

\subsubsection{Supplementary Lemmas}
First, we present the following Lemmas that characterize the analytical overhead of  coded QR decomposition. 
\begin{lemma}
\label{overhead0}
Under the out-of-node checksum storage, the coding strategy in Construction \ref{const:G_formula1} can achieve the following coding overheads:
\small{
\begin{align}
%T_d &=T_{all-to-all} (p_r, \frac{n^2}{P} +\frac{n}{P}) + \gamma (\frac{2n^2}{P}+\frac{2n}{P})\\
&T_\text{enc} =     f \cdot T_\text{reduce}(p_r+1, \frac{n^2}{P})  + \gamma \frac{fn^2}{P}+ \gamma f (f+1) (p_r-f),  \\
\nonumber
&T_\text{post} = f \cdot T_\text{broadcast}(p_r, \frac{n(n+p_c)}{P}) + f \cdot T_\text{reduce}(p_r-f+1, \frac{n(n+p_c)}{P}) \\
& \quad \quad \quad +
 \gamma  (2f-1) \frac{n(n+p_c)}{P}, \\
 & T_\text{comp} \leq  \frac{f}{p_r} T_\text{QR}. 
\end{align}
}
\end{lemma}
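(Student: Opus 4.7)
For $T_\text{enc}$, the plan is to decompose the encoding cost into three parts reflecting the steps of Construction~\ref{const:G_formula1}. First, generating the $f\times f$ symmetric matrix $\widetilde{G}_1 = -\frac{1}{2}\widetilde{V}\widetilde{V}^T$ requires computing $f(f+1)/2$ distinct inner products, each of length $p_r - f$; at roughly $2(p_r - f)$ flops per inner product this gives the $\gamma f(f+1)(p_r-f)$ term. Second, since each checksum block satisfies $C_{i,j}=\sum_{t=0}^{p_r-1}\tilg_{i,t}\, A_{t,j}$ as in \eqref{checksum_overhead1}, every processor in a block-column contributes a scalar-matrix scaling of its local $n^2/P$ block, yielding the $\gamma f n^2/P$ critical-path computation across all $f$ checksum rows. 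Third, summing the scaled contributions into each destination checksum node is performed by one MPI\_Reduce per checksum row across the $p_r$ source processors and the $1$ destination, giving the $f \cdot T_\text{reduce}(p_r+1, n^2/P)$ term.

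For $T_\text{post}$, the plan is to exploit the sparse block structure of $G_0$ in \eqref{G0_formula1}. Partition $Q_1 = \begin{bmatrix} Q_{1,a} \\ Q_{1,b} \end{bmatrix}$ with $Q_{1,a}$ the top $c = f n/p_r$ rows (owned by the top $f$ block-rows of processors) and $Q_{1,b}$ the remaining $n-c$ rows, and write
\begin{equation*}
G_0 Q_1 = \begin{bmatrix} (I_c + G_1)\, Q_{1,a} + V\, Q_{1,b} \\ V^T Q_{1,a} - Q_{1,b} \end{bmatrix}.
\end{equation*}
For the top block, $V Q_{1,b}$ requires aggregating weighted contributions from the bottom $p_r - f$ processor-rows into each of the top $f$ processor-rows along every block-column; by the Kronecker structure $V = \widetilde{V}\otimes I_{n/p_r}$, this collapses to $f$ column-wise MPI\_Reduce operations over $p_r - f + 1$ processors with per-block message size $n(n+p_c)/P$. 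For the bottom block, $V^T Q_{1,a}$ requires the top $f$ processor-rows to send their blocks along each column so that every bottom processor can evaluate the weighted sum locally; this is realized as $f$ MPI\_Broadcasts over $p_r$ processors of the same message size. The remaining local scalar-matrix scalings, the addition with $(I_c + G_1)Q_{1,a}$, and the subtraction $-Q_{1,b}$ together account for the $\gamma(2f-1)\, n(n+p_c)/P$ flop contribution. The analogous computation of $G_0\bm{b}$ on the right-hand side has negligible cost and is absorbed.

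For $T_\text{comp}$, the plan is to directly invoke Observation~\ref{overhead2}: the vertical encoding enlarges the input from an $n \times n$ to an $(n+c)\times n$ matrix with $c = f n/p_r$, so running MGS on the enlarged matrix incurs at most $(c/n) T_\text{QR} = (f/p_r) T_\text{QR}$ extra cost. The main technical obstacle is the bookkeeping for $T_\text{post}$: one must carefully use the Kronecker factorization $V = \widetilde{V}\otimes I_{n/p_r}$ to argue that all inter-processor communication aggregates at the coarse $f \times (p_r-f)$ granularity, so that the broadcast and reduce group sizes come out exactly as $p_r$ and $p_r - f + 1$, respectively, and confirm the per-message size. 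Once this structural alignment is established, the remaining accounting is a routine tally of local scalings, additions, and the final subtraction.
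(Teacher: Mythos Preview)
Your proposal is correct and follows essentially the same approach as the paper: the same decomposition into generation of $\widetilde{G}_1$, per-checksum linear-combination reduce for encoding, broadcast-then-reduce exploiting the sparse block structure of $G_0$ for post-orthogonalization, and direct invocation of Observation~\ref{overhead2} for $T_\text{comp}$. The only cosmetic differences are that the paper attributes the $\gamma f(f+1)(p_r-f)$ term to $f(p_r-f)$ flops for generating the random $\widetilde{V}$ plus $f^2(p_r-f)$ flops for the product $\widetilde{V}\widetilde{V}^T$ (rather than your symmetry-based count of $f(f+1)/2$ inner products), and the paper obtains the message size $n(n+p_c)/P$ by explicitly concatenating $\bm{b}$ with each block-column of $Q_1$ into a single $n\times(n/p_c+1)$ panel, rather than treating $G_0\bm{b}$ as a separate negligible computation.
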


% \subsection{Proof of Lemma  \ref{overhead0}}
\begin{proof}
Now we analyze the overhead of encoding, recovery (decoding), and post-processing.

1) Encoding overhead: For encoding, we instantiate $\widetilde{G} = (\widetilde{g}_{i,j})$ following Construction \ref{const:G_formula1},   and   compute 
$C_{i,j} = \sum_{t=0}^{p_r-1} \tilg_{i,t} A_{t, j}$ 
% for $i= 0, \ldots, f-1$, 
to be sent to the processor $\Pi_C(i, j)$. The former  takes $\gamma f (p_r-f)$ computational cost for generating the $f \times (p_r-f)$ random matrix $\widetilde{V} = [\widetilde{v}_1, ..., \widetilde{v}_{p_r-f}]$ (where $\widetilde{v}_i\in \Br^{f\times 1}$) on all processors\footnote{In this step, due to small computational burden, we can let all processors to generate in parallel the same random matrix $\widetilde{V}$ via the same random seed.} and $\gamma f^2 (p_r-f)$ cost to compute $\widetilde{G}_1 = -\frac{1}{2} \widetilde{V} \widetilde{V}^T$.
% $\gamma f^2 + T_{allreduce} (p_r+f, f^2)$   cost to compute 
% \begin{align}
% \widetilde{G}_1 = -\frac{1}{2} \widetilde{V} \widetilde{V}^T =-\frac{1}{2} \sum_{i=1}^{p_r-f}  \widetilde{v}_i \widetilde{v}_i^T \label{helper_baby1}.    
% \end{align}
% In particular, we  choose any $p_r-f$ out of the total of $p_r + f$ vertical processors to compute in parallel $ \widetilde{v}_i \widetilde{v}_i^T \in \Br^{f\times f}$ (for $i \in [1, p_r-f]$) in $\gamma f^2$ before aggregating them in view of \eqref{helper_baby1} via the MPI\_Allreduce operation; note that while the actually computation is done over $p_r-f$ vertical processors holding the data, we use the MPI\_Allreduce over $p_r+f$ vertical processors to broadcast $\widetilde{G}_1$ to all processors\footnote{For the remaining $2f$ processors, we can initialize them with all-zero $f \times f$ matrices.}.
% % thereby resulting in the total $O(f^2 (p_r-f))$ complexity.
The later step computing $C_{i,j}$ can be done by sequentially performing linear-combination reduce in $f$ iterations. 
Hence, the total encoding overhead is: 
\begin{align*}
    T_\text{enc} &= \gamma f (f+1) (p_r-f)+ f \cdot T_\text{lin-comb}(p_r,\frac{n^2}{P}) \\
    &=  \gamma f (f+1) (p_r-f)+ f \cdot \left[T_\text{reduce}(p_r+1, \frac{n^2}{P}) + \gamma \frac{n^2}{P}\right] \\
    &= f \cdot T_\text{reduce}(p_r+1, \frac{n^2}{P})  + \gamma \frac{fn^2}{P}+ \gamma f (f+1) (p_r-f). 
\end{align*}

2) Post-processing overhead: The overhead for post-processing is attributed to the cost for performing matrix multiplications $(G_0 Q_1)$ and $(G_0 \bm{b})$ (refer to Section \ref{subsec:general_scheme}). 
We hereby present an efficient way for such computation. 
Since $G_0 Q_1 = [ G_0 Q_1[:, 0], Q_1[:, 1], ..., Q_1[:, p_r-1] ]$, we compute  all the $G_0 Q_1[:, j]$'s in parallel. Furthermore, this can be combined with the computation  of $(G_0 \bm{b})$ for efficiency
Specifically, to formally describe our  computational pipeline, we first consider the $n \times (\frac{n}{p_c}+1)$ matrix $\overline{Z}^j = \begin{bmatrix} Q_1[:, j] & \bm{b}\end{bmatrix} = \begin{bmatrix} \overline{Z}^j_{0} \\ \overline{Z}^j_{1} \\ \vdots \\ \overline{Z}^j_{p_r-1} \end{bmatrix}$, where each sub-block $\overline{Z}^j_i$ is of dimension $\frac{n}{p_r} \times (\frac{n}{p_c}+1)$ . 
Then, we aim to compute $G_0 \overline{Z}^0, G_0 \overline{Z}^1, ..., G_0 \overline{Z}^{p_r-1}$ in parallel across the $p_r$ block-rows. 
%Since $G_0 Q_1 = [ G_0 Q_1[:, 0], Q_1[:, 1], ..., Q_1[:, p_r-1] ] $
Furthermore, thanks to parallelisim, the post-processing overhead is the cost for performing matrix multiplication $G_0 \overline{Z}^j$, which is further expressed as the followings to utilize the sparsity of $G_0$:
\begin{align}
\nonumber
G_0 \overline{Z}^j &= \begin{bmatrix}
I_{\frac{fn}{p_r}}+G_1 & V \\
V^T & -I_{n-\frac{fn}{p_r}}
\end{bmatrix} \overline{Z}^j  \\
\nonumber
&= \begin{bmatrix}
H \otimes I_{\frac{n}{p_r}} & \widetilde{V} \otimes I_{\frac{n}{p_r}} \\
\widetilde{V}^T \otimes I_{\frac{n}{p_r}}  & -I_{p_r - f} \otimes I_{\frac{n}{p_r}} 
\end{bmatrix} \begin{bmatrix} \overline{Z}^j_{0} \\ \overline{Z}^j_{1} \\ \vdots \\ \overline{Z}^j_{p_r-1} \end{bmatrix}  \text{ where $H = (I_f+\widetilde{G}_1)$} \\
\nonumber
&= \begin{bmatrix} \sum_{i=0}^{f-1} h_{0,i} \cdot \overline{Z}^j_i + \sum_{t=f}^{p_r-1} \widetilde{v}_{0,t} \cdot  \overline{Z}_t  \\ 
\sum_{i=0}^{f-1} h_{1,i} \cdot \overline{Z}^j_i + \sum_{t=f}^{p_r-1} \widetilde{v}_{1,t} \cdot  \overline{Z}^j_t  \\ 
\vdots \\
\sum_{i=0}^{f-1} h_{f-1,i} \cdot \overline{Z}^j_i + \sum_{t=f}^{p_r-1} \widetilde{v}_{f-1,t} \cdot  \overline{Z}^j_t  \\ 
\sum_{i=0}^{f-1} \widetilde{v}_{i,0} \cdot  \overline{Z}^j_i   - \overline{Z}^j_{f}\\
\sum_{i=0}^{f-1} \widetilde{v}_{i,1} \cdot  \overline{Z}^j_i   - \overline{Z}^j_{f+1}\\
\vdots \\
\sum_{i=0}^{f-1} \widetilde{v}_{i,p_r-f-1} \cdot  \overline{Z}^j_i   - \overline{Z}^j_{p_r-1}\\
\end{bmatrix} 
\triangleq \begin{bmatrix} W_0 \\W_1 \\ \vdots \\ W_{p_r-1} \end{bmatrix}. 
\end{align}
Notice that all $W_i$'s for $i=0, \ldots, p_r-1$ have some linear combination of $\overline{Z}^j_t$ ($t=0, \ldots f-1$). To compute this, we first broadcast $\overline{Z}^j_t$'s ($t=0, \ldots f-1$) to all the nodes. This requires performing $f$ operations of  MPI\_broadcast($p_r$, $\frac{n}{p_r} \times (\frac{n}{p_c}+1)$). 
Next, we describe the steps of computing these $W_i$'s for $i=0, \ldots, p_r-1$ in details for algorithmic implementation in practice.
Before we begin the series of broadcast operations, we first initialize $W_i$ at each node as:
\begin{align*}
    W_i = h_{i,i} \overline{Z}^j_i \;\; \text{ for } i <f,  \;\; W_i = -\overline{Z}^j_i \;\; \text{ for } i \geq f.
\end{align*}
After the $l$-th broadcast of the matrix $\overline{Z}^j_l$ ($l = 0, \ldots, f-1$), nodes will update its $W_i$ as follows: 
\begin{align*}
      W_i = W_i + h_{i,l} \overline{Z}^j_{l} \;\; \text{ for } i <f (i \neq l), \;\; W_i = W_i +\widetilde{v}_{l, i} \overline{Z}^j_{l} \;\; \text{ for } i \geq f.
\end{align*}
The initialization and updating $W_i$ after each iteration requires a total of $(2f-1) \frac{n}{p_r} \times (\frac{n}{p_c}+1) = (2f-1) \frac{n(n+p_c)}{P} $ flops. 
After this, the computation for $W_f, \ldots, W_{p_r-1}$ is complete, and we only need more computation for $W_0, \ldots W_{f-1}$ to obtain $ \sum_{t=f}^{p_r-1} \widetilde{v}_{i,t} \cdot  \overline{Z}^j_t $ for $i=0, \ldots, f-1$. This can be achieved by $f$ operations of linear-combination reduce from the last $p_r-f$ nodes to one of the first $f$ nodes, which incurs $f T_\text{lin-comb} (p_r-f, \frac{n(n+p_c)}{P})$.
% , which should have precomputed $\widetilde{v}_{i,t} \cdot  \overline{Z}_t$ incurring $\gamma \frac{n}{p_r} \cdot(\frac{n}{p_c}+1)$ , 
% to one of the first $f$ nodes, which is equivalent to MPI\_reduce($p_r-f+1$, $\frac{n}{p_r} \times (\frac{n}{p_c}+1)$). 
Hence, the total cost of post-processing is given by: 
\begin{align*}
    T_\text{post} = &f \cdot T_\text{broadcast}(p_r, \frac{n(n+p_c)}{P}) + f \cdot T_\text{reduce}(p_r-f+1, \frac{n(n+p_c)}{P}) \\
    &+
 \gamma  (2f-1) \frac{n(n+p_c)}{P}.
\end{align*}

3) Increased computation cost for QR decomposition of the encoded matrix: Using Observation \ref{overhead2}, we have $ T_\text{comp} \leq \frac{c}{n} T_\text{QR}$ where $ T_\text{comp}$ is the overhead for running QR decomposition on the larger $(m+c) \times n$ encoded matrix $\widetilde{A}$ instead of $m \times n$ matrix $A$. In our code construction, we use $c = \frac{fn}{p_r}$ checksums and thus obtain that:
\begin{align}
    T_\text{comp} &\leq \frac{fn/p_r}{n} T_\text{QR} = \frac{f}{p_r} T_\text{QR} \label{comp_cost1}.
\end{align}
\end{proof}

\begin{lemma}
\label{recovery_overhead} 
Under the out-of-node checksum storage, the coding strategy in Construction \ref{const:G_formula1} can achieve the following recovery overhead:
\begin{align}
    & T_\text{recov} = f \cdot T_{reduce}(p_r+1, \frac{n^2}{P}) + \gamma (f_1^2  + \frac{2}{3} f_1^3+ \frac{fn^2}{P} ).
\end{align}
\end{lemma}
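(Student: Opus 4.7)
The plan is to analyze the recovery procedure by decomposing it into the communication required to gather partial checksum contributions from surviving nodes and the local computation required to solve for the missing blocks. Since failure identities are known a priori and at most $f_1 \leq f$ processors fail in any given column of the grid, and since the vertical checksums tie together blocks within a column (Lemma~\ref{lem:2way_checksum}, with $\widetilde{G} = \widetilde{G} \otimes I_{n/p_r}$ acting block-wise), the recovery across all $p_c$ processor columns can be carried out fully in parallel. It therefore suffices to bound the per-column cost and observe that the parallel cost coincides with it.

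Within a column $j$, the recovery has two substages. First, in the syndrome-formation substage, for each of the $f$ checksum rows the surviving data processors in column $j$ compute their local contribution $\tilg_{i,t}A_{t,j}$ and combine them with $C_{i,j}$ via a linear-combination reduce to a designated recovery node. Each such reduce involves $p_r+1$ participants (the $p_r$ systematic processors plus the checksum processor that holds $C_{i,j}$), each contributing a block of size $\frac{n^2}{P}$, and requires $\gamma \frac{n^2}{P}$ local multiplications per processor to form its contribution. Accumulating over the $f$ checksum rows yields the communication cost $f\cdot T_{reduce}(p_r+1,\frac{n^2}{P})$ and a local computation cost of $\gamma \frac{fn^2}{P}$, mirroring the structure already used in $T_{enc}$ in Lemma~\ref{overhead0}.

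Second, in the solve substage, the recovery node holds the syndromes $s_i = C_{i,j} - \sum_{t\,\text{surviving}} \tilg_{i,t}A_{t,j}$, which together with the $f_1 \times f_1$ submatrix of $\widetilde{G}$ indexed by the failed row positions form a linear system for the missing blocks. By Theorem~\ref{thm:prob_det}, this submatrix is invertible with probability $1$, so one LU factorization of cost $\gamma \frac{2}{3}f_1^3$ followed by the associated triangular solves of cost $\gamma f_1^2$ recovers the failed blocks. Summing the two substages yields the claimed bound.

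The main obstacle is ensuring that the accounting conventions used here match those in Lemma~\ref{overhead0} for the encoding overhead: one must verify that $p_r+1$ is the correct participant count in the reduce pattern (with any failed contributors treated as providing zero and therefore not affecting the critical path), that pipelining the $p_c$ parallel column-recoveries does not introduce a hidden latency factor, and that the $\gamma(f_1^2 + \frac{2}{3}f_1^3)$ term correctly captures the amortized per-column cost of inverting the submatrix at block granularity.
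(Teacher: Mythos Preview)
Your decomposition into ``syndrome formation then local solve'' is different from the paper's, and it leaves two genuine gaps that prevent you from landing on the stated expression.

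First, you treat all $f$ reduces as syndrome computations $s_i = C_{i,j} - \sum_{t\in\mathcal{A}_\text{succ}} \tilg_{i,t}A_{t,j}$. But $f_2$ of the checksum processors have themselves failed, so the corresponding $C_{i,j}$ are unavailable and those syndromes cannot be formed. Only $f_1 = f - f_2$ syndromes are computable. Relatedly, you never re-encode the $f_2$ lost checksum blocks, which is required so that the scheme remains fault-tolerant in subsequent PBMGS iterations. The paper handles both points by splitting the $f$ reduces as $f_1 + f_2$: the first $f_1$ linear-combination reduces directly reconstruct the failed systematic blocks, and the remaining $f_2$ re-encode the lost checksums (exactly as in $T_{\text{enc}}$).

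Second, your $\gamma f_1^2$ for ``triangular solves'' is mis-scaled. In your pipeline the right-hand sides are blocks of size $\tfrac{n^2}{P}$, so back-substitution costs $\gamma f_1^2 \tfrac{n^2}{P}$, not $\gamma f_1^2$; this would not match the statement. The paper avoids this blow-up by a different ordering: it first inverts the $f_1\times f_1$ scalar submatrix $\widehat{G}$ (cost $\gamma\tfrac{2}{3}f_1^3$), then forms the scalar reduce coefficients $\sum_i \widehat{g}^{-1}_{t,i}\tilg_{t,l}$ (cost $\gamma f_1^2$), and only then performs the $f_1$ linear-combination reduces with those precomputed coefficients, so each reduce already delivers a fully recovered block at its destination. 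That is why $f_1^2$ and $\tfrac{2}{3}f_1^3$ appear as pure scalar terms in the final expression.
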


% \subsection{Proof of Lemma \ref{recovery_overhead}}
\begin{proof}
Before we begin our analysis we define a commonly-occurring operation, \textit{``linear-combination reduce''}. The goal of this operation is to compute the linear combination of the data blocks at $p$ nodes and then send it to a separate destination node. I.e., we compute $\alpha_1 D_1 + \cdots \alpha_p D_p$ for some $\alpha_i \in \bbR \;\; (i=1, \ldots, p)$ distributedly, and send it to the $(p+1)$-th node. This can be achieved through  MPI\_reduce($p+1$, $w$) operation by setting the data block at the destination node to be $\bm{0}$. The overhead of this operation when the size of the data blocks is $w$ is: $T_\text{lin-com}(p, w) = T_\text{reduce}(p+1, w) + \gamma w $ where $\gamma w$ comes from computing  $\alpha_i \cdot D_i$ for $i=1,\ldots, p$ in parallel. 

Recovery requires first decoding the lost data blocks from the failed systematic nodes and then recomputing the lost checksum blocks from the failed checksum nodes. Let $f_1$ and $f_2$ be the number of failed systematic nodes and failed checksum nodes. In the proof we only consider the case when $f = f_1 + f_2$ because the overhead when the number of failures is smaller than $f$ is upper bounded by the worst-case recovery cost of having $f$ failures. 

We partition the index set  $\{0, 1, ..., p_r-1\}= \mathcal{A}_\text{fail} \cup \mathcal{A}_\text{succ}$ , where $\mathcal{A}_\text{fail}$ denotes the indices of failed nodes among the systematic nodes of the $j$-th column, i.e., 
$$\mathcal{A}_\text{fail} = \{ t |  \Pi(t, j) \text{ is a failed systematic processor} \}$$
and $\mathcal{A}_\text{succ}$ denotes the indices of successful nodes among the systematic nodes of the $j$-th column. Similarly, we partition  $\{0, 1, ..., f-1\}= \mathcal{C}_\text{fail} \cup \mathcal{C}_\text{succ}$ to denote the failed and successful node indices within the checksum nodes on the $j$-th column. Note that $|\mathcal{A}_\text{fail}| = f_1$ and  $|\mathcal{C}_\text{fail} | = f_2$. 

First, \eqref{checksum_overhead1} can be rewritten as: 
\begin{align}
\label{recovery_eq1}
    C_{i,j}  - \sum_{t \in \mathcal{A}_{succ} } \tilg_{i,t} A_{t, j} &= \sum_{t \in \mathcal{A}_\text{fail} } \tilg_{i,t} A_{t, j} \forall i\in \mathcal{C}_\text{succ}
\end{align}
Let $\widehat{G} = [\tilg_{i, t}]_{i \in \mathcal{C}_\text{succ}, t \in \mathcal{A}_\text{fail} }$ bethe $f_1 \times f_1$ matrix. To decode $A_{t, j}$ for $t \in \mathcal{A}_\text{fail} $, we need to invert $\widehat{G}$. Note that $\widehat{G}$ is always invertible as any square submatrix of $\tilG$ is invertible. 
Denoting the $(i,j)$-th entry of $\widehat{G}^{-1}$ as $\widehat{g}^{-1}_{i,j}$, $A_{t, j}$ ($t \in \mathcal{A}_\text{fail}$) can be recovered from: 
\begin{align}
\nonumber 
    A_{t,j} &= \sum_{i \in \mathcal{C}_\text{succ}} \widehat{g}^{-1}_{t,i} (C_{i,j}  - \sum_{l \in \mathcal{A}_{succ} } \tilg_{t,l} A_{l, j} ) \\
    &= \sum_{i \in \mathcal{C}_\text{succ}} \widehat{g}^{-1}_{t,i} \cdot C_{i,j} - \sum_{l \in \mathcal{A}_{succ}}  \left(\sum_{i \in \mathcal{C}_\text{succ}} \widehat{g}^{-1}_{t,i} \cdot \tilg_{t,l} \right) \cdot A_{l, j}, \label{eq:A_recov}
\end{align}
which is a linear combination of $C_{i,j}$ ($i \in \mathcal{C}_\text{succ}$) and $A_{l, j}$ ($l \in \mathcal{A}_{succ}$). Computing the coefficients in \eqref{eq:A_recov} requires inverting $\widehat{G}$, which costs $\gamma \frac{2}{3} f_1^3$ and computing $\sum_{i \in \mathcal{C}_\text{succ}} \widehat{g}^{-1}_{t,i} \cdot \tilg_{t,l}$, which costs $\gamma f_1^2$. Once we have the coefficients, we can perform $f_1$ iterations of linear-combination reduce. The total cost of recovering the systematic data blocks is: 
\begin{equation}
    f_1 T_\text{lin-comb}(p_r, \frac{n^2}{P}) + \gamma(\frac{2}{3} f_1^3 + f_1^2) = f_1 \cdot T_\text{reduce}(p_r+1, \frac{n^2}{P})  + \gamma ( \frac{2}{3} f_1^3 + f_1^2 + \frac{f_1n^2}{P}).
\end{equation}

It is left to recover $f_2$ failed checksum blocks, so that the QR decomposition  is still fault-tolerant in the later iterations. This computation is the same as encoding the checksums (with the exception that we only have to compute a subset of checksums now). 
We thus defer readers to the analysis encoding overhead for more details, and only state the total cost of  this step, which is 
This requires: $f_2 T_\text{lin-comb}(p_r, \frac{n^2}{P}) = f_2 \cdot  T_{reduce}(p_r+1, \frac{n^2}{P}) + \gamma \frac{f_2 n^2}{P} $ .

%For recovering $f_2$ failed checksum blocks, we can now perform reduce operations from $p_r$ systematic nodes (including the recovered ones) to the  recovery checksum node. This is equivalent to $f_2$ operations of MPI\_reduce($p_r+1$, $\frac{n^2}{P}$). 

In total, the recovery overhead is thus: 
\begin{align*}
    T_\text{recov} =  f \cdot T_{reduce}(p_r+1, \frac{n^2}{P}) + \gamma (f_1^2  + \frac{2}{3} f_1^3+ \frac{fn^2}{P} ). 
\end{align*}
\end{proof}

\subsubsection{Proof of the main Theorem}

\begin{proof}
We have  the following bound of $T_{QR}$ from \eqref{QR_bound1}: 
\begin{align}
% \label{QR_bound1}
T_{QR} \geq  2\alpha n \log p_r +\beta \frac{1}{2}\frac{p_cn(n+1)}{P} + \gamma \frac{n^2(n+1)}{P}.
\end{align}
From Lemma \ref{overhead0}, we have:
\begin{align}
\nonumber
    T_\text{enc} &= f \cdot T_\text{reduce}(p_r+1, \frac{n^2}{P})  + \gamma \frac{fn^2}{P} + \gamma f (f+1) (p_r-f)\\
    &\leq 2 \alpha f \log (p_r+1) + 2 \beta \frac{fn(n+1)}{P} +\gamma  \frac{2 f n^2}{P} + \gamma f (f+1) (p_r-f) \label{enc_cost0}\\
    \nonumber
    &= (\frac{4f}{p_c} \cdot 2\alpha n \log p_r ) \frac{p_c}{n} \frac{\log(p_r+1)}{\log(p_r^4)} +(\frac{4f}{p_c} \cdot \beta \frac{1}{2}\frac{p_cn(n+1)}{P} ) \\
    \nonumber
    &\quad + (\frac{4f}{p_c} \cdot \gamma \frac{n^2(n+1)}{P}) \frac{p_c}{2(n+1)} +(\frac{f}{p_c} \cdot \gamma \frac{n^2(n+1)}{P}) \frac{(f+1) p_c (p_r-f)}{n^2 (n+1)}\\
    &\leq \frac{5 f}{p_c} T_{QR} \label{enc_cost1},
\end{align}
% From Theorem \ref{overhead0}, we have:
% \begin{align}
% \nonumber
%     T_\text{enc} &=  f \cdot T_\text{lin-comb}(p_r,\frac{n^2}{P}) + T_{allreduce} (p_r+f, f^2)+ \gamma f^2 \\
%     &\leq \big[ 2 \alpha f \log (p_r+1) + 2 \beta \frac{fn(n+1)}{P} +\gamma  \frac{2 f n^2}{P}  \big]+ \big[ 2\alpha \log(p_r+f) + 2\beta  f^2 +\gamma  f^2 \big]+ \gamma f^2\label{enc_cost0}\\
%     \nonumber
%     &= \bigg[ (\frac{4f}{p_c} \cdot 2\alpha n \log p_r ) \frac{p_c}{n} \frac{\log(p_r+1)}{\log(p_r^4)} +(\frac{4f}{p_c} \cdot \beta \frac{1}{2}\frac{p_cn(n+1)}{P} )+ (\frac{4f}{p_c} \cdot \gamma \frac{n^2(n+1)}{P}) \frac{p_c}{2(n+1)} \bigg] \\
%     \nonumber
%     &\quad + \bigg[ ( \frac{2f}{p_c} \cdot 2\alpha n \log p_r)  \frac{p_c}{n} \frac{\log(p_r+f)}{2 f \log(p_r)} + (\frac{2f}{p_c} \cdot \beta \frac{1}{2}\frac{p_cn(n+1)}{P} ) \frac{2f P}{n(n+1)} \\
%     \nonumber
%     &\quad + (\frac{2f}{p_c} \cdot \gamma \frac{n^2(n+1)}{P}) \frac{f p_c P}{n^2(n+1)} \bigg]\\
%     &\leq \frac{6f}{p_c} T_{QR} \label{enc_cost1},
% \end{align}
where for the last inequality we use $f \leq \frac{1}{2}\min\{p_r, p_c\}$ and $P = p_r p_c\leq n$.
% \begin{align}
% \nonumber
%     T_\text{enc} &= f \cdot T_\text{reduce}(p_r+1, \frac{n^2}{P})  + \gamma \big[ \frac{fn^2}{P} + f(p_r-f)(f+1) \big] \\
%     &\leq 2 \alpha f \log (p_r+1) + 2 \beta \frac{fn(n+1)}{P} +\gamma  \frac{2 f n^2}{P}  +  \gamma f(p_r-f)(f+1)  \label{enc_cost0}\\
%     \nonumber
%     &= (\frac{4f}{p_c} \cdot 2\alpha n \log p_r ) \frac{p_c}{n} \frac{\log(p_r+1)}{\log(p_r^4)} +(\frac{4f}{p_c} \cdot \beta \frac{1}{2}\frac{p_cn(n+1)}{P} )+ (\frac{4f}{p_c} \cdot \gamma \frac{n^2(n+1)}{P}) \frac{p_c}{2(n+1)} \\
%     &\quad + ( \frac{f}{p_c} \cdot \gamma \frac{n^2(n+1)}{P}) \frac{(p_r-f)p_c (f+1) P }{n^2 (n+1)}\\
%     &\leq \frac{4f}{p_c} T_{QR} \label{enc_cost1} \text{ (using $p_c \leq n$)}.
% \end{align}
From Lemma \ref{overhead0}, we have:
\begin{align}
\nonumber 
T_\text{post} &\leq f \cdot T_\text{broadcast}(p_r, \frac{2n^2}{P}) + f \cdot T_\text{reduce}(p_r, \frac{2n^2}{P}) +\gamma  \frac{4fn^2}{P} \\
\nonumber  
&\leq f (\sqrt{\alpha \log (p_r)} + \sqrt{\beta \frac{2n^2}{P}})^2  + 2 \alpha f \log (p_r) + 2 \beta \cdot\frac{2fn^2}{P} + \gamma  \frac{6fn^2}{P}\\ 
&\leq 4 \alpha f \log (p_r) + 8 \beta \cdot\frac{fn^2}{P}+ \gamma  \frac{6fn^2}{P}  \label{post_cost1}\\ 
\nonumber  
&\leq ( \frac{16f}{p_c} \cdot 2 \alpha  n \log (p_r)  ) \cdot \frac{p_c}{8n} +  \frac{16f}{p_c} \cdot \beta \frac{1}{2}\frac{p_cn(n+1)}{P} \\
\nonumber
&\quad    +(\frac{16f}{p_c} \cdot \gamma \frac{n^2(n+1)}{P}) \frac{3p_c}{8(n+1)})\\
&\leq  \frac{16f}{p_c} T_{QR} \label{post_cost2}   \text{ (using $p_c \leq n$)}.
\end{align}
Using \eqref{no_latency1}, we have:
\begin{align}
    T^\alpha_\text{comp} &= O(1) \label{comp_cost0}.
\end{align}
Combining the bounds \eqref{enc_cost1},  \eqref{post_cost2} and \eqref{comp_cost1}, we obtain that:
\begin{align}
\nonumber 
T_\text{coding} &= T_\text{enc}+ T_\text{post} +T_\text{comp} \leq (\frac{6f}{p_c}+\frac{16f}{p_c}  + \frac{f}{p_r})T_{QR} = (\frac{22f}{p_c} + \frac{f}{p_r})T_{QR}, \\
\nonumber 
\therefore T_\text{coding} &=  O\left(\frac{f}{p_r} + \frac{f}{p_c}\right) T_\text{QR}.
\end{align}
From Lemma \ref{overhead0}, we have:
\begin{align}
\nonumber 
T_\text{recov} &\leq  T_\text{reduce}(p_r, \frac{fn^2}{P})  + \gamma  [\frac{fn^2}{P}  + \frac{2}{3} \frac{fn^2}{P} + \frac{fn^2}{P} ] \\
\nonumber 
&\text{ (using $f_1^2 \leq f_1^3 \leq f^3 \leq f p_r p_c =fP\leq \frac{fn^2}{P}$)}\\
\nonumber 
&\leq 3 T_\text{enc} \leq \frac{12 f}{p_c} T_\text{QR}, \\
\nonumber 
\therefore T_\text{recov}&= O\left(  \frac{ f}{p_c}  \right) T_\text{QR}
\end{align}  
From \eqref{enc_cost0}, \eqref{post_cost1} and \eqref{comp_cost0} , we have:
\begin{align*}
    T^\alpha_\text{enc} &\leq 2\alpha f\log(p_r+1) + 2\alpha \log(p_r+f) = O(\alpha f\log(p_r)),\\
    T^\alpha_\text{post} &\leq  4\alpha f \log(p_r) = O(\alpha f\log(p_r)),\\
    % T^\alpha_\text{comp} &= O(1) \label{comp_cost0}\\ 
    T^\alpha_\text{coding} &=  T^\alpha_\text{enc} + T^\alpha_\text{post} +T^\alpha_\text{comp} 
    = O(\alpha f\log(p_r)) \\
    &= O\left(\frac{f}{n}\right) 2\alpha n \log(p_r) \leq  O\left(\frac{f}{n}\right) T^\alpha_\text{QR} \text{ (by \eqref{QR_bound1})},\\
    T^\alpha_\text{coding} &=  O\left(\frac{f}{n}\right) T^\alpha_\text{QR}. 
\end{align*}
\end{proof}

\end{document}

% --- supplement: ex_supplement.tex ---

\maketitle

\section{A detailed example}

Here we include some equations and theorem-like environments to show
how these are labeled in a supplement and can be referenced from the
main text.
Consider the following equation:
\begin{equation}
  \label{eq:suppa}
  a^2 + b^2 = c^2.
\end{equation}
You can also reference equations such as \cref{eq:matrices,eq:bb} 
from the main article in this supplement.

\lipsum[100-101]

\begin{theorem}
  An example theorem.
\end{theorem}

\lipsum[102]
 
\begin{lemma}
  An example lemma.
\end{lemma}

\lipsum[103-105]

Here is an example citation: \cite{KoMa14}.

\section[Proof of Thm]{Proof of \cref{thm:bigthm}}
\label{sec:proof}
\lipsum[106-112]

\section{Additional experimental results}
\Cref{tab:foo} shows additional
supporting evidence. 

\begin{table}[htbp]
{\footnotesize
  \caption{Example table}  \label{tab:foo}
\begin{center}
  \begin{tabular}{|c|c|c|} \hline
   Species & \bf Mean & \bf Std.~Dev. \\ \hline
    1 & 3.4 & 1.2 \\
    2 & 5.4 & 0.6 \\ \hline
  \end{tabular}
\end{center}
}
\end{table}

\bibliographystyle{siamplain}
\bibliography{references}